\documentclass[12pt, a4paper]{report}

\usepackage{titlesec}

\titleformat{\chapter}[block]
{\normalfont\fontsize{18}{22}\bfseries\MakeUppercase\centering}
{\thechapter.}
{1em}
{}
\titlespacing*{\chapter}{0pt}{-20pt}{40pt}

\titleformat{\section}
{\normalfont\fontsize{16}{20}\bfseries\MakeUppercase}
{\thesection}
{1em}
{}

\titleformat{\subsection}
{\normalfont\fontsize{14}{18}\bfseries}
{\thesubsection}
{1em}
{}

\titleformat{\subsubsection}
{\normalfont\fontsize{12}{16}\rmfamily}
{\thesubsubsection}
{1em}
{}

\usepackage[utf8]{inputenc}
\usepackage{geometry}
\usepackage{amsmath}
\usepackage{amssymb}
\usepackage{amsthm}

\usepackage{physics}
\usepackage{longtable}
\usepackage{booktabs} 
\usepackage{amsmath, amssymb, amsthm, braket, bbold}
\usepackage{graphicx}
\usepackage{xspace}

\usepackage[titles]{tocloft} 
\usepackage[caption=false]{subfig}
\usepackage{float}
\usepackage{placeins}

\usepackage{setspace}
\newcommand{\QSL}{\textrm{QSL}\xspace}
\newcommand{\MT}{\textrm{MT}\xspace}
\newcommand{\DL}{\textrm{DL}\xspace}

\usepackage[backend=bibtex, style=numeric-comp, sorting=none]{biblatex}

\begin{document}
	

	\thispagestyle{empty} 

	\begin{abstract}

	\noindent\textbf{Abstract of thesis entitled}\\[0.5em]
	\textit{From Fundamental Dynamics to Applied Cryptography:}\\
	\textit{Studies on the Quantum Speed Limit and Fully Passive Quantum Key Distribution}\\[0.75em]
	\noindent\textbf{submitted by} \textit{Jinjie Li}\\
	\noindent\textbf{for the degree of} \textit{Doctor of Philosophy}\\
	\noindent\textbf{at} \textit{The University of Hong Kong}\\
	\noindent\textbf{in} \textit{May 2026}\\[1.5em]
	This thesis studies two distinct frontiers of quantum information processing: the
	fundamental physical limits of dynamical evolution and the practical realization of secure
	quantum communication networks.
	
	In the area of applied cryptography, I address the critical challenge of implementation
	security in quantum key distribution (QKD). While, in principle, QKD offers
	unconditional security, real-world devices often contain imperfections that open
	``side-channels'' for eavesdroppers. To close these loopholes, I propose and analyze
	fully passive sources that eliminate the need for active optical modulators, thereby
	removing a major class of source-side vulnerabilities. First, I develop a fully passive
	measurement-device-independent (MDI) QKD protocol. I construct a complete
	security framework involving a novel decoy-state analysis, and I show that secure
	communication is achievable over about $140\,\mathrm{km}$ of standard optical fiber.
	Second, I extend this passive paradigm to the multi-user setting by introducing a fully
	passive conference key agreement (CKA) protocol based on twin-field interference.
	Using high-dimensional numerical integration and linear-programming techniques,
	I show that secure conference keys can be established among four users over channel
	losses up to roughly $28\,\mathrm{dB}$, corresponding to more than $130\,\mathrm{km}$ of standard
	optical fiber, demonstrating that the protocol performs well even in such high-loss
	regimes.

	In the area of fundamental dynamics, I study the quantum speed limit (QSL),
	which sets the ultimate bound on how fast a quantum system can evolve. I identify
	limitations in existing bounds regarding their tightness and universality. To resolve this,
	I derive a new family of QSLs based on representation-dependent weighted
	$\ell_{w}^{p}$-seminorms. This framework is universal, applicable to both closed and open
	system dynamics, as well as both time-dependent and time-independent evolutions, and
	extends beyond density matrices to general linear operators. I demonstrate that these
	new bounds are computationally efficient to optimize and provide good performance in
	various scenarios, including spontaneous emission and high-fidelity quantum gates.
		
	\end{abstract}

	
	\begin{titlepage}
		\thispagestyle{empty}
		\centering

		{\LARGE\textbf{From Fundamental Dynamics to Applied Cryptography:}}\\[0.08cm]
		{\LARGE\textbf{Studies on the Quantum Speed Limit and Fully Passive Quantum Key Distribution}}\\[2cm]
		
		{\large\textbf{by}}\\[2cm]
		{\Large\textbf{Jinjie Li}}\\[9cm]

	{\large A thesis submitted in partial fulfilment of the requirements\\
		for the degree of}\\[0.5cm]
	{Doctor of Philosophy}\\[0.5cm]
	{ at The University of Hong Kong}\\[1.5cm]
		
		
		{\large May 2026}
		
	\end{titlepage}

\cleardoublepage
\pagenumbering{roman} 
\setcounter{page}{1}  
\phantomsection       
\addcontentsline{toc}{chapter}{Declaration} 

\chapter*{Declaration}
\noindent I declare that this thesis represents my own work, except where due acknowledgement is made, and that it has not been previously included in a thesis, dissertation or report submitted to this University or to any other institution for a degree, diploma or other qualifications.

\vspace{2cm}

\noindent\textit{Signed:} \rule{6cm}{0.4pt}\\[1em]
\noindent\textit{Name:} Jinjie Li

\cleardoublepage
\phantomsection
\addcontentsline{toc}{chapter}{Acknowledgements} 

\chapter*{Acknowledgements}

This thesis would not have been possible without the immense support of my family over the last four years. The decision to move to Hong Kong was sudden and, frankly, I was doubtful at the beginning. Yet, my wife and my parents supported me doubtlessly, standing by me even as we navigated the uncertainties of the pandemic.

I must pay a special tribute to my wife. Supporting my move to a new city was a brave decision on her part, and her presence was my anchor. The challenges during these years of study were significant, and without my family’s constant encouragement, I could not have persevered.

I also wish to extend my sincere thanks to my supervisors, Prof. H.F. Chau and Prof. H.K. Lo. Beyond their academic guidance, they provided me with an invaluable working environment. I am deeply grateful for the flexibility and lack of rigid constraints they afforded me. Their patience and support, particularly while I dealt with health issues, were crucial.

Lastly, I want to thank my family for understanding my need to rest and explore new directions after graduation. While this may diverge from the traditional path, their support allows me to embrace this next chapter with confidence. I understand this is a vote that places faith in me.

	
	\cleardoublepage
	\phantomsection
	\addcontentsline{toc}{chapter}{Table of Contents} 
	\tableofcontents

	\clearpage
	\pagenumbering{arabic}
	\setcounter{page}{1}
	\chapter{INTRODUCTION}
	\label{ch:introduction}
	
\section{The Dawn of a New Information Paradigm}

Information science in the twentieth century, founded upon Shannon’s classical theory~\cite{Shannon1948} and powered by digital computation, transformed human society. For decades, technological progress followed Moore’s Law~\cite{Moore1965}, which predicted an exponential increase in transistor density. Yet as devices shrink toward the atomic scale, quantum effects, such as tunnelling, charge discreteness, and excessive heat density, signal the approaching end of this classical roadmap~\cite{Dennard1974,Pop2010,Keyes2005}.

In the 1980s, Feynman and others proposed that these very quantum phenomena could be harnessed rather than feared~\cite{Feynman1982,Manin1980,Benioff1980}. Their insight gave rise to \emph{Quantum Information Processing} (QIP): the use of quantum mechanics to store, process, and transmit information. QIP marks a conceptual shift from the deterministic logic of bits to the probabilistic, wave-like logic of \emph{qubits}, which can exist in coherent superpositions and share non-classical correlations known as entanglement~\cite{Horodecki2009}.

Over subsequent decades, this idea branched into several major subfields. Quantum computation focuses on using quantum states and quantum logic to
accelerate tasks such as factoring~\cite{Shor1994} and unstructured
search~\cite{Grover1996}, and includes the study of quantum algorithms and
quantum computational complexity~\cite{NielsenChuang,Aaronson2016}. Quantum communication uses quantum entanglement and quantum measurements
to perform information–processing tasks that classical networks cannot
achieve~\cite{Bennett1993,Briegel1998,Kimble2008}. Quantum sensing and quantum metrology use quantum coherence to improve
measurement precision beyond classical limits~\cite{Giovannetti04,Giovannetti2011}. Related developments in quantum foundations continue to deepen our
understanding of the physical principles underlying these capabilities,
for example through studies of nonlocality and Bell inequalities~\cite{Brunner2014RMP}. 
Taken together, these directions form the broader landscape of emerging
quantum technologies~\cite{Acin2018,NielsenChuang}.

Within this broad framework of QIP, this thesis focuses on two major research directions. The first is the practical realization of secure quantum communication, addressing the challenges of maintaining cryptographic integrity in real quantum networks. The second is the study of fundamental quantum dynamics, specifically the characterization of the ultimate physical speed limits that govern how fast quantum systems can evolve.

\section{Quantum Key Distribution: Security from Physics}

The first focus addresses secure communication in the quantum era. Contemporary public-key cryptosystems such as RSA rely on computational hardness assumptions~\cite{Rivest1978}. A sufficiently powerful quantum computer running Shor’s algorithm~\cite{Shor1994} could break them, threatening global data security. This motivates approaches that remain safe even against quantum adversaries.

\emph{Quantum key distribution} (QKD) offers such security by grounding
cryptography in the laws of physics rather than mathematics. Through the
transmission and measurement of quantum states, two remote users can establish
a shared random key whose secrecy is guaranteed by quantum principles such as
measurement disturbance and the no-cloning theorem, \emph{as long as the known
	laws of quantum physics remain valid}~\cite{Bennett1984,Ekert1991,Wootters1982}.

Unlike classical encryption schemes, QKD’s security is information-theoretic: no computational advance can undermine it.  

Over three decades of development, QKD has progressed from tabletop demonstrations~\cite{Bennett1984,Ekert1991,Gisin2002,Scarani2009,Lo2014,Diamanti2016,Xu2020review} to satellite links and metropolitan fibre networks~\cite{Liao2017,Chen2021}.

While the theoretical foundation of QKD provides a path toward information-theoretic security, a significant ``implementation-security gap" exists between idealized proofs and real-world hardware. Standard security proofs assume that Alice and Bob have perfect control over their quantum states, yet physical devices often exhibit imperfections that an eavesdropper, Eve, can exploit through ``side-channel" attacks.

A fundamental motivation for this research is the need to simultaneously address implementation vulnerabilities at both the source modulators and detection sides of QKD. While the Measurement-Device-Independent (MDI) framework eliminates detector-side vulnerabilities, such as blinding attacks, traditional sources still rely on active modulators susceptible to Trojan-horse attacks and pattern-dependent leakage. To neutralize these loopholes concurrently, I integrate the MDI framework with a fully passive source paradigm. Detailed in Chapter 3, the primary applied contribution of this thesis is the development and analysis of a fully passive MDI-QKD protocol that closes the implementation-security gap at both source modulator and detector ends of the system.

As communication moves toward scalable networks, securing groups of users is a vital requirement for the emerging quantum internet. In Chapter 4, I developed this work by introducing a fully passive conference key agreement (CKA) protocol. By leveraging twin-field QKD techniques, this protocol allows the key rate to scale with the square root of channel transmittance, significantly extending communication distance. Because the CKA design is intrinsically MDI, it removes detector-side vulnerabilities. Furthermore, the fully passive nature of the source ensures there are no loopholes at the source modulator for any participant, providing a secure protocol for multi-party communication. 

\section{Quantum Speed Limit: The Pace of Quantum Evolution}

The second focus explores a fundamental question in physics: how fast can a quantum system evolve? The \emph{Quantum Speed Limit} (QSL) provides a universal lower bound on the time required for a state to evolve into another, setting ultimate performance constraints on quantum technologies~\cite{Mandelstam45,Fleming73,Anandan90,Vaidman92,Uhlmann92,Pires2016,Deffner17a}. Whether describing the switching rate of a quantum gate~\cite{Giovannetti03,Giovannetti04}, the response time of a sensor~\cite{Pang2017,Deffner17a}, or the charging power of a quantum battery~\cite{Binder2015}, the QSL establishes the physical horizon of achievable speed.

While numerous historical QSL bounds exist, as reviewed in Chapter 5, many are specialized formulations developed by physicists to address specific physical systems. However, no existing bound simultaneously satisfies three critical criteria: high performance (tightness), broad universality across diverse physical systems, and computational simplicity. The motivation of this research is to develop such a framework, establishing its theoretical foundation while demonstrating its utility across varied physical scenarios. As demonstrated in Chapter 6, the proposed bound successfully integrates all three of these essential characteristics.  

While the mathematical foundations of QSL bounds are presented in the separate background chapter, this concept complements QKD. Where QKD ensures \emph{secure} transmission of quantum information, QSL bounds delineate the \emph{temporal limits} within which such secure operations can occur. Together they span two fundamental axes of quantum information science: reliability and speed.

\section{Thesis Overview}

This thesis develops two main themes: (i) practical quantum cryptographic protocols—including Quantum Key Distribution (QKD) and Quantum Conference Key Agreement (CKA)—and (ii) the fundamental limits of quantum dynamics through quantum speed limits (QSLs).

\begin{itemize}
	\item \textbf{Studies on QKD and CKA}
	\begin{itemize}
		\item \textbf{Chapter~2} introduces the theoretical background of QKD and related cryptographic primitives, covering core concepts, representative protocols, and key developments in practical implementations.
		\item \textbf{Chapter~3} presents the author’s work on fully passive measurement-device-independent QKD (passive MDI-QKD), including the protocol construction, security analysis, and numerical performance.  The main results of this chapter are based on the journal publication in Ref.~\cite{Li2024}.
		\item \textbf{Chapter~4} extends the fully passive paradigm to the multipartite setting and introduces a fully passive quantum conference key agreement protocol.  Its design, security analysis, and performance evaluation are based on the preprint in Ref.~\cite{Li2024CKA}.
	\end{itemize}
	
	\item \textbf{Studies on Quantum Speed Limits (QSLs)}
	\begin{itemize}
\item \textbf{Chapter~5} reviews the background of quantum speed limits, with emphasis on their geometric formulations, connections to resource-like quantities, and generalizations to time-dependent and open-system dynamics.
		\item \textbf{Chapter~6} develops a family of QSL bounds based on weighted $\ell^{p}_{w}$-(semi)norms that depend on the representation basis, and demonstrates their performance in several open system scenarios by comparison with existing bounds. The main results are based on the work reported in Ref.~\cite{Chau2025} and on a contributed talk at AQIS 2025~\cite{LiChau_AQIS2025}.
	\end{itemize}
\end{itemize}

The original contributions presented in Chapters~3, 4, and~6 have been disseminated in the form of peer-reviewed journal articles, arXiv preprints, and an international conference presentation~\cite{Li2024,Li2024CKA,Chau2025,LiChau_AQIS2025,trss-qhc8}. In this thesis, they are reorganized and extended to provide a unified and self-contained exposition.

	\chapter{QKD THEORETICAL BACKGROUND AND TOOLKIT}
	\label{chap:background}
	
Quantum key distribution (QKD) is a central application of quantum information
science. Its special strength is \emph{information-theoretic security}: it is secure
against adversaries even when they have unbounded computational and quantum
resources, \emph{provided that the known laws of quantum physics remain valid}
~\cite{Shannon1949,Maurer1993}. QKD bases its security directly on the laws of
quantum mechanics, in contrast to classical cryptography, which often relies on
unproven computational assumptions (for example, the hardness of factoring
~\cite{Rivest1978,Shor1994}).

The goal of QKD is not to send secret \emph{messages} directly. Instead, it
establishes a shared provable secret \emph{key} between two distant users,
traditionally called Alice (A) and Bob (B). Once this provably secure key is in
place, they can use conventional symmetric encryption schemes to protect their
actual messages, for example, the \emph{Advanced Encryption Standard (AES)}
~\cite{Daemen1999}, or the \emph{one-time pad (OTP)}, which is well known to
achieve information-theoretic security~\cite{Shannon1949,Vernam1926}.

As an example, consider a \textbf{one-time pad (OTP)}~\cite{Vernam1926,Shannon1949}. Each bit of the message is added (mod 2) to the corresponding shared key bit. If Alice wants to send the 4-bit message $M = 1011$ and the secret key is
$K = 0110$, she computes the ciphertext $C = M \oplus K = 1101$. Bob, holding
the same key $K$, computes $C \oplus K$ to recover the original message.
Because the shared key is uniformly random and as long as the message, the OTP
provides \emph{perfect secrecy}~\cite{Shannon1949}.

	\section{A Primer on Quantum Key Distribution}
	
	\subsection{The BB84 Protocol: A Narrative Overview}
	
	The main workflow of QKD can be clearly seen in the famous \textbf{BB84 protocol}, proposed by Bennett and Brassard in 1984~\cite{Bennett1984}. In this protocol, Alice encodes random bits into quantum states from two complementary bases, such as ${|H\rangle, |V\rangle}$ (horizontal/vertical) and ${|+\rangle, |-\rangle}$, where $|{\pm}\rangle = (|H\rangle \pm |V\rangle)/\sqrt{2}$. Bob, without knowing Alice’s choice, measures each photon he receives using a randomly chosen basis. Because their bases are not always the same, only part of the data can be kept. The process unfolds in several steps, following the general security framework developed by Bennett, Brassard, and Ekert~\cite{Bennett1984,Ekert1991,Gisin2002,Scarani2009,Lo2014}:
	
	\begin{enumerate}
		\item \textbf{Channels.}
		Alice and Bob are linked by two channels:
		\begin{itemize}
			\item a \textbf{quantum channel}, usually an optical fiber or a free-space link, for sending quantum states (single photons or weak coherent pulses) \cite{Townsend1993,Gisin2002,Lo2014};
			\item a \textbf{classical channel}, which is assumed to be  \emph{authenticated but public}~\cite{Bennett1984}.  
			Eve can listen to all the messages sent on it, but she cannot change them or pretend to be Alice or Bob. Authentication typically relies on a short pre-shared secret key and universal hashing~\cite{Wegman1981}.
		\end{itemize}
		
		\item \textbf{Quantum transmission.}
		In each round, Alice picks a random bit and a random basis ($Z$ or $X$).
		She encodes the bit as:
		\begin{itemize}
			\item $0 \mapsto |H\rangle$, $1 \mapsto |V\rangle$ in the $Z$ basis,
			\item $0 \mapsto |+\rangle$, $1 \mapsto |-\rangle$ in the $X$ basis.
		\end{itemize}
		She sends the prepared state through the quantum channel. Bob, unaware of her choice, randomly selects a basis ($Z$ or $X$) and measures the photon. Because of the \emph{no-cloning theorem}~\cite{Wootters1982}, Eve cannot make a perfect copy of the photon. If she measures in the wrong basis, she disturbs the state and introduces errors. Alice and Bob can detect these errors later as an increase in the Quantum Bit Error Rate (QBER)~\cite{Bennett1992}.
		
		\item \textbf{Sifting.}
		After transmission, Alice and Bob publicly announce which bases they used. They keep only the rounds where their bases match and discard the rest. This produces a correlated string of bits called the \emph{sifted key}~\cite{Bennett1984,Scarani2009}.
		
		\item \textbf{Parameter estimation.}
		To estimate the error rate, they reveal a random sample of their sifted key and compare the results. Those revealed bits are discarded. The observed error fraction is the \emph{Quantum Bit Error Rate (QBER)}. A high QBER means eavesdropping or too much noise, so the protocol stops~\cite{Bennett1992,ShorPreskill2000}. A low QBER means they can move on to extract a secure key.
		
		\item \textbf{Error correction.}
		Because of imperfections in the channel or devices, Alice’s and Bob’s sifted keys may differ.
		This step ensures they end up with the same raw key.
		They exchange limited classical information (like parity checks on blocks of bits) to correct errors without revealing too much to Eve. Protocols such as Cascade~\cite{Brassard1993} or LDPC codes~\cite{Gallager1962} can be used.
		The goal is always to reconcile their strings with minimal information leakage.
		
		\item \textbf{Privacy amplification.}
		Even after error correction, Eve might still know something about the key. To erase any remaining knowledge, Alice and Bob apply \emph{privacy amplification}~\cite{Bennett1988,Renner2008}. They use a pre-agreed \emph{hash function} to compress the raw key into a shorter one. By choosing the output length carefully, they ensure that the final key looks perfectly random and independent of Eve's knowledge.
		
\item \textbf{Key usage.}
The final shared key can now be used for symmetric cryptography. In theory, it could serve directly as a \emph{one-time pad (OTP)}, which offers perfect secrecy but requires a key as long as the message~\cite{Shannon1949}.
\end{enumerate}

In short, BB84 provides a method to share secret keys over insecure channels~\cite{Bennett1984,Ekert1991,Scarani2009,Lo2014}.
Modern security analyses show that the protocol is not only unconditionally secure but also \emph{composable}.\footnote{I briefly mention composability here for completeness. In modern QKD theory, security is typically defined in the \emph{universally composable} (UC) framework, which guarantees that the final key can be safely used in any subsequent cryptographic application; see Refs.~\cite{ShorPreskill2000,Renner2008}.}
This guarantees a built-in ``fail-safe'' behavior: either Alice and Bob share a key that is secure against any adversary, or the protocol aborts without producing a weak key.

The process described above captures the essence of the \textbf{BB84 protocol}.
Many later protocols are extensions or refinements of this idea, designed to handle imperfect devices, detector loopholes, or large-scale multi-user networks~\cite{Gisin2002,Lo2014,Curty2014,Xu2020review,Wang2023a}.

	\section{The Language of Quantum Information}
	
	\subsection{Quantum States: Qubits and Density Operators}
	
	The basic unit of information in QKD is the \textbf{qubit}, a two-level quantum system described by
	\begin{equation}
		|\psi\rangle = c_0 |0\rangle + c_1 |1\rangle, \qquad c_0, c_1 \in \mathbb{C}, \quad |c_0|^2 + |c_1|^2 = 1 .
	\end{equation}
	
	A qubit can also exist as a \textbf{statistical mixture} of states, which I represent using a \textbf{density operator}:
	\begin{equation}
		\rho = \sum_i p_i |\psi_i\rangle\langle \psi_i| ,
	\end{equation}
	where the probabilities $\{p_i\}$ describe classical uncertainty~\cite{vonNeumann1932,NielsenChuang}.
	
	Two bases are commonly used:
	\begin{itemize}
		\item the \emph{computational basis} $\{|0\rangle, |1\rangle\}$,
		\item the \emph{diagonal basis} $\{|+\rangle, |-\rangle\}$, where $|\pm\rangle = (|0\rangle \pm |1\rangle)/\sqrt{2}$.
	\end{itemize}
	
	Measuring a qubit in one basis disturbs it if it was prepared in the other.
	This property, called measurement disturbance, is the key to detecting eavesdropping in QKD~\cite{Bennett1984,Ekert1991}.
	
	\subsection{Physical Realizations: Polarization and Phase Encoding}
	\label{sec:physical-encodings}
	
	So far, qubits have been treated abstractly as $|\psi\rangle = c_0|0\rangle + c_1|1\rangle$.
	In \emph{discrete-variable (DV)} photonic QKD, which is the focus of this thesis, the two main physical encodings are \emph{polarization} and \emph{time-bin/phase} encoding~\cite{Gisin2002,Scarani2009,Lo2014}.
	Both methods map the logical basis $\{|0\rangle, |1\rangle\}$ onto optical modes that can be prepared, transmitted, and detected using single-photon detectors.\footnote{Other DV options include path or frequency-bin encodings. A different family, called continuous-variable (CV) QKD, encodes information in optical field quadratures, but this is outside my current scope~\cite{Weedbrook2012}.}
	
	\subsubsection*{Polarization encoding}
	
	In polarization encoding, the logical basis states are mapped to orthogonal linear polarizations:
	\begin{equation}
		|0\rangle \equiv |H\rangle, \qquad |1\rangle \equiv |V\rangle,
	\end{equation}
	and the complementary ($X$) basis is
	\begin{equation}
		|+\rangle = \tfrac{1}{\sqrt{2}}(|H\rangle + |V\rangle), \qquad
		|-\rangle = \tfrac{1}{\sqrt{2}}(|H\rangle - |V\rangle).
	\end{equation}
	
	State preparation uses standard polarization optics, and measurement is done with a polarizing beam splitter (PBS) and two single-photon detectors.
	Polarization encoding is conceptually simple and widely used in free-space and satellite QKD~\cite{Gisin2002,Scarani2009}. However, in long optical fibers, random birefringence can scramble polarization, in those cases, active polarization tracking or an alternative encoding is required~\cite{Gisin2002}.
	
	\subsubsection*{Time-bin / phase encoding}
	
	In time-bin or phase encoding, information is stored in the \emph{relative phase} between two separated optical pulses—one early, one late.
	This can be represented as
	\begin{equation}
		|0\rangle \equiv |1,0\rangle, \qquad |1\rangle \equiv |0,1\rangle,
	\end{equation}
	where $|1,0\rangle$ means one photon in the early bin and $|0,1\rangle$ means one photon in the late bin.
	A general superposition takes the form
	\begin{equation}
		|\psi\rangle = \cos\frac{\theta}{2}\,|1,0\rangle + e^{i\phi}\sin\frac{\theta}{2}\,|0,1\rangle,
	\end{equation}
	up to a physically irrelevant global phase. Here, $\theta$ and $\phi$ are the standard angular coordinates of the Bloch sphere, which is discussed in the next section~\cite{NielsenChuang}.
	
	Detection uses an unbalanced Mach–Zehnder interferometer that recombines the two pulses and measures phase-dependent interference~\cite{Brendel1999,Tittel2000}.
	Time-bin encoding works especially well in fiber, since it is largely immune to slow polarization drift~\cite{Gisin2002,Scarani2009}.
	In practice, systems use active phase stabilization (for example, pilot tones or reference pulses) to maintain interference visibility~\cite{Tittel2000,Stucki2002}.
	
	\paragraph{A note on phase \emph{encoding} vs.\ phase \emph{randomization}.}
	The relative phase between bins carries information in phase encoding.
	By contrast, security proofs often require \emph{global phase randomization} of weak coherent pulses to ensure Poissonian photon-number statistics for decoy-state analysis~\cite{Lo2005,Wang2005,Ma2008}.
	These two procedures serve different purposes and should not be confused.
	Both are present in practical systems~\cite{Scarani2009,Lo2014}.
	
	\subsubsection*{The Bloch Sphere as a Unifying Picture}
	
	Both polarization and phase encodings can be visualized using the Bloch sphere.
	Each pure qubit corresponds to a point on the sphere:
	\begin{equation}
		|\psi\rangle = \cos\frac{\theta}{2} |0\rangle + e^{i\phi}\sin\frac{\theta}{2} |1\rangle,
	\end{equation}
	where $(\theta, \phi)$ are spherical coordinates~\cite{Bloch1946,Fano1957,NielsenChuang}.
	On this sphere:
	\begin{itemize}
		\item the poles represent the basis states ($|0\rangle$, $|1\rangle$),
		\item the equator represents equal superpositions with different relative phases,
		\item orthogonal states are separated by $180^\circ$.
	\end{itemize}
	
	This geometric view is intuitive.
	Polarization states such as 
	$|H\rangle$, $|V\rangle$, $|D\rangle$, $|A\rangle$, $|R\rangle$, and $|L\rangle$
	appear as fixed points on the sphere, where
	\[
	|H\rangle \equiv |0\rangle, \quad
	|V\rangle \equiv |1\rangle, \quad
	|D/A\rangle = \tfrac{1}{\sqrt{2}}(|H\rangle \pm |V\rangle), \quad
	|R/L\rangle = \tfrac{1}{\sqrt{2}}(|H\rangle \pm i|V\rangle),
	\]
	consistent with the standard Jones and Bloch-sphere descriptions~\cite{NielsenChuang}.
	
	Phase encoding corresponds to rotations along the equator. The Bloch sphere is especially useful when analyzing protocols that use arbitrary qubit states or mix polarization and phase degrees of freedom~\cite{NielsenChuang}.
	
	\subsubsection*{Relevance for Fully Passive Sources}
	
	The difference between abstract qubits and their physical encodings is key to understanding the \emph{fully passive} approach. In such systems, quantum states are not actively created by modulators. Instead, they appear randomly from interference between phase-randomized laser pulses. Users monitor part of the output locally and \emph{post-select} which polarization or phase state was actually emitted into the channel~\cite{Curty2010,Zapatero2023,Wang2023a,Hu2023,Lu2023}.
	
	To follow later chapters, it is important to be comfortable translating between abstract qubit notation and these physical realizations in polarization and phase encoding.

	\section{Performance and Security of QKD Protocols}
	
	\subsection{Figures of Merit}
	
	The performance of a QKD protocol is usually described by two main quantities:
	the \emph{quantum bit error rate} (QBER) and the \emph{secret key rate}. Together, they show whether Alice and Bob can extract a secure key, and how efficiently it can be done~\cite{Scarani2009,Xu2020review}.
	
	\subsubsection{Quantum Bit Error Rate (QBER)}
	
	The QBER is defined as
	\begin{equation}
		E = \frac{\text{Number of erroneous sifted bits}}{\text{Total sifted bits}} .
	\end{equation}
	
	It represents the fraction of mismatched bits between Alice’s and Bob’s sifted keys. In a perfect, noiseless setup, $E = 0$. In reality, several factors make $E$ nonzero:
	\begin{itemize}
		\item \textbf{Channel noise:} photons can be depolarized or phase-shifted during transmission.
		\item \textbf{Detector imperfections:} limited efficiency, dark counts, and timing errors may cause false detections.
		\item \textbf{Eavesdropping:} when Eve interacts with quantum signals, her disturbance creates errors.
	\end{itemize}
	
	The QBER is thus a direct sign of both natural noise and possible eavesdropping. Each QKD protocol defines a \emph{threshold} QBER.
	If the observed QBER is higher than this limit, the protocol stops and no key is kept. For BB84, the theoretical upper limit is about $11\%$ in the asymptotic regime~\cite{ShorPreskill2000}.\footnote{In BB84, the asymptotic key rate scales as $1 - 2H_2(E)$. The rate becomes zero around $E \approx 11\%$, meaning no secure key can be distilled beyond that.}
	
	\subsubsection{Secret Key Rate}
	
	In the limit of infinitely many signals and under collective attacks, the Devetak--Winter bound provides a lower bound on the secret key rate per channel use~\cite{DevetakWinter2005,Renner2008}:
	\begin{equation}
		R \ge q \left[ 1 - f_{\text{EC}} H_2(E) - I_E \right],
		\label{eq:DW}
	\end{equation}
	where
	\begin{itemize}
		\item $q$ is the sifting factor (for BB84, $q=1/2$),
		\item $f_{\text{EC}} \ge 1$ is the inefficiency factor of error correction,\footnote{This factor quantifies the inefficiency of the practical error correction protocol (e.g., LDPC codes) used for reconciliation. It represents the ratio of information \emph{actually} leaked compared to the theoretical minimum required by the Shannon limit, $H_2(E)$. A perfect, unattainable protocol would have $f_{\text{EC}} = 1$. Common values assumed in analyses and achieved by practical systems are often in the range of $f_{\text{EC}} \approx 1.1$ to $1.2$. Importantly, although many key-rate analyses treat $f_{\text{EC}}$ as a constant for simplicity, it is in fact \emph{dependent on the QBER} because practical reconciliation codes (such as LDPC or Cascade) are optimized differently at different error levels. The specific value depends on the chosen code and its parameters, which are optimized for the channel conditions (i.e., the QBER $E$) and the target secret key rate $R$.}
		
\item $H_2(E) = -E \log_2 E - (1-E) \log_2(1-E)$ is the binary entropy of the QBER $E$.
		\item $I_E$ is an upper bound on Eve’s information per sifted bit.
	\end{itemize}
	
	The Devetak--Winter bound is derived within the modern entropic approach to
	QKD security, which uses tools such as the quantum leftover-hash lemma,
	smooth min-entropy, and one-way classical post-processing via privacy
	amplification~\cite{DevetakWinter2005,Renner2008}. These techniques relate Eve’s
	information to the conditional von Neumann entropy of Alice’s key given Eve’s
	quantum system. In the specific case of BB84, this general entropic framework
	can be connected to earlier entanglement-based proofs, such as the
	Shor--Preskill argument, which expresses Eve’s information in terms of the
	phase-error rate in the complementary basis. This connection explains why the
	Shor--Preskill picture is often used to interpret the Devetak--Winter
	expression in prepare-and-measure protocols.

	In BB84, security proofs such as Shor--Preskill~\cite{ShorPreskill2000} show that Eve’s knowledge is tied to the \emph{phase error rate} $e_p$ (the error rate in the $X$ basis).
	Security analysis typically relies on the assumption of a symmetric channel, where this unmeasurable phase error rate $e_p$ is equal to the single-photon bit error rate, $e_1$. Thus, Eve's information is $I_E = H_2(e_p) = H_2(e_1)$, and the goal is to find $e_1$.
	
	To understand the fundamental limits, it is useful to first consider an \textbf{idealized model} (as in Shor–Preskill) that assumes a \textbf{perfect single-photon source}. In this ideal case, all signals are single photons, so the \emph{measured} QBER, $E$, is identical to the single-photon QBER, $e_1$. The symmetry assumption ($e_p = e_1$) therefore implies $e_p = E$. Eve's information is thus given directly by the \emph{measured} QBER: $I_E = H_2(E)$.
	
	The bound then simplifies to the well-known form:
	\begin{equation}
		R \ge q \big[\, 1 - f_{\text{EC}} H_2(E) - H_2(E) \,\big].
	\end{equation}
	
	This form makes the $11\%$ threshold easy to see:
	once the combined costs of error correction ($f_{\text{EC}} H_2(E)$) and privacy amplification ($H_2(E)$) exceed the information content of the raw key, no secret key remains. This occurs when $1 - (f_{\text{EC}} + 1)H_2(E) \le 0$, which for a perfect protocol ($f_{\text{EC}}=1$) is around $E \approx 11\%$.
	
	(This analysis also highlights the central problem for \emph{practical} systems: if my source is not a perfect single-photon source, e.g., a WCP, the measured $E$ is \emph{not} equal to $e_1$, and this simple formula no longer holds. We must then use other methods, such as the decoy-state technique, to find the true value of $e_1$~\cite{Lo2005,Wang2005,Ma2008,Scarani2009,Xu2020review}.)
	
	\subsubsection{Finite-Size Effects}
	
	In practice, the number of exchanged quantum signals is always finite. This means that observed values, such as QBER, are only statistical estimates of the true quantities. To ensure security, Alice and Bob must include worst-case corrections. These are typically derived using the smooth entropy framework~\cite{Renner2008,Tomamichel2012}, which generalizes standard entropic measures to account for finite statistics and failure probabilities.
	
	A typical finite-size key-length bound is written as~\cite{Renner2008,Tomamichel2012,Curty2014,Lim2014}
\begin{equation}
	\ell \le n_{\text{sifted}} \left( 1 - I_E^{\text{upper}} \right) 
	- \text{leak}_{\text{EC}} 
	- \Delta_{\text{PE}} 
	- \Delta_{\text{PA}},
\end{equation}
	where
	\begin{itemize}
		\item $n_{\text{sifted}}$ is the number of sifted bits,
		\item $I_E^{\text{upper}}$ is an upper bound on Eve’s information per bit, estimated from parameter estimation with high confidence (using, for example, Chernoff or Azuma bounds),
		\item $\text{leak}_{\text{EC}}$ is the information disclosed during error correction,
		\item $\Delta_{\text{PE}}$ and $\Delta_{\text{PA}}$ are correction terms for parameter estimation and privacy amplification, which usually scale logarithmically with $n_{\text{sifted}}$.
	\end{itemize}
	
	The key idea is that finite-size effects always shorten the final key compared with the ideal asymptotic case. These penalties disappear as $n \to \infty$, but for practical key blocks they must be included to ensure composable security~\cite{Renner2008,Tomamichel2012,Scarani2009,Xu2020review}. A more detailed discussion on the smooth entropy method and its role in the security proof is provided later in Section~2.6.

\subsection{Principles of Quantum Security}

The security of QKD relies on two kinds of ingredients that play different conceptual roles: 
(1) fundamental physical laws of quantum mechanics that constrain an eavesdropper’s capabilities, and 
(2) cryptographic requirements that define what it means for the generated key to be secure.

\paragraph{Fundamental quantum-mechanical principles.}
Two physical properties of quantum states ensure that eavesdropping inevitably leaves detectable traces:
\begin{itemize}
	\item \textbf{No-cloning theorem:} it is impossible to create a perfect copy of an unknown quantum state. This prevents Eve from duplicating photons without interacting with them~\cite{Wootters1982,Dieks1982}.
	\item \textbf{Measurement disturbance:} any attempt by Eve to distinguish non-orthogonal states necessarily introduces errors, observable as an increase in the QBER~\cite{Bennett1984,Ekert1991}. 
\end{itemize}

\paragraph{Cryptographic requirement: composable security.}
Beyond these physical constraints, a QKD protocol must satisfy a modern cryptographic notion of security.  
A protocol is called $\varepsilon_{\mathrm{sec}}$-secure if, except with probability $\varepsilon_{\mathrm{sec}}$, the joint state of Alice, Bob, and Eve is indistinguishable from an ideal situation in which Alice and Bob share a uniformly random secret key.  
This \emph{composable} definition ensures that the key remains secure even when used as a component inside any larger cryptographic system~\cite{Renner2008,Tomamichel2012}.  

The protocol must also guarantee \textbf{correctness}: Alice and Bob’s final keys match with all but a small probability $\varepsilon_{\mathrm{cor}}$. Combining both parts gives the overall security parameter:
\begin{equation}
	\varepsilon_{\mathrm{QKD}} = \varepsilon_{\mathrm{cor}} + \varepsilon_{\mathrm{sec}}.
\end{equation}

This thesis adopts the universally composable (UC) security framework described above, which is standard in modern QKD analysis~\cite{Renner2008,Tomamichel2012,Scarani2009,Xu2020review}.

\section{From Theory to Practice: Addressing Imperfections}
Real-world QKD systems differ greatly from the clean, idealized models found in textbooks. Practical devices come with \textbf{imperfections}, and if these are not properly handled, they can weaken or even break security. Quantum channels, for example, are impacted by both \textbf{loss} (which reduces the sifted key rate) and \textbf{noise} (which increases the QBER). While textbook protocols like BB84 are designed to handle channel noise, practical systems must contend with both. Detectors have limited efficiency and may be exposed to side-channel attacks. Furthermore, ideal single-photon sources are difficult to build. Most systems instead use \textbf{weak coherent pulses (WCPs)}, which are common because they are economical and offer much higher photon rates than many single-photon sources, but these sometimes emit multiple photons, enabling photon-number-splitting attacks. Active modulators used for encoding can also leak information, allowing Trojan-horse attacks on the source.

It is worth noting that a different security paradigm, \textbf{Device-Independent (DI) QKD}, aims to remove \emph{all} device assumptions by using a loophole-free Bell test. However, its extreme experimental requirements, such as near-unity detection efficiency, make it currently impractical for long-distance, high-loss networks. This thesis therefore focuses on practical, device-dependent security models that address the specific, known vulnerabilities mentioned above.

These imperfections are not just small engineering flaws—they fundamentally change the security picture of QKD.  
Over the years, researchers have developed specific methods to counter each major vulnerability:
\begin{itemize}
	\item The \textbf{decoy-state method} secures weak-coherent sources by statistically identifying and isolating single-photon contributions.
	\item \textbf{Measurement-device-independent (MDI) QKD} removes all detector side-channels by moving measurements to an untrusted relay.
	\item \textbf{Fully passive protocols}, including those presented in this thesis, close \emph{source-modulator} side-channels by eliminating active modulators entirely from the signal path.
\end{itemize}

The following subsections introduce these methods in more detail. Together, they form the foundation of secure and practical QKD implementations used today.

\subsection{Practical Implementations and Their Vulnerabilities}
\label{sec:vulnerabilities}

Real-life devices often behave slightly differently from their mathematical models, normally called imperfections, and such differences can open loopholes for an eavesdropper, Eve, to exploit. The main vulnerabilities in QKD can be grouped by where they occur: the \emph{source}, the \emph{channel}, and the \emph{detector}.

\subsubsection{Source Vulnerabilities}

The source, usually operated by Alice (and also by Bob in MDI-QKD), is responsible for encoding information into quantum states. If the source is imperfect, it might leak information even before the signal enters the quantum channel.

\begin{itemize}
	\item \textbf{Weak Coherent Pulses and the Photon-Number-Splitting (PNS) Attack:}  
	True single-photon sources are difficult to build, so practical QKD systems use strongly attenuated lasers to produce weak coherent pulses (WCPs). The number of photons in each pulse follows a Poisson distribution: most pulses contain one photon, some have none, and a few contain two or more. In a \textbf{PNS attack}, Eve takes advantage of channel loss by splitting off one photon from every multi-photon pulse and sending the rest to Bob. She can then store her photon and measure it later to learn the bit value—without introducing noticeable errors. This makes naïve WCP-based QKD insecure unless additional countermeasures, such as the decoy-state method, are applied.
	
	\item \textbf{Modulator Side-Channels and the Trojan-Horse Attack:}  
	Active phase or polarization modulators can unintentionally leak information about the encoding process. In a \textbf{Trojan-horse attack}, Eve shines bright probe light into the transmitter. The small amount of light reflected back carries clues about the internal modulator settings (phase or polarization), allowing Eve to infer the transmitted bit. This type of attack motivates the use of \emph{fully passive} transmitters, which remove active modulators from the signal path entirely.
\end{itemize}

\subsubsection{Detector Vulnerabilities}

Single-photon detectors are another major target.  
Security proofs assume that each detector “click” represents a genuine quantum measurement, but in practice, detectors may respond differently under certain conditions—creating exploitable weaknesses.

\begin{itemize}
	\item \textbf{Detector Side-Channel Attacks:}  
	A well-known example is the \textbf{detector blinding attack}.  
	Eve shines carefully tuned bright light onto the detectors, forcing them into a classical operating mode. She can then control which detector clicks by adjusting the light’s timing and power, effectively determining Bob’s measurement outcomes. This class of attacks inspired \textbf{MDI-QKD}, which removes trust from the detection device by shifting measurement to an untrusted relay.
\end{itemize}

\paragraph{From vulnerabilities to protocol design.}
These real-world weaknesses have led to a set of defense tools:
\emph{decoy states} protect weak-coherent sources from PNS attacks,  
\emph{MDI-QKD} removes detector side-channels by moving measurement to an untrusted node,  
and the \emph{fully passive} approach eliminates source-modulator side-channels by removing active modulators. Together, these techniques form the basis for the \emph{fully passive MDI-QKD} and \emph{fully passive CKA} protocols developed in this thesis.

	\subsection{The Decoy-State Method}
	
	\paragraph{The big idea.}
	The main idea behind the decoy-state method is simple:
	\emph{randomize the intensity of each pulse so Eve cannot take advantage of multi-photon emissions.} By comparing detection results from different intensity settings, Alice and Bob can statistically isolate the single-photon contribution that guarantees security~\cite{Hwang2003,Lo2005,Wang2005,Ma2008,Scarani2009,Xu2020review}.
	
	\medskip
	In real QKD systems, ideal single-photon sources are rare.
	Instead, most use attenuated lasers to generate \textbf{weak coherent pulses (WCPs)} with an average photon number $\mu < 1$.
	When the \emph{global optical phase} of each pulse is randomized, the photon number follows a Poisson distribution:
	\begin{equation}
		P_\mu(n) = e^{-\mu}\,\frac{\mu^n}{n!}\,.
	\end{equation}
	Each $n$-photon component behaves as an $n$-photon \emph{Fock} state.
	However, occasional multi-photon pulses enable the photon-number-splitting (PNS) attack described earlier~\cite{Brassard2000,Lutkenhaus2000}.
	
	The \textbf{decoy-state method}, proposed by Hwang and later by Lo, Ma, and Chen, solves this issue by having Alice (and Bob, in MDI-QKD) randomly vary the intensity of their pulses~\cite{Hwang2003,Lo2005,Wang2005}.  
	They choose between several intensity levels—one for ``signal'' states and one or more for ``decoy'' states.
	Since Eve cannot tell, during transmission, whether a given pulse is a signal or a decoy, she must treat all pulses with the same photon number $n$ in the same way.
	This leads to the key \emph{decoy assumption}: the conditional detection probability (\emph{yield}) $Y_n$ and the error rate $e_n$ depend only on $n$ and the channel, not on the intensity setting.
	
	Formally, for a given intensity $\mu$, the observed detection rate (\emph{gain}) and error gain satisfy
	\begin{equation}
		Q_\mu = \sum_{n=0}^{\infty} P_\mu(n)\,Y_n,
		\qquad
		Q_\mu E_\mu = \sum_{n=0}^{\infty} P_\mu(n)\,Y_n\,e_n,
	\end{equation}
	where $Q_\mu$ is the overall detection probability and $E_\mu$ is the measured QBER for that intensity.
	By combining the data from multiple intensities, Alice and Bob can calculate upper and lower bounds on the single-photon parameters $Y_1$ and $e_1$ using analytical methods or linear programming~\cite{Ma2008,Scarani2009,Xu2020review}.  
	These quantities are exactly what enter the key-rate formulas and security proofs.
	
	\paragraph{Why it works.}
	Only single-photon signals offer true information-theoretic security against PNS attacks.
	By placing a lower bound on $Y_1$ and an upper bound on $e_1$, Alice and Bob confirm that enough of their detections come from single photons with an acceptably low error rate—even though the actual source is imperfect.
	
	\paragraph{Finite-size remark.}
	In experiments, the estimates of $Y_1$ and $e_1$ include statistical confidence intervals (for example, based on Chernoff or Azuma bounds).
	These intervals depend on the block size and chosen security parameters and directly affect the final secure key length~\cite{Tomamichel2012,Lim2014}.
	
	\medskip
	In short, the decoy-state method turns an otherwise insecure weak-coherent source into a practically secure one by
	\emph{forcing Eve to act consistently across different intensity settings} and extracting reliable single-photon statistics.
	The same principle applies broadly:
	in \emph{MDI-QKD}, both Alice and Bob use decoys to estimate single-photon contributions at the relay~\cite{Lo2012,Xu2020review};
	in \emph{twin-field QKD}, decoys help bound interference visibility~\cite{Lucamarini2018};
	and in \emph{conference key agreement (CKA)} and the \emph{fully passive} schemes presented in this thesis, the same reasoning holds—with either active or passive intensity randomization.
	
\paragraph{Transition.}
While decoy states protect imperfect sources, the \emph{detectors} still remain major points of vulnerability.
This motivates the next advance,\linebreak \emph{measurement-device-independent QKD}, which removes the need to trust the measurement apparatus.
	
\subsection[Measurement-Device-Independent QKD (MDI-QKD)]{Measurement-Device-Independent \\ QKD (MDI-QKD)}
	\label{sec:mdi-qkd}
	
	\paragraph{The big idea.}
	The main goal of MDI-QKD is to remove the need to trust the detectors.  
	In this setup, Alice and Bob each send quantum states to an untrusted middle node that performs an interference measurement.  
	Only the \emph{correlations} between their bits (for example, whether the bits are the same or different) are made public, while the actual bit values remain secret.  
	Security is therefore based on the trusted \emph{sources} at Alice and Bob’s ends, not on the measurement device~\cite{Lo2012,Braunstein2012}.
	
	In practice, real detectors can behave differently from ideal models and are vulnerable to side-channel attacks such as detector blinding~\cite{Makarov2006,Lydersen2010,Gerhardt2011}.  
	MDI-QKD solves this by outsourcing all measurements to an untrusted relay (often called Charlie).  
	Alice and Bob generate keys only from Charlie’s public announcements, making the scheme immune to detector-based attacks~\cite{Lo2012}.
	
	\subsubsection{The Bell-State Measurement in Practice}
	
	At the relay, Charlie interferes the optical pulses sent by Alice and Bob using a balanced beam splitter.  
	The two output ports are connected to single-photon detectors, which can distinguish certain polarization or time-bin modes.  
	Due to the limits of linear optics, not all four Bell states can be perfectly distinguished—only two can be identified unambiguously~\cite{Braunstein2012}. 
	
	Alice and Bob’s sifting process depends entirely on what Charlie announces.  
	Only two-photon coincidence detections are kept, since they indicate a successful Bell-state projection.  
	Events with no clicks or only one detector click carry no correlation information and are discarded.  
	The type of coincidence determines whether Alice’s and Bob’s bits are the same or different, which in turn defines the bit assignment rule, as shown below~\cite{Lo2012}.

\begin{table}[h!]
	\centering
	\small 
	\begin{tabular}{p{3.2cm} p{4.0cm} p{1.5cm} p{3.3cm}}
		\toprule
		\textbf{Charlie's \newline Announcement} & \textbf{Matching Basis \& \newline Implied Correlation} & \textbf{Sift} & \textbf{Bit Assignment Rule} \\
		\midrule
		Unsuccessful BSM \newline (e.g., not 2 clicks) & None & Discard & N/A \\
		\addlinespace
		$|\psi^{-}\rangle$ projection \newline (Singlet state) & \textbf{Both Bases (Z or X):} \newline Anti-correlated & \textbf{Keep} & Apply bit flip \newline (e.g., Bob flips). \\
		\addlinespace
		$|\psi^{+}\rangle$ projection \newline (Triplet state) & \textbf{Rectilinear (Z):} \newline Anti-correlated & \textbf{Keep} & Apply bit flip. \\
		\addlinespace
		$|\psi^{+}\rangle$ projection \newline (Triplet state) & \textbf{Diagonal (X):} \newline Correlated & \textbf{Keep} & No bit flip. \\
		\bottomrule
	\end{tabular}
	\caption{Sifting and bit assignment logic in MDI-QKD based on the relay's public announcements. Only successful two-photon coincidence detections corresponding to valid Bell states are used.}
	\label{tab:mdi_sifting}
\end{table}

	Through this procedure, Alice and Bob can establish a correlated raw key using only Charlie’s public messages, without ever revealing their own bit values~\cite{Lo2012}.
	
	\subsubsection{The Source of Security}
	
	The strength of MDI-QKD lies in one key principle:  
	Charlie’s public announcement reveals only the \emph{correlation} between Alice’s and Bob’s bits, not the bits themselves.  
	Even if Charlie is controlled by Eve, she gains no direct information about the key unless she abandons the proper Bell measurement and performs a different attack~\cite{Lo2012,Braunstein2012}.
	
	Consider such a case:  
	Eve replaces the relay’s interferometer with her own detectors and performs an intercept–resend attack.  
	Because she does not know which basis Alice and Bob used, her measurement will inevitably disturb the quantum states.  
	For example, if she measures an $X$-basis photon using $Z$-basis detectors, she collapses the state and must resend a new one, introducing a high chance of error.
	
	These disturbances show up in the measurable statistics.  
	During parameter estimation, Alice and Bob publicly compare a subset of their sifted keys.  
	If Eve has interfered, the Quantum Bit Error Rate (QBER) will rise far above what is expected from normal channel noise.  
	A low QBER, on the other hand, means Eve’s potential information is tightly bounded.  
	Thus, even if Eve builds and operates the entire relay, Alice and Bob can still extract a secure key—so long as their sources are well-characterized and protected (for instance, using the decoy-state method to counter PNS attacks)~\cite{Lo2012,Scarani2009,Xu2020review}.
	
	This measurement-device-independent model forms the basis for the \emph{fully passive MDI-QKD} protocol developed later in this thesis.  
	MDI-QKD is a cornerstone of this research, and its theoretical structure will be examined in more depth in the next chapter.
	
	\subsection{The ``Fully Passive'' Paradigm: Eliminating Modulator Side-Channels}
	\label{sec:fully-passive}
	
	While MDI-QKD successfully removes vulnerabilities at the detector, weaknesses can still appear at the source.  
	Traditional QKD transmitters rely on active optical modulators (for intensity, phase, or polarization) to prepare quantum states.  
	These modulators are especially exposed to \textbf{Trojan-horse attacks}, in which Eve injects bright light into the transmitter and analyzes the weak back-reflections to learn the encoding choices of Alice or Bob~\cite{Gisin2006,Jain2014}.  
	Other imperfections in modulation can also leak unintended patterns of information.  
	The \textbf{fully passive paradigm} is designed to eliminate this entire class of side-channels by removing active modulators from the signal path~\cite{Curty2010}.
	
	\subsubsection{Core Concept: Intrinsic Randomness and Post-Selection}
	
	In a fully passive transmitter, quantum states are not created through active control but instead emerge from \emph{intrinsic randomness} in optical interference.  
	For example, in the fully passive MDI-QKD scheme explored in this thesis, each user employs four independent phase-randomized lasers.  
	Their outputs interfere through a network of beam splitters (BS) and a polarizing beam splitter (PBS), producing optical pulses with random intensities and polarizations that naturally correspond to the BB84 basis states and decoy classes.  
	A fraction of the emitted light is monitored locally so that the user can identify which state was generated in each round.  
	Only those rounds where the monitored outcome falls within a desired set (for instance, an $H$, $V$, $D$, or $A$ state with acceptable intensity) are \emph{post-selected} and used in the protocol.
	
	\subsubsection{Security and Practical Significance}
	
	By replacing active modulation with passive generation and classical post-selection, the transmitter becomes free of \emph{modulator-based side-channels}.  
	Since no active component directly encodes the bit or basis choice, Eve cannot obtain this information by probing the modulators~\cite{Gisin2006,Jain2014,Curty2010}. In effect, the choices of basis and decoy intensity are determined \emph{after} transmission, based on the local monitoring results.
	
	It is worth noting that fully passive sources do not eliminate every possible side-channel, for example, imperfections in lasers or other optical components can still pose risks and must be handled with isolators and filters. However, the most significant and experimentally confirmed loopholes—Trojan-horse attacks on modulators—are completely avoided by design~\cite{Jain2014}.
	
	When combined with MDI-QKD, the fully passive paradigm removes the main side-channels at \emph{both} ends of the system: the source and the detector.  
	This provides a much stronger and more realistic security framework for practical QKD implementations~\cite{Lo2012,Curty2010}.  
	This thesis builds upon this concept and presents two specific realizations: \emph{fully passive MDI-QKD} and \emph{fully passive CKA}.
	
	The primary disadvantage of this security-by-design approach is a significant trade-off in performance, namely a \textbf{much lower key rate}.
	Because state generation is probabilistic, it relies heavily on post-selection, a large fraction of the generated signals, which do not randomly fall into the desired state or intensity categories, must be discarded.
	This reduced sifting efficiency, when compared to deterministic active modulation, is the main cost of eliminating modulator-based side-channels~\cite{Curty2010}.

	\section{Advanced QKD Protocols}
	
	\subsection{Twin-Field QKD (TF-QKD) and Single-Photon Interference}
	\label{sec:tfqkd}
	
	\paragraph{Motivation.}
	In standard point-to-point QKD systems with trusted detectors (and even in MDI-QKD), the secret key rate scales as $\mathcal{O}(\eta)$, where $\eta$ is the total channel transmittance. This scaling limits how far secure communication can go without using quantum repeaters~\cite{Pirandola2017PLOB}. Twin-Field QKD (TF-QKD) breaks this repeaterless rate–distance barrier at the \emph{architectural} level~\cite{Lucamarini2018}. It does so by using \emph{single-photon interference at an untrusted central node}, achieving an effective scaling of $\mathcal{O}(\sqrt{\eta})$ under realistic conditions—without requiring repeaters or quantum memories~\cite{Lucamarini2018}.
	
	\paragraph{Core idea (conceptual).}
	In TF-QKD, Alice and Bob each send phase-randomized weak coherent pulses to an \emph{untrusted} central station (Charlie). Charlie interferes the two incoming optical modes on a balanced beam splitter and records which detector clicks. 
	
	A \emph{single click} indicates a single-photon–effective event. Crucially, the \emph{relative phase} between Alice’s and Bob’s pulses determines which detector fires. Charlie only announces which detector clicked, and based on this public message, Alice and Bob can correlate their local phase or basis choices to form sifted bits~\cite{Lucamarini2018}.
	
	\paragraph{Why it helps.}
	Each successful detection round depends on \emph{single-photon} interference, and since each pulse travels only half the total distance, the secret key rate now scales as $\mathcal{O}(\sqrt{\eta})$ instead of $\mathcal{O}(\eta)$~\cite{Lucamarini2018}. The measurement device remains untrusted, as in MDI-QKD, yet the achievable distance and performance are significantly improved.
	
	\paragraph{Core mechanisms and security intuition.}
	In practice, Alice and Bob send phase-randomized weak coherent pulses to an untrusted relay, Charlie. Charlie interferes the two pulses on a balanced beam splitter and publicly announces which detector clicked. This announcement reveals no actual key information—only a classical label identifying which output port detected a photon.
	
	The security of TF-QKD relies on two main mechanisms.  
	First, the decoy-state method isolates the \emph{single-photon–effective} events within the weak coherent pulses, guaranteeing that only one quantum of information is exposed per successful detection~\cite{Hwang2003,Lo2005,Wang2005,Ma2008}.  
	Second, the interference at the untrusted relay effectively performs an \emph{entanglement-swapping test}: each single click projects Alice’s and Bob’s modes into correlated quantum states without revealing their key bits~\cite{Curty2019}.  
	Any attempt by Eve to interfere with this process disturbs the interference visibility and increases the observed QBER~\cite{Lucamarini2018,Curty2019}.  
	
	Together, these mechanisms ensure measurement-device independence and enable the unique $\mathcal{O}(\sqrt{\eta})$ rate scaling that distinguishes TF-QKD from earlier QKD protocols~\cite{Lucamarini2018,Curty2019}.
	
	\subsection{Conference Key Agreement (CKA)}
	\label{sec:cka-background}
	
	Traditional QKD creates secure keys between two users. In many real-world networks, however, a group of $N$ users needs to share a \emph{single common key} for secure conferencing or broadcast encryption. \textbf{Conference Key Agreement (CKA)} protocols extend QKD to this multi-user setting while maintaining information-theoretic security~\cite{Grasselli2019CKA,murta}.
	
	\paragraph{Architectural picture.}
	A natural setup for CKA is a \emph{star-shaped network}: each user sends optical signals to a (possibly untrusted) central node that performs interference-based measurements and publicly announces the detection results~\cite{Grasselli2019CKA}. Depending on the approach:
	\begin{itemize}
		\item \emph{Entanglement-based CKA} distributes multipartite entangled states (such as GHZ-like states) and measures locally at each node.
		\item \emph{Interference-based CKA} (inspired by TF-QKD) has all users send phase-randomized weak coherent pulses toward the center, where \emph{single-photon} detection events indicate successful rounds and define shared correlations~\cite{Grasselli2019CKA}. 
		This one is which this thesis uses, and the following discussions will be solely based on this type of CKA. 
	\end{itemize}
	
	\paragraph{Two key properties.}
	(i) The detection patterns kept at the relay effectively project the system onto a \emph{single-excitation} (W-like) subspace.  
	Correlations within this subspace are enough to generate a shared key among all users~\cite{Grasselli2019CKA}.\\
	(ii) When an untrusted relay interferes all users’ weak coherent pulses using a balanced beam-splitter network, the achievable key-rate scaling improves from the coincidence-limited $\mathcal{O}(\eta^{k})$ with $k>1$ (typical of naïve multi-user links) to $\mathcal{O}(\eta)$.
	
	\paragraph{Single-photon interference for many users.}
	Extending the idea of twin-field interference, multi-user CKA ensures that a \emph{single click} at the relay projects the joint state of all users onto a correlated subspace defined by their \emph{relative phases or bases}.  
	The relay’s announcement contains no key information—only which event or detector pattern occurred.  
	Each user then uses this public information, together with their own local settings, to reconcile an identical bit value~\cite{Grasselli2019CKA}.
	
	\paragraph{Protocol workflow (high level).}
	\begin{enumerate}
		\item \textbf{State preparation.} Each user chooses an intensity (for decoy states) and a phase or basis (for encoding), while globally randomizing the optical phase.
		\item \textbf{Transmission and detection.} All users send their pulses to the relay.  The relay publicly announces single-click or specific multi-detector events.
		\item \textbf{Sifting.} Users keep only those rounds whose combination of relay events and local settings implies a well-defined correlation, the rest are discarded.
		\item \textbf{Parameter estimation.} Using decoy-state statistics and revealed samples, the users estimate single-photon (or single-excitation) yields and error rates relevant to security~\cite{Grasselli2019CKA}.
		\item \textbf{Post-processing.} Multilateral error correction ensures all $N$ users share the same raw string.  Privacy amplification then removes any information held by Eve or the relay, producing the final \emph{conference key}.
	\end{enumerate}
	
	\paragraph{Main challenges.}
	\begin{itemize}
		\item \textbf{Scaling and sifting.} As the number of users $N$ increases, strict coincidence and basis-matching conditions reduce the sifting efficiency.  Protocols must be carefully designed to mitigate this loss.
		\item \textbf{Phase coherence.} TF-inspired CKA requires stable phase relations across many users.  In practice, reference pulses and digital phase-tracking techniques are used to maintain synchronization.
		\item \textbf{Security analysis.} Extending decoy-state and entropy-based proofs to multiple users involves handling multi-party correlations and finite-size effects with care~\cite{Grasselli2019CKA}.
	\end{itemize}
	
	\emph{Benchmarking.}  
	Recent studies compare CKA protocols against “single-message multicast” network bounds (without a relay) and show that interference-based MDI-CKA can exceed these performance limits under realistic conditions.

	\paragraph{Decoy states and estimation.}
	As in BB84, MDI, and TF-QKD, multi-intensity decoy-state analysis is used\footnote{The standard decoy-state method relies on the ``decoy assumption’’: namely, that the choice of intensities is generated by truly random, independent, and private randomness, and that pulses prepared with different intensities are otherwise indistinguishable except for their photon-number statistics~\cite{Hwang2003,Lo2005,Wang2005,Ma2008}.  
		If perfect random numbers are used, this assumption is valid.  
		However, in practice, imperfect or biased randomness may introduce correlations between signal and decoy settings, potentially leaking side information to an adversary.  
		This sensitivity to randomness is one reason why fully passive source designs, such as those developed in this thesis, are attractive, since their decoy statistics emerge naturally from post-selection rather than active modulation.}
	to estimate the \emph{single-photon–effective} contributions that ensure secrecy.
	
	In interference-based CKA, these quantities are typically expressed as \emph{yields} and \emph{phase/error} parameters conditioned on the relay’s detection patterns~\cite{Grasselli2019CKA}.
	
	\paragraph{Bridge to this thesis.}
	This family of interference-based, TF-inspired protocols provides the theoretical foundation for the \emph{fully passive CKA scheme} introduced in Chapter~\ref{chap:passive-cka}. There, I integrate single-photon interference with \emph{fully passive sources} (no active modulators) and a tailored decoy/estimation framework compatible with passive, post-selected state preparation~\cite{Grasselli2019CKA}. The result is a protocol that removes both detector vulnerabilities at the relay and modulator side-channels at the users, aligning with the implementation-security goals outlined in Sec.~\ref{sec:fully-passive}.
	
\section{A Sketch of a QKD Security \\ Proof}

While full QKD security proofs are mathematically involved, the main logic can be
outlined in a few key steps. These steps show how Alice and Bob can estimate,
bound, and finally remove any information that Eve might have gained
~\cite{Renner2008,Scarani2009,Tomamichel2012}.

\begin{enumerate}
	\item \textbf{Parameter estimation:}
	Alice and Bob use the observed data (for example, from decoy-state analysis)
	to estimate the single-photon contributions and corresponding error rates.
	This gives them quantitative bounds on how much of their raw key comes from
	secure single-photon events~\cite{Scarani2009,Tomamichel2012}.
	
	\item \textbf{Entropy-based security bounds:}
	Using uncertainty relations or entropic arguments, they determine how much
	information Eve could possibly have about the raw key. In the finite-size
	setting, this is expressed using the \emph{smooth min-entropy}
	$H_{\min}^{\varepsilon}(A|E)$, which quantifies how unpredictable Alice’s raw
	key $A$ is from Eve’s perspective~\cite{Renner2008,Tomamichel2012}. A central
	tool is the quantum leftover-hash lemma:
	\begin{equation}
		\ell \le H_{\min}^{\varepsilon}(A|E) - 2\log\!\frac{1}{2\varepsilon_{\mathrm{PA}}},
	\end{equation}
	which states that after privacy amplification the final key of length
	$\ell$ is $\varepsilon_{\mathrm{PA}}$-secure. Intuitively, the smooth
	min-entropy measures how many ``almost uniform'' bits can be extracted from
	the raw key using universal hashing.
	
	\item \textbf{Error correction and privacy amplification:}
	They then perform classical post-processing to reconcile their bit strings
	and remove any remaining information that Eve might hold. After this step,
	they obtain a final key satisfying composable security
	~\cite{Renner2008,Tomamichel2012}.
\end{enumerate}

\paragraph{Finite-size effects.}
In practical implementations, Alice and Bob exchange only a finite number of
signals, so the observed QBERs, gains, and decoy statistics fluctuate around
their true values. Modern analyses therefore use statistical bounds (such as
Chernoff or Hoeffding bounds) to obtain worst-case estimates. A typical
finite-size key-length formula is
\begin{equation}
	\ell \le n_{\mathrm{sifted}}
	\left(\,1 - I_E^{\mathrm{upper}} \right)
	- \mathrm{leak}_{\mathrm{EC}}
	- \Delta_{\mathrm{PE}}
	- \Delta_{\mathrm{PA}},
\end{equation}
where the correction terms $\Delta_{\mathrm{PE}}$ and $\Delta_{\mathrm{PA}}$
scale logarithmically with the block size~\cite{Tomamichel2012,Lim2014}.

In most analyses, the proof first considers \emph{collective attacks}, where Eve
interacts with each signal independently but may store her quantum systems for
joint measurement later. Security against full \emph{coherent attacks} (where
Eve acts jointly on all signals) then follows from symmetry arguments and
de-Finetti–type reductions~\cite{Renner2008}.

In multi-user settings, the same logic applies, but with additional care in
tracking how information may be shared among several legitimate users and
Eve. These generalizations ensure that even in complex network
scenarios, the generated key remains provably secure under the composable
security framework.


	\chapter{FULLY PASSIVE MEASUREMENT-DEVICE-INDEPENDENT QKD}
	\label{chap:passive-mdi}

\section*{Statement of Contribution}

This chapter is based on the work published in Ref.~\cite{Li2024}:
\begin{quote}
	J. Li, W. Wang, and H.-K. Lo, ``Fully passive measurement-device-independent quantum key distribution,'' \textit{Physical Review Applied}, vol. 21, p. 064056, 2024.
\end{quote}

This work was a collaboration between the author, Dr.~Wenyuan Wang, and Prof.~Hoi-Kwong Lo. The authors jointly proposed the fundamental concept of integrating fully passive sources into the MDI-QKD architecture to simultaneously eliminate detector and modulator side-channels.

The author's contributions to this chapter are as follows:

The author extended the concept of a fully passive source to the MDI-QKD framework and developed a detailed theoretical model. This modeling work included deriving a new channel model for arbitrary polarization states and proving the validity of standard decoy-state analysis for the mixed single-photon ensembles produced by passive sources in the MDI-QKD setting.

Regarding numerical evaluation, the author performed extensive simulations comparing the passive MDI-QKD protocol with its active counterpart. To improve the key-rate performance, the author implemented a ``small-ring'' decomposition optimization technique based on a concept proposed by Dr. Wang.

Finally, the author developed the necessary C++ program for these evaluations, utilizing the \texttt{CUBA} package for multi-dimensional integration and the Gurobi package for the linear programming involved in the decoy-state analysis.


\section[Measurement-Device-Independent Quantum Key Distribution (MDI-QKD)]{Measurement-Device-\\Independent Quantum Key Distribution (MDI-QKD)}
	\label{sec:mdi-qkd}
	
	In real QKD systems, detector side-channels are among the most serious security risks. Physical detectors are never ideal—they can be tricked or biased through attacks such as detector blinding, time-shifting, or exploiting detection efficiency mismatches~\cite{Lydersen2010,Makarov2006}. Even a protocol that is theoretically secure can be compromised in practice if these vulnerabilities are not addressed.  
	
	To solve this problem, Lo, Curty, and Qi proposed \emph{Measurement-Device-Independent Quantum Key Distribution} (MDI-QKD) in 2012~\cite{Lo2012}. This protocol was designed to remove \textbf{all} detector side-channels without requiring the extremely demanding conditions of fully device-independent QKD.
	
	The main idea is straightforward: move the measurement process to a central, possibly \textbf{untrusted relay}, often called Charlie. Alice and Bob, the legitimate users, each send quantum states to Charlie. He performs a Bell-state measurement (BSM) and publicly announces the result. The key’s security comes entirely from the correlations between Alice’s and Bob’s prepared states, conditioned on Charlie’s announcements. Even if Charlie is malicious and controls the detectors, he cannot learn the key without creating detectable statistical changes in the data~\cite{Lo2012}.
	
	\begin{figure}[t]
		\centering
		\includegraphics[width=0.65\linewidth]{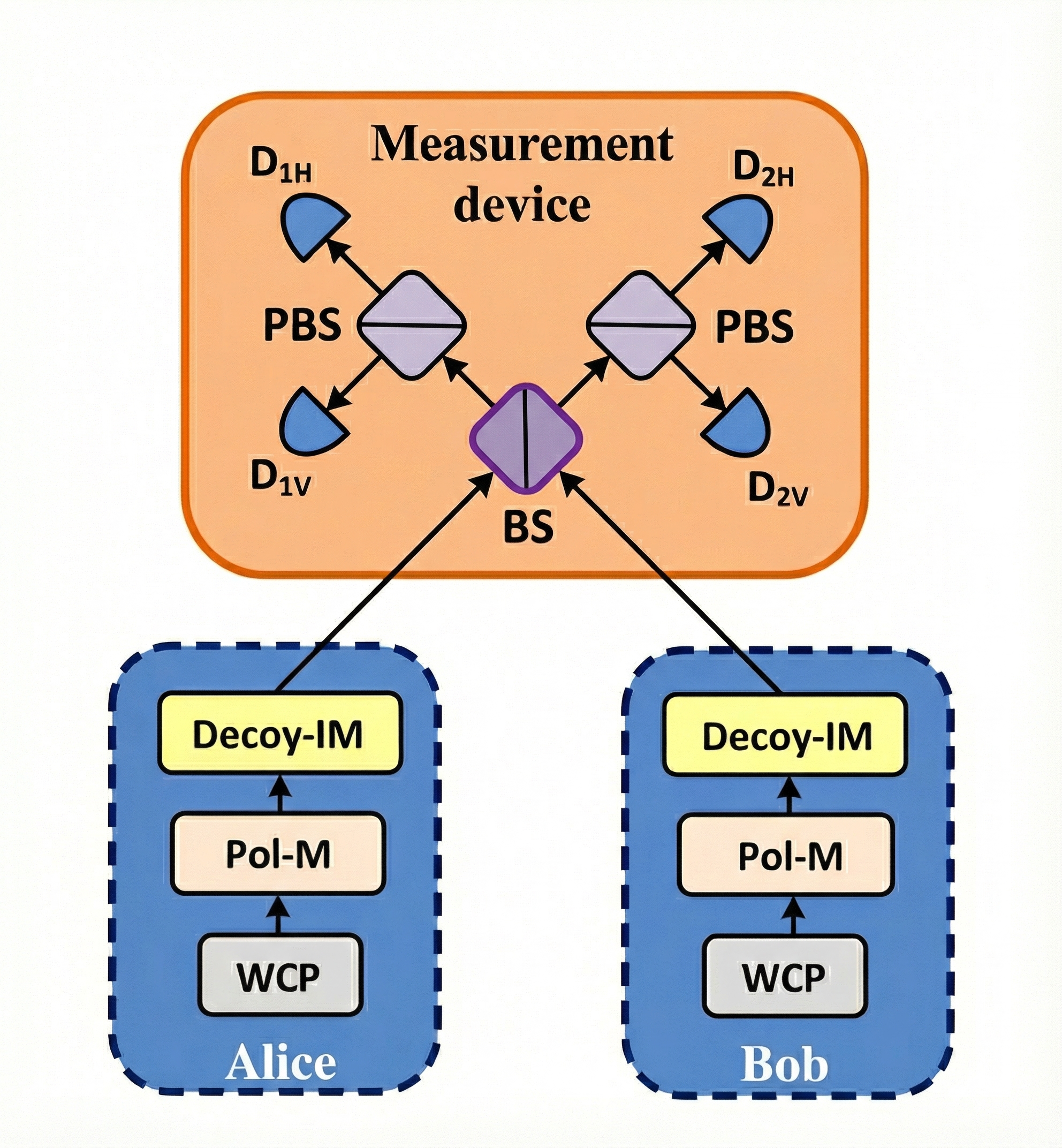}
		\caption{
			Schematic illustration of a measurement-device-independent
			quantum key distribution (MDI-QKD) implementation.
			Alice and Bob prepare phase-randomized weak coherent pulses (WCPs) in
			randomly chosen BB84 polarization states, created using a polarization
			modulator (Pol-M) and intensity settings controlled by a decoy-state
			modulator (Decoy-IM). Their optical signals are directed to an
			untrusted central relay, where they interfere on a 50:50 beam splitter.
			Each output arm is analyzed by a polarizing beam splitter (PBS) that
			resolves horizontal ($H$) and vertical ($V$) polarizations, and the
			resulting photons are detected by four single-photon detectors.  
			Coincidence events associated with orthogonal polarizations identify
			successful Bell-state measurements, enabling secure key generation even
			when the measurement device is untrusted.  
			\textit{Adapted from} Phys.\ Rev.\ Lett.\ \textbf{108}, 130503 (2012),
			DOI:\,\textit{https://doi.org/10.1103/PhysRevLett.108.130503}
			{10.1103/PhysRevLett.108.130503}.
		}
		\label{fig:mdi-qkd-setup}
	\end{figure}

	The protocol works as follows (see Fig.~\ref{fig:mdi-qkd-setup}).
	Alice and Bob each prepare phase-randomized weak coherent pulses (WCPs). In each round, they independently and randomly:
	\begin{enumerate}
		\item Choose a basis, either the rectilinear ($Z$) basis, $\{|H\rangle, |V\rangle\}$, or the diagonal ($X$) basis, $\{|+\rangle, |-\rangle\}$.
		\item Encode a bit in that basis (for example, in the $Z$ basis, $0 \to |H\rangle$ and $1 \to |V\rangle$).
		\item Select an intensity for the WCP from a set of decoy states, such as $\{\mu, \nu, \omega\}$, with $\mu$ as the signal intensity~\cite{Lo2005,Wang2005,Hwang2003}.  
		The photon number $n$ in a pulse of intensity $\mu$ follows a Poisson distribution:
		\[
		P_\mu(n) = e^{-\mu}\frac{\mu^n}{n!}.
		\]
	\end{enumerate}
	
	Both users send their pulses to Charlie, who sits between them.  
	At his station, Charlie interferes the two optical modes on a beam splitter and performs a Bell-state measurement. With linear optics, only two of the four Bell states can be distinguished, for instance, the singlet state 
	$|\psi^-\rangle = \frac{1}{\sqrt{2}}(|HV\rangle - |VH\rangle)$ and the triplet state 
	$|\psi^+\rangle = \frac{1}{\sqrt{2}}(|HV\rangle + |VH\rangle)$, identified by specific two-detector coincidence clicks~\cite{Lo2012}. Charlie then announces which Bell state was detected and at what time.  
	
	Alice and Bob keep only the rounds in which: 1) Charlie reported a successful BSM, and 2) they used the same basis. Depending on the basis and Charlie’s announcement, one of them (usually Bob) flips his bit to ensure their strings are correlated. The raw key is formed from these successful, matching events.
	
	The security of MDI-QKD can be understood in the \textbf{time-reversed entanglement picture}.  Imagine that Alice and Bob each create an entangled pair and send one photon from each pair to Charlie while keeping the other locally. When Charlie performs a successful BSM, this acts as an entanglement swap, projecting their local photons into an entangled state. From this view, MDI-QKD is equivalent to an entanglement-based BB84 protocol where the source of entanglement is untrusted but the local measurements are fully secure~\cite{Lo2012}. Because detectors lie outside the trusted regions, any side-channel attacks on them are irrelevant. The only assumption is that Alice’s and Bob’s state preparation devices are secure and well-characterized, which is a condition supported by the decoy-state method, which protects against multi-photon emissions from WCP sources~\cite{Lo2005,Wang2005}.
	
	To extract a secure key, Alice and Bob analyze their observed statistics. For each basis $b \in \{Z, X\}$ and each pair of intensities $(\mu_A, \mu_B)$, they measure the overall gain $Q^{(b)}_{\mu_A, \mu_B}$ (the probability of a successful BSM) and the quantum bit error rate (QBER) $E^{(b)}_{\mu_A, \mu_B}$. These observables are linked to the photon-number-resolved yields $Y^{(b)}_{nm}$ and error rates $e^{(b)}_{nm}$ through:
	\begin{align}
		Q^{(b)}_{\mu_A,\mu_B}
		&= \sum_{n,m=0}^{\infty} P_{\mu_A}(n)P_{\mu_B}(m)\,Y^{(b)}_{nm}, \label{eq:mdi-gain}\\[4pt]
		Q^{(b)}_{\mu_A,\mu_B}E^{(b)}_{\mu_A,\mu_B}
		&= \sum_{n,m=0}^{\infty} P_{\mu_A}(n)P_{\mu_B}(m)\,Y^{(b)}_{nm}e^{(b)}_{nm}. \label{eq:mdi-qber}
	\end{align}
	
	Here, $Y^{(b)}_{nm}$ is the probability that a successful BSM occurs when Alice sends $n$ photons and Bob sends $m$ photons in basis $b$. By comparing data from multiple decoy intensities, Alice and Bob can bound the single-photon yield $Y^{(Z)}_{11}$ and the single-photon phase-error rate $e^{(X)}_{11}$~\cite{Xu2013}. The $Z$ basis provides the raw key, while the $X$ basis is used for parameter estimation to limit Eve’s possible information.
	
	The asymptotic key rate per pulse pair is then:
	\begin{equation}
		R = Q^{(Z)}_{11}\!\left[1-H_2\!\big(e^{(X)}_{11}\big)\right]
		- Q^{(Z)}_{\text{obs}} f_\text{EC}  H_2\!\big(E^{(Z)}_{\text{obs}}\big),
		\label{eq:mdi-keyrate}
	\end{equation}
	where $Q^{(Z)}_{11} = P_\mu(1)P_\mu(1)Y^{(Z)}_{11}$ is the single-photon gain in the $Z$ basis, and $e^{(X)}_{11}$ is its corresponding error rate in the $X$ basis, both obtained from decoy analysis. The binary Shannon entropy is $H_2(x) = -x\log_2(x) - (1-x)\log_2(1-x)$. The first term in Eq.~\eqref{eq:mdi-keyrate} represents the net generation of secure bits after privacy amplification, while the second term quantifies the cost of error correction, where $f_\text{EC} \ge 1$ accounts for practical inefficiency~\cite{Lo2012}.
	\medskip
	
	It is natural to extend MDI-QKD by combining it with passive QKD ideas, so that both detector and source side-channels are eliminated simultaneously. This motivation leads directly to the development of the \emph{fully passive MDI-QKD} protocol, which forms a key contribution of this thesis~\cite{Li2024}.
	
	\section{Motivation and Overview}
	\label{sec:overview}
	
	MDI-QKD has become a key step toward building secure and practical QKD networks. By design, MDI-QKD eliminates all detector side-channels, closing one of the most critical loopholes in real implementations~\cite{Lo2012}. However, most current systems remain \emph{active} on the transmitter side, relying on optical modulators to prepare quantum states. These modulators are potential weak points: they can leak information through side-channels such as Trojan-horse probes or pattern-dependent emissions, which compromise the security of the source~\cite{Gisin2006,Yoshino2018}.
	
	To close this remaining loophole, I introduce and analyze a \emph{fully passive} MDI-QKD protocol~\cite{Li2024}. In this design, both Alice and Bob generate their quantum states through passive optical interference, without using any active modulators. As a result, the scheme simultaneously removes vulnerabilities on both the detector and transmitter sides, offering a significantly higher level of implementation security.
	
	\subsection{Fully Passive Sources and Detection}
	\label{subsec:passivesource}

	\begin{figure}[H]
		\centering
		\includegraphics[width=0.78\linewidth]{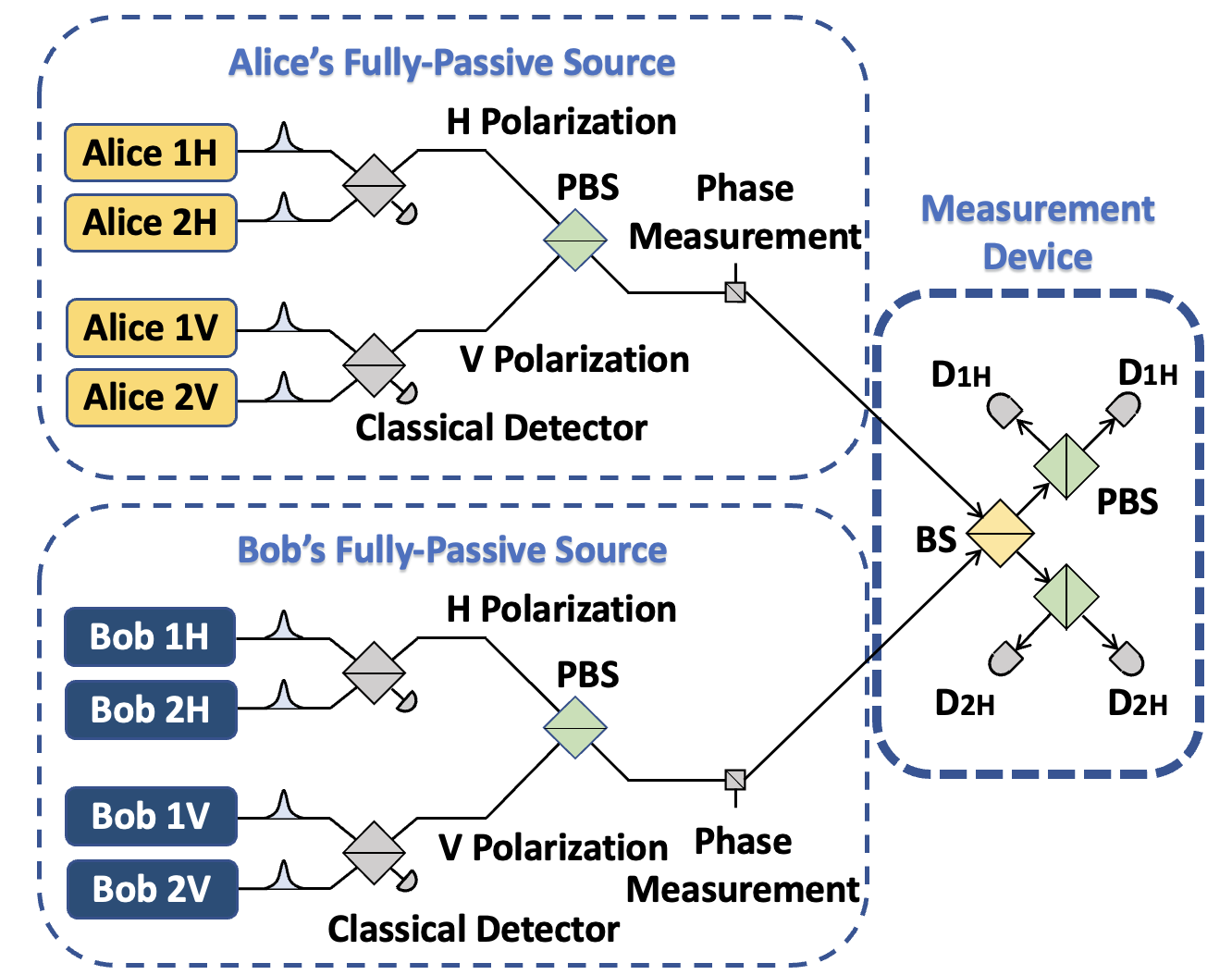}
		\caption{
			Optical layout of the fully passive source used in my passive MDI-QKD scheme.  
			Each user employs four independent lasers whose random phases are mixed through
			a network of 50{:}50 beam splitters (BSs) and a polarizing beam splitter (PBS),
			producing orthogonally polarized arms.  
			Local classical detectors monitor the arm intensities and phases, and the
			resulting classical data $(\mu_H,\mu_V,\phi_H,\phi_V)$ are used to infer the
			output state parameters for basis and decoy selection~\cite{Wang2023a,Li2024}.  
			\textit{Figure Adapted with modifications from Fig.~1 in
				W.~Wang \emph{et al.}, Phys.\ Rev.\ Lett.\ \textbf{130}, 220801 (2023),
				and subsequently used in my Phys.\ Rev.\ Applied manuscript, with
				permission of the American Physical Society.}
		}
		\label{fig:passivesource}
	\end{figure}

The fully passive source, which forms the core of my protocol, is shown schematically in Fig.~\ref{fig:passivesource}, following designs in Refs.~\cite{Wang2023a,Li2024}.

	Both Alice and Bob use identical setups to carry out passive MDI-QKD. \linebreak Each source consists of four independent lasers whose optical phases, $\phi_1, \phi_2$, $\phi_3, \phi_4$, are completely random and uniformly distributed. The laser outputs are combined pairwise at two 50:50 beam splitters (BSs), producing two orthogonally polarized arms, labeled $H$ and $V$. These arms are then combined at a polarizing beam splitter (PBS) to form the final output signal. The polarization of this output state is determined entirely by the interference process and is therefore \emph{passively} chosen.
	
	Each user locally monitors the intensity and relative phase of the $H$ and $V$ arms using classical photodetectors. These measurements yield classical quantities $(\mu_H, \mu_V, \phi_H, \phi_V)$, which map directly to the Bloch-sphere coordinates $(\mu, \theta_{HV}, \phi_{HV})$ and a global phase through the following relations~\cite{Wang2023a}:
	\begin{align}
		\theta_{HV} &= 2\cos^{-1}\!\left(\sqrt{\frac{\mu_H}{\mu_H+\mu_V}}\right), \label{eq:theta_hv} \\
		\phi_{HV} &= \phi_V - \phi_H. \label{eq:phi_hv}
	\end{align}
	Thus, the four initial random phases are transformed through the optical network as~\cite{Wang2023a}:
	\begin{equation}
		\phi_1, \phi_2, \phi_3, \phi_4
		\;\longrightarrow\;
		\mu_H, \mu_V, \phi_H, \phi_V
		\;\longrightarrow\;
		\mu, \theta_{HV}, \phi_{HV}, \phi_{\mathrm{global}}.
	\end{equation}
	Since all four initial phases are random, the resulting output can represent any point on the Bloch sphere~\cite{Wang2023a}. The corresponding polarization state can be expressed as
	\begin{equation}
		|\psi\rangle =
		\cos\!\frac{\theta_{HV}}{2}\,|H\rangle
		+ e^{i\phi_{HV}} \sin\!\frac{\theta_{HV}}{2}\,|V\rangle,
	\end{equation}
	where $(\theta_{HV}, \phi_{HV})$ locate the state on the sphere and $\mu = \mu_H + \mu_V$ denotes the total optical intensity.

	Because the state parameters are generated entirely by random interference, no active modulation is used to prepare the states. Instead, users record the outcomes of their local monitoring and \emph{post-select} those signals whose polarization and intensity fall within desired regions (for instance, states corresponding to the $Z$ or $X$ basis, and specific decoy intensities). This purely passive process handles both state preparation and decoy-state randomization without using any modulators, effectively removing a major class of source side-channels.
	
	Conceptually, the fully passive source merges the principles of passive decoy-state and passive encoding schemes. In a passive decoy-state setup, two signals interfere at a 50:50 BS, and the output intensity is determined by their random phase difference~\cite{Curty2010}. In a passive encoding setup, a PBS ensures that the output polarization depends on the phase difference between two input arms~\cite{Curty2010}. By combining both approaches, the fully passive source generates output signals whose intensity and polarization are jointly determined by the random phase relations among the four lasers. These signals can then be selectively used for QKD, providing a natural and secure way to integrate passive state generation within the MDI framework.

	\subsection{Postselection}
	\label{subsec:postselection}

	\begin{figure}[H]
		\centering
		\includegraphics[width=0.82\linewidth]{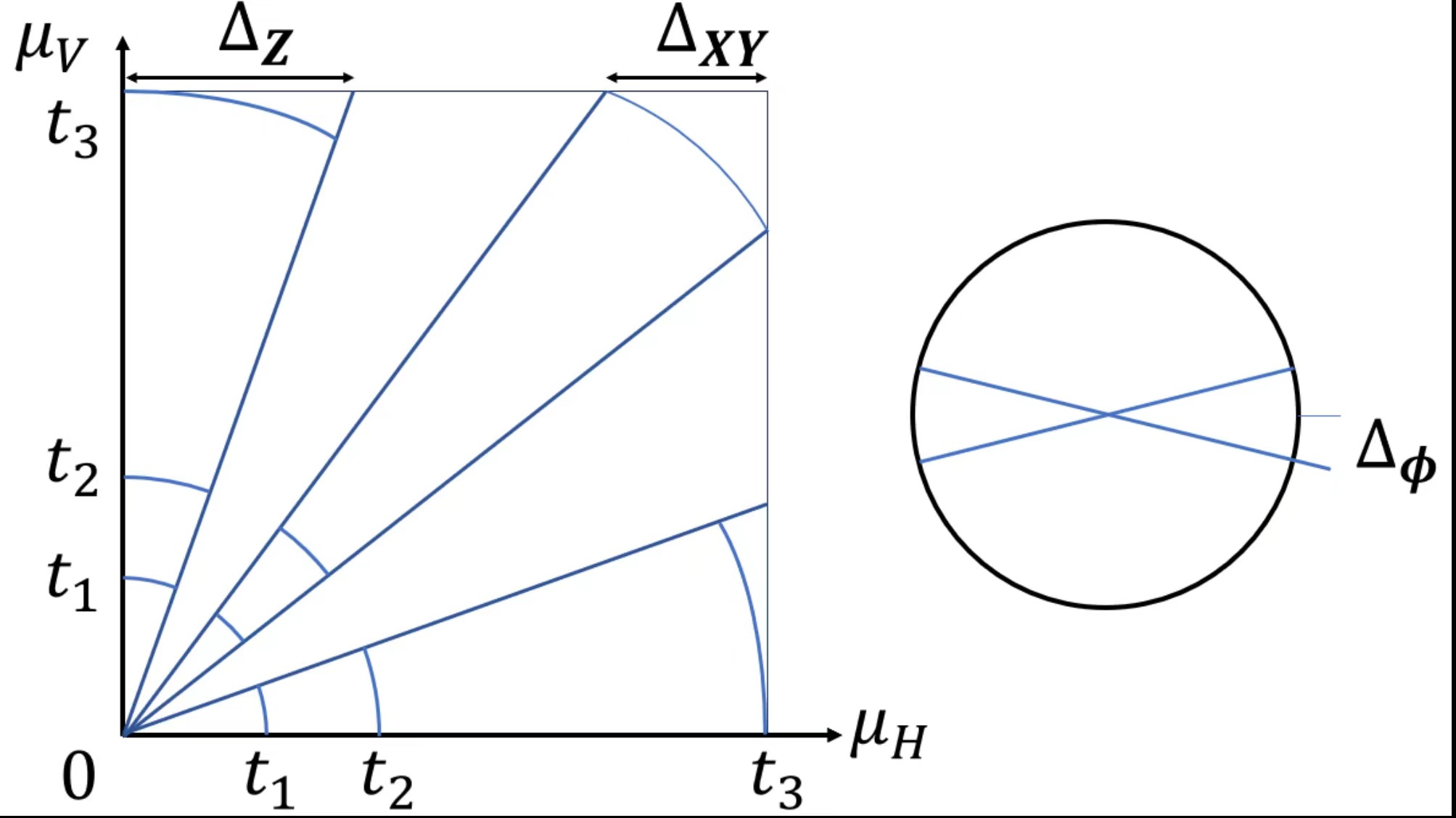}
		\caption{
			Illustration of the postselection domains used by Alice,
			represented in the $(\mu_H, \mu_V, \phi)$ parameter space together
			with their corresponding regions on the Bloch sphere. The basis-dependent partitions including the $Z$-basis window
			$\Delta_Z$ and the $X$-basis windows $\Delta_{XY}$ and $\Delta_{\phi}$ are
			further subdivided into several intensity bins, labeled
			$t_1, t_2, t_3$, which define the decoy settings employed in the
			protocol. Bob applies an identical partitioning.  
			\textit{Figure Adapted with modifications from Fig.~2 of
				W.~Wang \emph{et al.}, Phys.\ Rev.\ Lett.\ \textbf{130}, 220801 (2023),
				and subsequently incorporated into my Phys.\ Rev.\ Applied manuscript,
				with permission of the American Physical Society.}
		}
		\label{fig:regions}
	\end{figure}

	The output signals from the fully passive sources can, in principle, occupy any point on the Bloch sphere. To prepare usable states for the protocol, each user defines specific regions on the sphere corresponding to the desired BB84 states, such as $\{|H\rangle, |V\rangle, |+\rangle, |-\rangle\}$, as illustrated in Fig.~\ref{fig:regions}.

	Alice and Bob then post-select only those signals whose locally measured parameters fall within these predefined regions. These regions can also be subdivided to carry out the decoy-state analysis. As a result, any physical observable is described by its expectation value averaged over the selected regions.
	
	Alice and Bob then post-select only those signals whose locally measured parameters fall within these predefined regions. These regions can also be subdivided for decoy-state analysis, and any physical observable is evaluated as the expectation value averaged over the selected regions. All required multi-dimensional integrals, as can be seen in Eq. \ref{eq:exp7d},  are computed using the high-efficiency \texttt{CUBA} numerical integration library within my C++ implementation, ensuring fast and stable evaluation of these region-averaged quantities.

	For an arbitrary observable $A$, the expectation value within a given region pair—$S_i$ for Alice and $S_j$ for Bob—is calculated through a seven-dimensional integral~\cite{Li2024}:
\begin{align}
	\langle A \rangle_{S_i S_j}
	&= \frac{1}{2 P_{S_i S_j}}
	\int_{S_i} \int_{S_j} \int_0^{2\pi}
	p_A(\mu_{HA},\mu_{VA},\phi_{HVA})\,p_B(\mu_{HB},\mu_{VB},\phi_{HVB}) \notag\\
	&\quad \times
	A(\mu_{HA},\mu_{VA},\phi_{HVA},\mu_{HB},\mu_{VB},\phi_{HVB})\notag\\
	&\quad\times d\mu_{HA}\,d\mu_{VA}\,d\phi_{HVA}\,d\mu_{HB}\,d\mu_{VB}\,d\phi_{HVB}\,d\phi'_R,
	\label{eq:exp7d}
\end{align}
	where $P_{S_i S_j}$ is the joint probability that both users generate signals in regions $S_i$ and $S_j$:
	\begin{align}
		P_{S_i S_j}
		&=
		\int_{S_i}\int_{S_j}
		p_A(\mu_{HA},\mu_{VA},\phi_{HVA})
		p_B(\mu_{HB},\mu_{VB},\phi_{HVB})\notag\\&\quad\times
		\,d\mu_{HA}\,d\mu_{VA}\,d\phi_{HVA}\,d\mu_{HB}\,d\mu_{VB}\,d\phi_{HVB}.
	\end{align}
	The integral over $\phi'_R$ represents averaging over the random global phase difference between Alice’s and Bob’s sources. For Alice, the classical probability distribution $p_A$ can be written as a product of intensity and phase terms~\cite{Wang2023a}:
	\begin{align}
		p_A(\mu_{HA},\mu_{VA},\phi_{HVA})
		&= p_{\mu}^A(\mu_{HA},\mu_{VA})\,p_{\phi}^A(\phi_{HVA}),
		\label{eq:p_factorization}
	\end{align}
	where
	\begin{align}
		p_{\mu}^A(\mu_{HA},\mu_{VA})
		&= \frac{1}{\pi^2}
		\frac{1}{\sqrt{\mu_{HA}(\mu_{\max}-\mu_{HA})}}
		\frac{1}{\sqrt{\mu_{VA}(\mu_{\max}-\mu_{VA})}},
		\label{eq:pmu}\\[4pt]
		p_{\phi}^A(\phi_{HVA}) &= \frac{1}{2\pi}.
		\label{eq:pphi}
	\end{align}
	Bob’s distribution has the same form. These high-dimensional integrals are evaluated numerically using the \textsc{CUBA} library~\cite{Hahn2005}.
	
	\vspace{0.5em}
	\noindent\textbf{Intensity-Dependent Postselection.} To perform a valid decoy-state analysis, Alice and Bob apply an additional, intensity-dependent postselection step by probabilistically discarding some signals. This process is described by a function $q_{\mu}(\mu_{H(A/B)},\mu_{V(A/B)})$, chosen to shape the resulting intensity distribution. The function is defined as~\cite{Wang2023a}
\begin{align}
	q_{\mu}(\mu_{H(A/B)},\mu_{V(A/B)})
	=& C\pi^2
	\sqrt{\mu_{H(A/B)}[\mu_{\max}-\mu_{H(A/B)}]}
	\notag\\&\times
	\sqrt{\mu_{V(A/B)}[\mu_{\max}-\mu_{V(A/B)}]}
	e^{[\mu_{H(A/B)}+\mu_{V(A/B)}]},
	\label{eq:qmu}
\end{align}
	so that the overall intensity distribution takes a simple exponential form:
	\begin{align}
		p'_{\mu} = C\, e^{(\mu_H+\mu_V)}.
		\label{eq:expdist}
	\end{align}
	This “modulated” postselection step is crucial because it decouples the photon-number statistics from the polarization-dependent yields, a requirement for the linear programming used in the decoy-state analysis.
	
	\subsection{Channel Model}
	\label{subsec:channelmodel}
	
	In standard, actively modulated MDI-QKD, the prepared states usually lie on a two-dimensional plane of the Bloch sphere. In contrast, the signals generated by the fully passive scheme are distributed across the entire three-dimensional sphere. To accurately capture their transmission and interference, a complete 3D polarization model is therefore required~\cite{Li2024}.
	
	Each user's emitted state can be represented by a Bloch vector $\vec{s}_{U}$, where $U \in \{A, B\}$ refers to Alice and Bob, respectively. As each signal propagates through the optical fiber, it can undergo polarization rotation due to birefringence or small misalignments. This rotation is described by the Rodrigues formula~\cite{Friedberg2022}:
	\begin{align}
		\vec{s'}_{U} =
		\cos\gamma_U\,\vec{s}_U
		+ \sin\gamma_U\,(\vec{n}_U\times\vec{s}_U)
		+ (1-\cos\gamma_U)(\vec{n}_U\cdot\vec{s}_U)\vec{n}_U,
		\label{eq:rodrigues}
	\end{align}
	where $\vec{n}_U$ is the unit vector defining the rotation axis and $\gamma_U$ is the rotation angle. The resulting vector $\vec{s'}_{U}$ represents the polarization of the signal upon reaching the central relay, Charlie.
	
	At Charlie’s station, the two incoming signals interfere at a 50:50 beam splitter (BS). The interference is analyzed separately for the horizontal ($H$) and vertical ($V$) components. The rotated states are first decomposed back into their $H$- and $V$-leg intensities (e.g., $\mu'_{HA}$ and $\mu'_{VA}$ for Alice), which are then reduced according to the channel loss and detector efficiency.
	
	When two coherent states, $|\sqrt{\mu_{1}}e^{i\phi_{1}}\rangle$ and $|\sqrt{\mu_{2}}e^{i\phi_{2}}\rangle$, interfere at a beam splitter, the output port intensities are given by:
	\begin{align}
		\mu_{c} &= \frac{\mu_{1}}{2} + \frac{\mu_{2}}{2} - \sqrt{\mu_{1}\mu_{2}}\sin(\Delta\phi), \\
		\mu_{d} &= \frac{\mu_{1}}{2} + \frac{\mu_{2}}{2} + \sqrt{\mu_{1}\mu_{2}}\sin(\Delta\phi),
	\end{align}
	where $\Delta\phi$ is the phase difference between the two fields. This expression is applied independently to the $H$-leg intensities $(\mu'_{HA}, \mu'_{HB})$ and the $V$-leg intensities $(\mu'_{VA}, \mu'_{VB})$, determining the final light intensities incident on Charlie’s four single-photon detectors.
	
	The overall gain and quantum bit error rate (QBER) for a given pair of input states are obtained by calculating the resulting detection probabilities and averaging over the random relative global phase $\phi'_R$ between Alice and Bob. Finally, these quantities are integrated over the postselection regions, as defined in Eq.~(\ref{eq:exp7d}), to yield the parameters used in the decoy-state analysis.
	
A more detailed derivation of the channel model, including the Bloch-sphere rotations, beam-splitter interference formulas, and phase averaging, is collected in Appendix~\ref{sec:mdi_channel}.

	\section{Decoy-State Analysis}
	\label{sec:decoy}
	
	The decoy-state method is a key technique for securely estimating the single-photon yield ($Y_{11}$) and error rate ($e_{11}$) when weak coherent pulse (WCP) sources are used~\cite{Lo2005}. In the passive MDI-QKD protocol, however, the emitted signals form \emph{mixed} ensembles because the underlying optical parameters are random and continuous. This section outlines the theoretical framework that allows standard decoy-state analysis to be applied to such mixed ensembles. Detailed derivations, including the full linear-program construction, yield bounds, and error-yield bounds, are provided in Appendices~\ref{sec:mdi_lp}--\ref{sec:mdi_error}.

	\subsection{Decoupling the Photon-Number Yields}
	\label{subsec:decoupling}
	
	In a standard decoy-state analysis, the observed gain and QBER are expressed as linear sums over photon-number components. In the passive scheme, these quantities are intertwined within the seven-dimensional integral defined in Eq.~(\ref{eq:exp7d}). The experimentally observed gain $\langle Q\rangle_{S_i S_j}$ and error-gain $\langle QE\rangle_{S_i S_j}$ for a pair of postselection regions $(S_i, S_j)$ are written as:
	\begin{align}
		\langle Q\rangle_{S_i S_j}
		&= \sum_{n,m=0}^{\infty} \langle P_n^A P_m^B Y_{nm} \rangle_{S_i S_j}, \\
		\langle QE\rangle_{S_i S_j}
		&= \sum_{n,m=0}^{\infty} \langle P_n^A P_m^B e_{nm} Y_{nm} \rangle_{S_i S_j},
	\end{align}
	where the averages are taken over the joint probability distribution of the local optical parameters. To make this system solvable, the photon-number probabilities must be separated from the yields~\cite{Wang2023a}.
	
	This is achieved by using the intensity-dependent postselection function $q_{\mu}$ from Eq.~(\ref{eq:qmu}). This function reshapes the underlying intensity distribution into an exponential form, $p'_{\mu} \propto e^{\mu_H+\mu_V}$, effectively canceling the $e^{-(\mu_H+\mu_V)}$ factor in the Poisson photon-number probabilities. This key step \emph{decouples} the probabilities from the yields, allowing the equations to be written in a factorized form~\cite{Wang2023a,Li2024}:
	\begin{align}
		\langle Q\rangle_{S_i S_j}
		&= \sum_{n,m=0}^{\infty} \langle P_n^A \rangle_{S_i} \langle P_m^B \rangle_{S_j} Y_{nm}^{\mathrm{mix}}, \label{eq:gain_decoupled} \\
		\langle QE\rangle_{S_i S_j}
		&= \sum_{n,m=0}^{\infty} \langle P_n^A \rangle_{S_i} \langle P_m^B \rangle_{S_j} e_{nm}^{\mathrm{mix}} Y_{nm}^{\mathrm{mix}}, \label{eq:errgain_decoupled}
	\end{align}
	where $Y_{nm}^{\mathrm{mix}}$ and $e_{nm}^{\mathrm{mix}}$ represent the effective yields and error rates for the mixed photon-number components, averaged over the polarization degrees of freedom. A step-by-step derivation of Eqs.~\eqref{eq:gain_decoupled}--\eqref{eq:errgain_decoupled} and the explicit construction of the mixed yields $Y_{nm}^{\mathrm{mix}}$ can be found in Appendix~\ref{sec:mdi_lp}.

	\subsection{Linear Programming and Bound Estimation}
	\label{subsec:lp}

	While finding an explicit analytical expression for the bounds is difficult given the number of free parameters, the system is in a form that is perfectly suited for numerical optimization. To this end, standard linear programming (LP) is a reasonably effective and widely used tool for this task~\cite{Ma2008}. Once the observables are written in the linear form of Eqs.~(\ref{eq:gain_decoupled})--(\ref{eq:errgain_decoupled}), the system for different decoy-state settings $(S_i, S_j)$ can be solved using standard linear programming (LP). Alice and Bob use their measured gains and QBERs as input to the LP to obtain conservative bounds on the single-photon parameters. The optimization produces two key results:
	\begin{itemize}
		\item A lower bound on the single-photon yield, $Y_{11}^{\mathrm{mix, L}}$.
		\item An upper bound on the single-photon error yield, $(e_{11}Y_{11})^{\mathrm{mix, U}}$.
	\end{itemize} From these, the upper bound on the single-photon error rate, $e_{11}^{\mathrm{mix, U}}$, can be derived directly.
	
	\subsection{Mapping Mixed Ensembles to Ideal States}
	\label{subsec:mixed2ideal}
	
	A main challenge in proving security for the passive protocol is to connect the real states produced by the source to the idealized states assumed in standard QKD proofs. The linear program, being data-driven, provides bounds on the actual physical system that generates the measured statistics. Since the passive source emits an ensemble of states over finite postselection regions, the LP naturally provides bounds for a \textbf{mixed ensemble} (not perfect, as in the QKD proofs).
	
	However, the MDI-QKD key rate formula assumes idealized \textbf{perfect ensembles}, corresponding to pure BB84 states such as $|H\rangle$ and $|V\rangle$. To apply this formula safely, it is necessary to show that the bounds derived for the mixed ensemble are still valid (and conservative) for the perfect ensemble. This mapping is established through two crucial relationships~\cite{Wang2023a,Li2024}:
	
	\begin{enumerate}
		\item \textbf{Yield Equality:}  
		The yield of the mixed single-photon ensemble is shown to be \emph{identical} to that of the ideal single-photon state:
		\begin{equation}
			Y_{11}^{\mathrm{mix}} = Y_{11}^{\mathrm{perfect}}.
		\end{equation}
		This ensures that the lower bound $Y_{11}^{\mathrm{mix, L}}$ can be safely used as the effective single-photon yield in the standard key-rate formula.
		
		\item \textbf{Error Rate Conservativeness:}  
		The error rate of the mixed ensemble is always greater than or equal to that of the perfect state:
		\begin{equation}
			e_{11}^{\mathrm{mix}} \ge e_{11}^{\mathrm{perfect}}.
		\end{equation}
		This conservative inequality guarantees that using the upper bound $e_{11}^{\mathrm{mix, U}}$ in the key-rate calculation cannot overestimate the final key length.  
		In practice, this means the protocol may sacrifice some key rate, but never security.
	\end{enumerate}
	
Rigorous proofs of the yield equality and the conservative error-yield inequality are given in Appendix~\ref{sec:mdi_yield} and Appendix~\ref{sec:mdi_error}, respectively.

	\subsection{Key Rate}
	\label{subsec:keyrate}
	
	In this protocol, the $Z$ basis is used for key generation, while the $X$ basis is used to estimate the phase error and detect any eavesdropping. The final asymptotic secret key rate per pulse pair is derived from the bounds obtained through the decoy-state analysis~\cite{Li2024,Lo2012}.
	
	Let $P_Z^{A}$ and $P_Z^{B}$ denote the probabilities that Alice and Bob post-select a signal in the $Z$-basis region. Let $\langle P_1^A\rangle_{S_Z}$ and $\langle P_1^B\rangle_{S_Z}$ represent the average single-photon Poisson weights over these regions. Using the lower and upper bounds from the mixed ensemble analysis—together with the mapping between mixed and ideal ensembles—a conservative expression for the asymptotic key rate per pulse pair is:
	\begin{equation}
		R =
		P_Z^A P_Z^B
		\Big\{
		\langle P_1^A\rangle_{S_Z}
		\langle P_1^B\rangle_{S_Z}
		Y_{11}^{Z, \mathrm{mix, L}}
		\big[1 - H_2(e_{11}^{X, \mathrm{mix, U}})\big]
		-
		f_\text{EC}H_2(E_Z)
		\Big\},
		\label{eq:keyrate-final}
	\end{equation}
	where $H_2(x)$ is the binary Shannon entropy function and $f_\text{EC} \ge 1$ denotes the inefficiency of the classical error-correction algorithm.
	
	Each term in Eq.~\eqref{eq:keyrate-final} has a specific meaning:
	\begin{itemize}
		\item $Y_{11}^{Z, \mathrm{mix, L}}$ — lower bound on the single-photon yield in the $Z$ basis, obtained from the linear program.
		\item $e_{11}^{X, \mathrm{mix, U}}$ — upper bound on the single-photon error rate in the $X$ basis, used to estimate Eve’s potential information.
		\item $\langle Q_Z\rangle$ and $E_Z$ — observed $Z$-basis gain and quantum bit error rate (QBER), representing the rate of successful detections and the cost of reconciliation.  
		The QBER is calculated as $E_Z = \langle Q_Z E_Z \rangle / \langle Q_Z \rangle$.
		\item $f_\text{EC}$ — efficiency factor of the practical error-correction process, accounting for its deviation from the Shannon limit.
	\end{itemize}
	
	Because of the mixed-to-ideal mapping derived earlier, substituting the mixed bounds into this expression provides a valid and conservative estimate of the secure key rate for the ideal single-photon states.
	
	\paragraph{Remark (Small-Ring Sifting Method).}
	The key rate can be slightly improved by introducing a more refined sifting approach, called the \textbf{small-ring method} (see Fig.~\ref{fig:ringmethod}). Instead of treating the entire $Z$-basis region as one bin, it is divided into several narrow concentric rings, $\{\mathcal{R}_k\}$, based on the polar angle. A separate key rate is calculated for each ring pair (one from Alice, one from Bob), and the final key rate is obtained by summing over all pairs~\cite{Li2024}.
	
	The privacy amplification term is proportional to the number of single-photon detections and remains unaffected by this partitioning. However, the error-correction term involves the binary entropy function $H_2(E)$, which is \textbf{convex}. According to Jensen’s inequality~\cite{Jensen1906}, the average of $H_2(E)$ over all rings satisfies $\langle H_2(E) \rangle \le H_2(\langle E \rangle)$. This means that dividing the data into smaller rings reduces the effective reconciliation cost, leading to a modest improvement in the final key rate.
	
	\begin{figure}[H]
		\centering
		\includegraphics[width=0.85\linewidth]{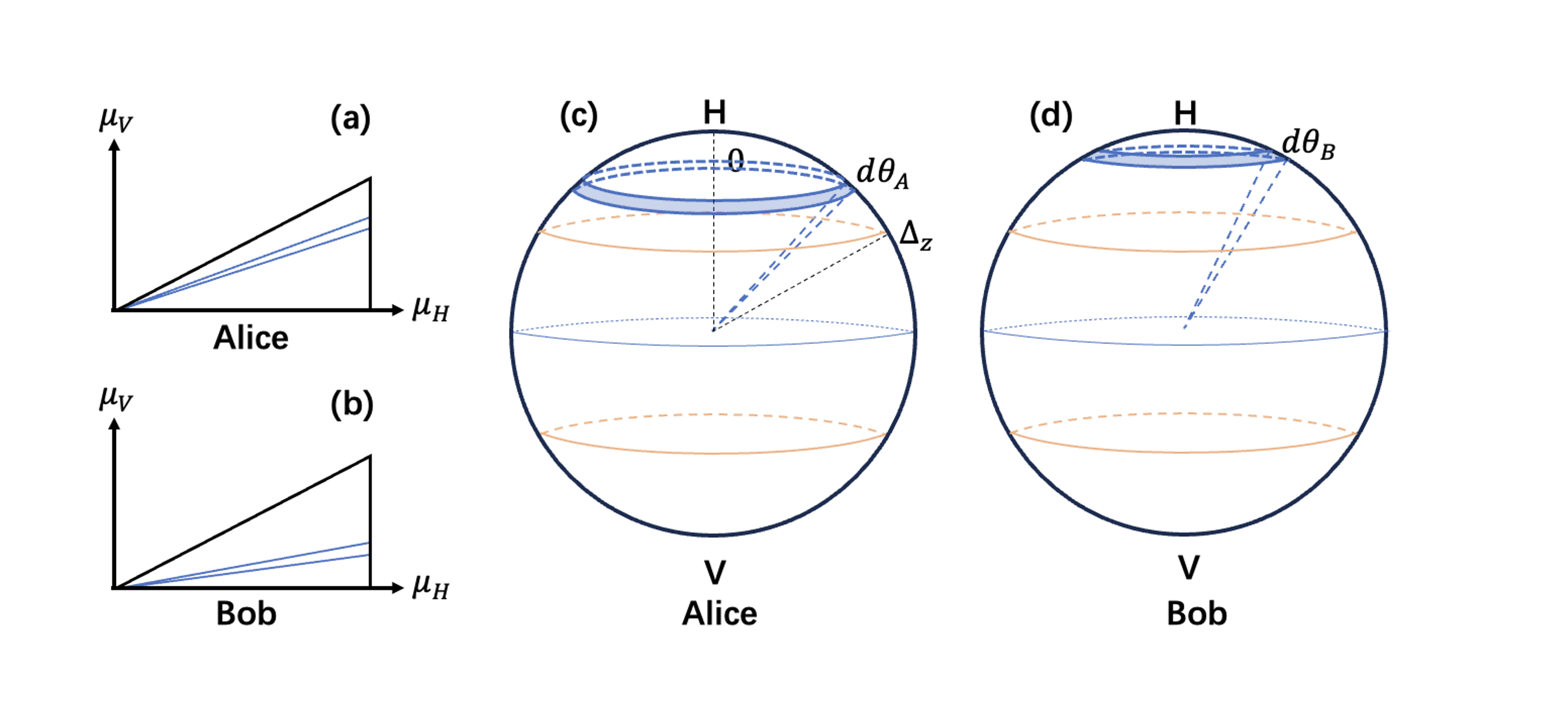}
		\caption{
			Illustration of the small-ring sifting strategy.  
			The key-generating regions associated with the $Z$ basis are divided
			into a sequence of thin concentric rings, indicated by the shaded
			segments between successive angular boundaries.  
			Panels~(a) and~(b) show how these rings appear in the
			$(\mu_H,\mu_V)$ parameter space for Alice and Bob, while
			panels~(c) and~(d) display the corresponding annular regions on their
			respective Bloch spheres.  
			Each ring is labeled by its polar angle~$\theta$ and has a differential
			width $d\theta$.  
			In the numerical evaluation of Eq.~(\ref{eq:exp7d}), Alice’s integration
			limits are marked as $0$ and~$\Delta_Z$, and Bob’s limits are identical.  
			\textit{Figure reproduced from my Phys.\ Rev.\ Applied manuscript
				\cite{Li2024}.}
		}
		\label{fig:ringmethod}
	\end{figure}

	\section{Results and Discussion}
	\label{sec:results}
	
	\subsection{Numerical Results}
	
	I numerically calculated and optimized the asymptotic key rate of the fully passive MDI-QKD protocol and compared it with a conventional active MDI-QKD scheme. The channel was modeled with a loss coefficient of $0.2~\mathrm{dB/km}$, a detector dark count probability of $10^{-6}$, and a total basis misalignment of $1\%$ ($0.5\%$ for each user in the $X$–$Z$ plane). Both protocols used three decoy settings. For the passive scheme, the postselection parameters were set to $\Delta_{XY}=\Delta_{\phi}=0.2$, while $\Delta_Z$ was optimized over $[0.01,0.05]$ and the decoy threshold $t_3$ over $[0.1,0.99]$. For the active scheme, the three decoy intensities were optimized numerically within the range $[0,1]$~\cite{Xu2013}.

	All numerical simulations in Fig.~\ref{fig:keyrate} were performed using a custom \texttt{C++} codebase. High-dimensional integrations (up to seven dimensions) required for evaluating the passive MDI-QKD channel model were computed using the \texttt{CUBA} numerical integration library~\cite{Hahn2005}, following the same methodology used in my published work~\cite{Li2024}. The decoy-state bounds for $Y_{11}$ and $e_{11}Y_{11}$ were obtained by solving the associated linear programs with the commercial solver \texttt{Gurobi}. A typical full key-rate evaluation at a single distance (including optimization over $\Delta_Z$ and $t_3$) requires several seconds to a few minutes on a standard desktop workstation, depending on the slice sizes and integration tolerances. These computational tools allow efficient and stable evaluation of the passive MDI-QKD performance over the full distance range shown.

	Figure~\ref{fig:keyrate} presents the simulated key rates. Clearly, it shows that the fully passive scheme achieves a maximum communication distance of approximately $143~\mathrm{km}$. Its key rate is two to three orders of magnitude lower than that of the active MDI-QKD protocol~\cite{Li2024}. This reduction is expected—it arises from the double-sifting loss due to postselection at both Alice’s and Bob’s sources. In return, the protocol eliminates all side-channels associated with active source modulators, offering a stronger and more robust implementation security.
	
\begin{figure}[H]
	\centering
	\includegraphics[width=0.82\linewidth]{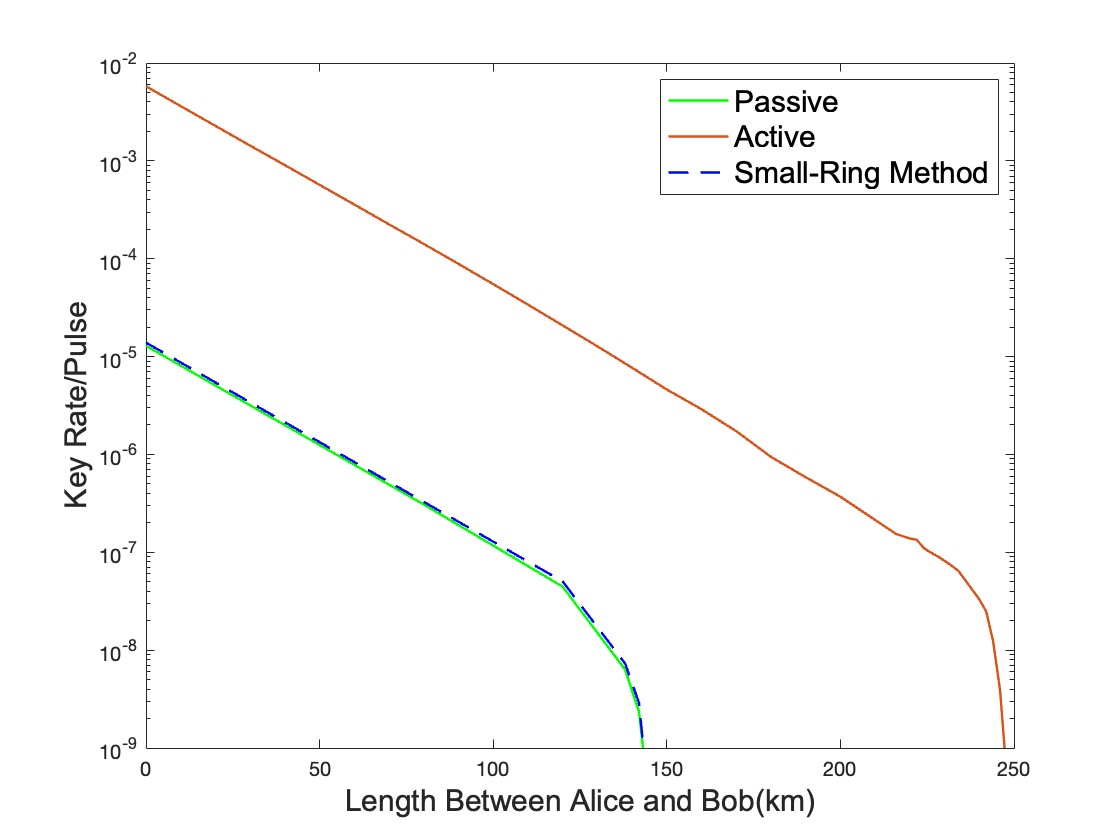}
	\caption{
		Simulated secret-key rates for the fully passive MDI-QKD protocol
		compared with a standard actively modulated MDI-QKD scheme.
		Both approaches employ three decoy settings. 
		For the passive protocol, the postselection parameters are fixed at
		$\Delta_{XY}=\Delta_{\phi}=0.2$, while $\Delta_Z$ is optimized over
		$[0.01,0.05]$ and the highest decoy threshold $t_3$ is optimized in the
		interval $[0.1,0.99]$.  
		In the active protocol, the three decoy intensities are optimized within
		$[0,1]$.  A channel loss of $0.2~\mathrm{dB/km}$, a detector dark count
		probability of $10^{-6}$, and a total misalignment of $1\%$ are assumed.
		The passive scheme achieves a maximum distance of approximately
		$143~\mathrm{km}$, with key rates that are two to three orders of
		magnitude lower than those of the active scheme due to double
		postselection, but with the key advantage that all source-modulator
		side-channels are eliminated.  
	}
	\label{fig:keyrate}
\end{figure}

	\subsection{Discussion}
	
	The proposed passive MDI-QKD protocol integrates fully passive sources into the MDI-QKD framework, creating a system that is simultaneously immune to all detector and modulator side-channels. This marks a significant advance toward bridging the gap between theoretical QKD security and real-world implementation. Compared to recently proposed fully passive twin-field QKD (TF-QKD) schemes~\cite{Wang2023}, the present protocol is experimentally simpler, as it does not require global phase stabilization across distant nodes.
	
	\paragraph{Practical Assumptions.}
	The security of the protocol relies on several practical assumptions. First, the four lasers in each source must have independent and uniformly random phases, without inter-pulse correlations. Second, the local classical detectors used for monitoring must measure the intensity and phase accurately and securely. Third, stable two-photon interference must be maintained both locally within each user’s source and remotely between Alice, Bob, and Charlie. Although the passive design removes modulator side-channels, care must still be taken to secure the classical detectors and to prevent leakage from fixed optical paths or reflection patterns~\cite{Wang2023a}.
	
	\paragraph{Finite-Size Considerations.}
	While my analysis focuses on the asymptotic regime, I can estimate requirements for a realistic finite-key scenario. With postselection widths of $\Delta_{XY}=\Delta_{\phi}=0.2$, only about $0.022\%$ of $X$-basis signals are retained. To achieve the same statistical confidence as an active MDI-QKD experiment with $10^{10}$ pulses, the passive protocol would require roughly $10^{12}$ pulses~\cite{Li2024}. This can be relaxed by increasing the postselection widths: for example, with $\Delta_{XY}=\Delta_{\phi}=0.5$, the retained fraction rises to $0.42\%$, reducing the required number of pulses to $10^{11}$, though the asymptotic key rate decreases by about $35\%$. A full finite-size analysis remains an important direction for future work.

	\paragraph{Outlook.}
	This work demonstrates the theoretical feasibility and performance characteristics of a fully passive MDI-QKD system. Future research will focus on extending the security proof to the finite-size regime, experimentally demonstrating the protocol, and improving sifting efficiency to enhance the overall key rate. The proposed architecture provides a clear path toward high-security, practical QKD networks suitable for real-world deployment.



	\chapter{FULLY PASSIVE TWIN-FIELD-BASED CONFERENCE KEY AGREEMENT}
	\label{chap:passive-cka}

\section*{Statement of Contribution}

This chapter is based on the work presented in the preprint~\cite{Li2024CKA}:
\begin{quote}
	J. Li, W. Wang, and H. F. Chau, ``Fully Passive Quantum Conference Key Agreement,'' arXiv preprint arXiv:2407.15761, 2024.
\end{quote}

This work was a collaboration between the author, Dr. Wenyuan Wang, and Prof. H. F. Chau. The authors jointly aimed to extend the principles of passive, interference-based Quantum Key Distribution (QKD) to the multi-user domain of Conference Key Agreement (CKA), with the goal of achieving both excellent performance and high implementation security.

The author's contributions to this chapter are as follows:

The author designed the technical details and developed the specific implementation of the fully passive CKA protocol, based on a conceptual architecture proposed by Dr. Wang, and provided theoretical feasibility by Prof. Chau.

Regarding the security analysis, the author adapted the ``corrected-yield'' technique to the CKA framework, which involved deriving the generalized correction integrals required for the multi-user scenario. To improve protocol efficiency, the author also adapted the ``phase mismatching'' strategy to the multi-user setting and derived the necessary modifications.

Finally, to enable the numerical simulation of this computationally intensive protocol, the author developed a high-performance program and a novel ``branch cutting'' algorithm. Using this program, the author conducted a detailed performance analysis for a four-user network.


\section{Introduction}
\label{sec:intro}

This work extends the principles of passive, interference-based Quantum Key Distribution (QKD) to the multi-user domain of Conference Key Agreement (CKA)~\cite{Li2024CKA}, a challenge that is not as well-investigated in the field. The foundational advantages of unifying the passive and Twin-Field (TF) methods are preserved~\cite{Wang2023}: the TF-QKD architecture overcomes the rate-distance limit and provides Measurement-Device-Independent (MDI) security~\cite{Lucamarini2018,Lo2012}, while passive sources eliminate user-side security loopholes~\cite{Wang2023a,Li2024}. My goal is to unify these solutions to create a multi-user QKD system that achieves both excellent performance and a high level of implementation security~\cite{Li2024CKA}.
	
	\section{Preliminary Protocols}
	\label{sec:prelim}
	\subsection{Twin-Field Quantum Key Distribution (TF-QKD)}
	Standard point-to-point QKD protocols are fundamentally limited by the repeaterless bound, which dictates that the secret key rate scales linearly with channel transmittance ($R \propto \eta$)~\cite{Pirandola2017PLOB}. Twin-Field QKD (TF-QKD) is a protocol designed to overcome this limit, achieving a more favorable scaling of $R \propto \sqrt{\eta}$ by using single-photon interference at an untrusted central relay~\cite{Lucamarini2018}.
	
	\subsubsection*{Protocol and Measurement}
	The TF-QKD protocol is a type of Measurement-Device-Independent (MDI) QKD where two users, Alice and Bob, send weak coherent pulses (WCPs) to an untrusted measurement station, Charlie~\cite{Lo2012,Lucamarini2018}. Charlie's station consists of a 50:50 beam splitter followed by two single-photon detectors, $D_c$ and $D_d$. A successful event is heralded when only one of these two detectors registers a ``click".
	
	The protocol distinguishes between two operational bases and associated rounds:
	
	\begin{enumerate}
		\item \textbf{Key Generation (KG) in the X basis:} To generate the raw key, Alice and Bob send WCPs where the bit is encoded in the phase. For example, they prepare coherent states $|\alpha\rangle$ (for bit `0`) or $|-\alpha\rangle$ (for bit `1`), which correspond to phases of $0$ and $\pi$. The global phase reference between them must be stable during these rounds~\cite{Lucamarini2018}.
		
		\item \textbf{Parameter Estimation (PE) in the Z basis:} To test for eavesdropping, Alice and Bob use the complementary Z basis. In these rounds, they send phase-randomized coherent states. The intensity of each pulse is randomly chosen from a predefined set of decoy values, e.g., $S = \{\beta_1, \beta_2, ...\}$, which is essential for the decoy-state method~\cite{Lo2005,Wang2005,Curty2019}.
	\end{enumerate}
	
	After each transmission, Charlie publicly announces his measurement outcome, for instance, $(k_c=1, k_d=0)$ if detector $D_c$ clicked. Alice and Bob only keep the single-click events. To correlate their raw keys from the KG rounds, they apply a simple rule: if $D_c$ clicks, their bits should be the same ($b_A = b_B$), and if $D_d$ clicks, their bits should be different ($b_A \neq b_B$). In the latter case, one party (e.g., Bob) flips their bit.
	
	\subsubsection*{Parameter Estimation and Key Rate}
	To distill a final, secure key from the raw data, the parties must quantify and remove two things: the errors in their key string and the information that an eavesdropper, Eve, might have gained. This leads to two distinct procedures: a direct measurement of the bit-error rate for error correction, and a more complex estimation of the phase-error rate for privacy amplification.
	
	\paragraph{The Bit-Error Rate (A Direct Measurement for Error Correction)}
	This process is straightforward. The bit-error rate, denoted $e_{X,k_c k_d}$, quantifies the disagreements in the raw key generated in the X basis. It is determined empirically:
	\begin{itemize}
		\item Alice and Bob publicly compare a randomly chosen subset of their raw key bits.
		\item The fraction of bits that do not match (after applying the bit-flip rule based on the detector click) is the measured bit-error rate, $e_{X,k_c k_d}$.
	\end{itemize}
	This measured value tells them how much information they must sacrifice for the classical error correction step, which is represented by the term $H_2(e_{X,k_c k_d})$ in the key rate formula.
	
	\paragraph{The Phase-Error Rate (An Indirect Estimation for Privacy Amplification)}
	This is the more subtle and critical part of the security proof. The phase-error rate, $e_{Z,k_c k_d}$, represents the error rate Eve would see if the parties had used the complementary Z basis for key generation. It quantifies Eve's potential information, and the parties must find a secure upper bound for it, $e_{Z,k_c k_d}^{\text{upp}}$, to know how much to shorten their key during privacy amplification~\cite{Curty2019}.
	
	The core challenge is that the key is generated from imperfect Weak Coherent Pulses (WCPs), not ideal single-photon sources. While the ``good" part of the key comes from single-photon interference events (`(1,0)` and `(0,1)` pairs), the WCPs also produce ``bad" multi-photon events (e.g., `(2,0)`, `(0,2)`, `(1,1)`), which Eve can attack to gain information. The decoy-state method is the tool used to meticulously account for the contributions of both good and bad events. The estimation follows a clear three-step logical chain:
	
	\begin{enumerate}
		\item \textbf{Step 1: Measure the Gains.} The process begins with raw experimental data from the Z-basis (PE) rounds. For each combination of decoy intensities $(\beta_A, \beta_B)$ that they sent, the parties measure the \emph{gain}: the probability of getting a specific single-click event. This observable quantity is denoted $p_{ZZ}(k_c, k_d | \beta_A, \beta_B)$.
		
		\item \textbf{Step 2: Estimate the Yields.} The measured gains are a statistical mixture of events from all possible photon-number pairs that could have been emitted. The decoy-state method ``unmixes" them. The gain is related to the underlying photon-number-resolved \emph{yields} ($Y_{n_A, n_B}$) by the equation:
		\begin{equation}
			p_{ZZ}(k_c, k_d | \beta_A, \beta_B) = \sum_{n_A, n_B=0}^{\infty} Y_{n_A, n_B} P_{\beta_A^2}(n_A) P_{\beta_B^2}(n_B).
		\end{equation}
		By creating a system of these equations using the data from their multiple decoy settings, the parties use Linear Programming (LP) to solve for the unknown yields. This single procedure provides secure bounds on all the necessary yields: lower bounds on the ``good" single-photon yields (like $Y_{10}$, $Y_{01}$) and upper bounds on the``bad" multi-photon yields (like $Y_{20}$, $Y_{02}$, etc.).
		
		\item \textbf{Step 3: Calculate the Phase-Error Rate Bound.} Finally, the parties take all the yield bounds they just estimated and insert them into the specific formula provided by the security proof (such as Eq. 15 in the work by Curty, Azuma, and Lo~\cite{Curty2019}). This formula combines the contributions from all photon-number components to produce a single, conservative upper bound on the phase-error rate, $e_{Z,k_c k_d}^{\text{upp}}$.
	\end{enumerate}
	
	\paragraph{Final Key Rate}
	With both the measured bit-error rate and the estimated phase-error rate bound, the parties can calculate the final secure key rate. The total rate is the sum of contributions from the two successful single-click outcomes, $(1,0)$ and $(0,1)$:
	\begin{equation}
		R_X = R_{X,10} + R_{X,01},
	\end{equation}
	where the rate for each outcome subtracts the costs of both error correction and privacy amplification:
	\begin{equation}
		R_{X,k_c k_d} \ge p_{XX}(k_c, k_d) \left[ \underbrace{1}_{\text{Raw bits}} - \underbrace{H_2(e_{X,k_c k_d})}_{\text{Error Correction cost}} - \underbrace{H_2(e_{Z,k_c k_d}^{\text{upp}})}_{\text{Privacy Amplification cost}} \right].
	\end{equation}
	Here, $p_{XX}(k_c, k_d)$ is the probability of a successful KG outcome and $H_2(x)$ is the binary entropy function.
	
\subsection[Interference-Based Conference Key Agreement (CKA)]{Interference-Based \\ Conference Key Agreement (CKA)}
	The principles of TF-QKD can be generalized from a two-party protocol to a multi-party network to achieve Conference Key Agreement (CKA). The goal of CKA is to establish a single, identical secret key among an arbitrary number of users, $N_U$. The protocol described here is a practical approach that uses only weak coherent pulses (WCPs) and linear optics, making it feasible with current technology while still being able to overcome the multipartite repeaterless bounds~\cite{Carrara2023}.
	
	\subsubsection*{Physical Setup and Measurement}
	In this protocol, all $N_U$ parties ($A_0, A_1, ..., A_{N_U-1}$) send their optical pulses to a central, untrusted relay. This relay houses a network of balanced beam splitters (BBS), which is a passive optical circuit designed to interfere all incoming signals. The network has $M$ input ports and $M$ output ports (where $M \ge N_U$ and $M$ is a power of two), and each output is monitored by a threshold single-photon detector. The number of optical components scales efficiently as $\mathcal{O}(M \log_2 M)$~\cite{Carrara2023,Grasselli2019CKA}.
	
	The BBS network performs a specific linear optical transformation that maps the creation operator for a photon from user $i$, $\hat{a}_i^\dagger$, to a superposition of creation operators at the output detectors, $\hat{d}_j^\dagger$. This relationship is given by~\cite{Carrara2023}:
	\begin{equation}
		\hat{a}_{i}^{\dagger} \rightarrow \frac{1}{\sqrt{M}}\sum_{j=0}^{M-1}(-1)^{\vec{j}\cdot\vec{i}}\hat{d}_{j}^{\dagger},
		\label{eq:BSnet-prelim-revised}
	\end{equation}
	where $\vec{i}$ and $\vec{j}$ are the binary vector representations of the user and detector indices, respectively.
	
	Similar to the bipartite TF-QKD case, a successful measurement event is heralded when exactly one of the $M$ detectors clicks. This event is labeled $\Omega_j$, where $j$ denotes the detector that fired. Such a single-click event post-selects a W-like quantum correlation among all users, from which a secret key can be distilled~\cite{Grasselli2019CKA}.
	
	\subsubsection*{Protocol Description}
	The protocol is executed in a series of rounds, with each round being designated for either key generation or parameter estimation:
	\begin{enumerate}
		\item \textbf{State Preparation:}
		\begin{itemize}
			\item In a \textbf{Key Generation (KG) round}, each party $A_i$ randomly chooses a bit $x_i \in \{0, 1\}$ and prepares a coherent state with a fixed amplitude $\alpha_i$, encoding the bit in the phase: $|(-1)^{x_i}\alpha_i\rangle$.
			\item In a \textbf{Parameter Estimation (PE) round}, each party prepares a phase-randomized coherent state (PRCS). The intensity $\beta_i$ of the state is randomly chosen from a public set of decoy values, $S_i$.
		\end{itemize}
		\item \textbf{Transmission and Measurement:} Each party sends their prepared state to the untrusted relay. The relay performs the interference measurement and publicly announces the detection pattern $\vec{k} \in \{0,1\}^M$. The round is discarded unless exactly one detector clicks (i.e., if the Hamming weight $|\vec{k}| \neq 1$).
	\end{enumerate}
	
	\subsubsection*{Parameter Estimation and Security}
	To distill a final, secure conference key, the parties must perform two essential tasks for each successful single-click outcome $\Omega_j$. First, they must correct the errors in their raw key (error correction). Second, they must remove any information an eavesdropper, Eve, might have gained (privacy amplification). These two tasks correspond to measuring the bit-error rate and estimating the phase-error rate, respectively.
	
	\paragraph{The Quantum Bit Error Rate (A Direct Measurement for Error Correction)}
	The ``Quantum Bit Error Rate (QBER)", denoted $Q_{X_{0},X_{i}}^{j}$, quantifies the disagreements in the raw key generated during the KG rounds. It is a ``directly measured" value that determines the cost of the error correction step. The process is straightforward:
	\begin{enumerate}
		\item The parties publicly compare a randomly chosen subset of their raw key bits from the KG rounds.
		\item For a given outcome $\Omega_j$, a reference party $A_0$ compares her bit value $x_0$ with the bit value $x_i$ from another party $A_i$.
		\item An error is counted if the bits are not properly correlated according to the optical network's transformation. The QBER is the frequency of these mismatches:
		\begin{equation}
			Q_{X_{0},X_{i}}^{j} = \text{Pr}\left(X_{0} \ne (-1)^{\vec{j}\cdot\vec{i}}X_{i} | \Omega_{j}, \text{KG}\right).
		\end{equation}
	\end{enumerate}
	The cost of fixing these errors via classical error correction is given by the term $H_2(Q_{X_0, X_i}^j)$ in the key rate formula.
	
	\paragraph{The Phase-Error Rate (An Indirect Estimation for Privacy Amplification)}
	The ``phase-error rate", $Q_Z^j$, is a theoretical quantity that bounds Eve's maximum possible information about the key. It represents the error rate that would have occurred if the parties had used a complementary basis for key generation. This quantity cannot be measured directly. Instead, the parties must calculate a secure upper bound, $\overline{Q}_Z^j$, by following a rigorous, three-step estimation chain using data from their PE rounds~\cite{Carrara2023}.
	
	\begin{enumerate}
		\item \textbf{Step 1: Measure the Gains.} The process begins with the raw experimental data from the PE rounds. For each combination of decoy intensities $\vec{\beta} = (\beta_0, ..., \beta_{N-1})$ sent by the parties, they measure the probability of observing a single-click event $\Omega_j$. This directly observable probability is called the multipartite \textbf{gain}, $G_{\vec{\beta}}^j$.
		
		\item \textbf{Step 2: Estimate the Yields.} The measured gains are a statistical mixture of events from all possible photon-number combinations that the parties' WCPs could have emitted. The gain is related to the underlying photon-number-resolved \textbf{yields} ($Y_{\vec{n}}^j = Y_{n_0,...,n_{N-1}}^j$) by the multipartite decoy-state equation:
		\begin{equation}
			G_{\vec{\beta}}^{j} = \sum_{n_{0},...,n_{N-1}=0}^{\infty} \left[ \prod_{i=0}^{N-1} P_{\beta_{i}}(n_{i}) \right] Y_{n_{0},...,n_{N-1}}^{j}.
		\end{equation}
		By creating a system of these equations for their different decoy settings, the parties use Linear Programming (LP) to solve for the unknown yields. This procedure provides secure upper bounds on the yields, $\overline{Y}_{\vec{n}}^j$, for all relevant photon-number components, including the ``bad" multi-photon states that are vulnerable to eavesdropping.
		
		\item \textbf{Step 3: Calculate the Phase-Error Rate Bound.} Finally, the parties take the yield bounds $\overline{Y}_{\vec{n}}^j$ they just found and insert them into the master formula for the phase-error rate bound, which is derived from the security proof. This formula has the structure~\cite{Carrara2023}:
		\begin{equation}
			\overline{Q}_{Z}^{j} \propto \sum_{v} \left\{ \left[ \sum_{\overline{n}} \left( \prod_{i=0}^{N-1} c_{i,n_{i}}^{(v_{i})} \right) \sqrt{\overline{Y}_{\overline{n}}^{j}} \right] + \Delta \right\}^{2}.
		\end{equation}
		This calculation combines the contributions of all bounded yields to produce a single, conservative upper bound on the phase-error rate, $\overline{Q}_Z^j$.
	\end{enumerate}
	This final bound tells the parties how much of their key they must ``shred" during privacy amplification to eliminate Eve's knowledge. The cost of this step is given by the term $H_2(\overline{Q}_Z^j)$.
	
	A full derivation of the multipartite decoy-state equations and the two-decoy
	linear-program construction used in this chapter is provided in 
	Appendix~\ref{sec:cka_lp}.

	\section{Methods}
	\label{sec:methods}
	
	\subsection{Setup}
	\label{subsec:setup}
The protocol is designed for a network of $N_U$ users, each equipped with a special type of optical source (see Fig.~\ref{fig:ckasetup}). 
	All users are connected via quantum channels to a central, untrusted relay station where the interference measurements take place.
	
\begin{figure}[t]
	\centering
	\includegraphics[width=0.72\linewidth]{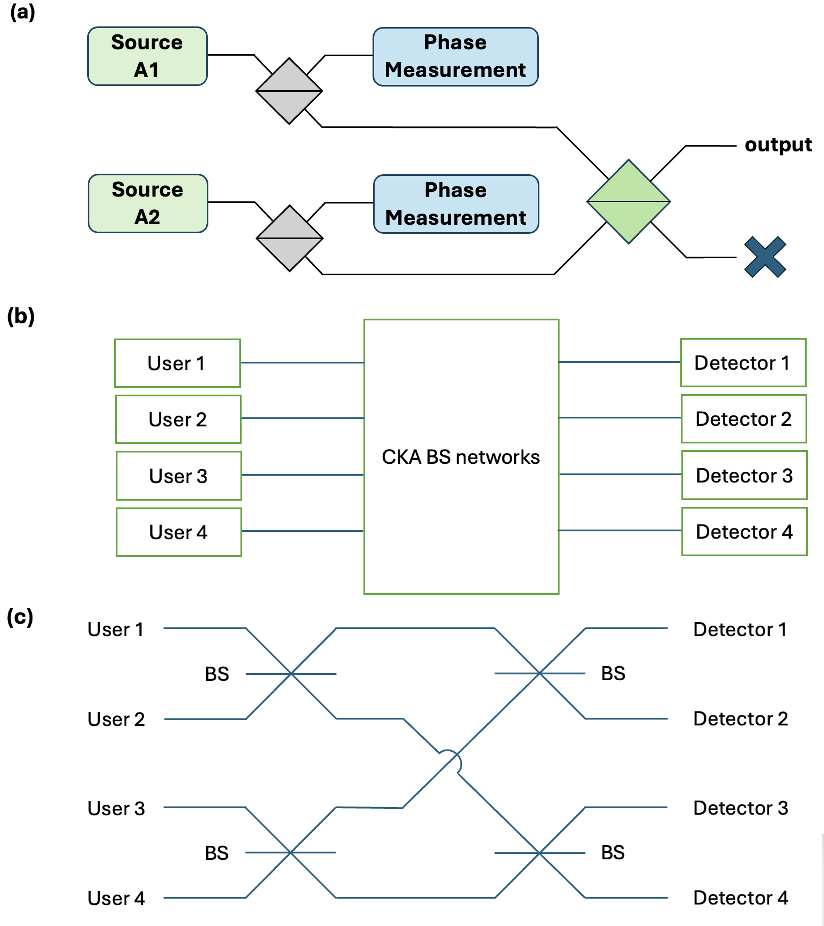}
	\caption{
		(a) Fully passive source for a single user: two independent lasers
		generate pulses with random phases, which are interfered on a local
		50{:}50 beam splitter to produce the outgoing signal, adapted from
		Ref. \cite{Wang2023}.  
		(b) Schematic architecture of the multiuser CKA relay, consisting of a
		layered network of 50{:}50 beam splitters that distribute each input
		mode over all output ports.  
		(c) Example of a four-user beam-splitter network implementing the
		linear optical transformation used in CKA.  
		Panels (b) and (c) are adapted from Ref.~\cite{Carrara2023}.  
	}
	\label{fig:ckasetup}
\end{figure}

	\subsubsection*{The Fully Passive Source}
	The core of this protocol is the unique design of each user's source, which is ``fully passive." This means it contains no active optical modulators, thereby eliminating a major class of security vulnerabilities. The design is as follows~\cite{Wang2023,Li2024CKA}:
	\begin{itemize}
		\item \textbf{Signal Generation:} Each user, $A_i$, employs two independent laser sources. In each round, these lasers produce optical pulses with random and uncorrelated phases, which I denote as $\phi_{i,1}$ and $\phi_{i,2}$. For simplicity, I assume the average intensity from both lasers is identical, at $\mu_{max}/2$.
		\item \textbf{Local Interference and Monitoring:} The two pulses are combined on a local 50:50 beam splitter (BS) inside the user's lab. One output port of this BS serves as the final optical signal that is transmitted to the relay. Crucially, the user employs classical detectors to locally measure and record the initial phases, $\phi_{i,1}$ and $\phi_{i,2}$, of the two lasers.
		\item \textbf{Resulting Output State:} Due to the random interference, the final transmitted signal has a phase and intensity that are randomly determined by the phase difference $(\phi_{i,1} - \phi_{i,2})$. The output signal's intensity, $\mu_i$, can take any value in the range $[0, \mu_{max}]$, while its phase, $\phi_i$, can be any value in $[0, 2\pi)$.
	\end{itemize}
	This setup generates a continuous distribution of quantum states. To make this useful for the protocol, the parties discretize this space during post-processing. The phase circle is partitioned into $M$ equal sectors, or ``slices," each of width $2\pi/M$. For parameter estimation, the continuous intensity range is similarly coarse-grained into a small number of bins, corresponding to the different decoy states~\cite{Wang2023}.
	
	\subsubsection*{The CKA Relay and Network}
	All $N_U$ users send their passively generated optical signals to a central, untrusted relay station. The purpose of this relay is to perform a multipartite interference measurement.
	\begin{itemize}
		\item \textbf{Network Architecture:} The relay contains a passive optical circuit composed of multiple layers of 50:50 beam splitters. For a system designed to handle up to $M=2^s$ users, this network consists of $s$ layers, with each layer containing $2^{s-1}$ beam splitters~\cite{Carrara2023}.
		\item \textbf{Optical Transformation:} This network is engineered to perform a specific linear transformation on the incoming optical modes. The creation operator of a photon from user $i$, $\hat{a}_i^\dagger$, is mapped to a superposition of creation operators across all $N_D = 2^s$ output detectors, $\hat{d}_j^\dagger$. This transformation is described by~\cite{Carrara2023}:
		\begin{equation}
			\hat{a}_{i}^{\dagger} \rightarrow \frac{1}{\sqrt{N_{D}}}\sum_{j=0}^{N_{D}-1}(-1)^{\vec{j}\cdot\vec{i}}\hat{d}_{j}^{\dagger},
		\end{equation}
		where $\vec{i}$ and $\vec{j}$ are the binary vector representations of the user's index and the detector's index, respectively.
	\end{itemize}
	
	\subsubsection*{Channel and Detection Model}
	\begin{itemize}
		\item \textbf{Quantum Channels:} Each link between a user and the central relay is modeled as a standard lossy optical fiber. The channel is characterized by a loss coefficient (e.g., $0.2 \text{ dB/km}$) and may also introduce small, random phase and polarization misalignments.
		\item \textbf{Detection:} The relay is equipped with $N_D$ standard threshold single-photon detectors, one for each output port of the beam splitter network. These detectors are characterized by a quantum efficiency $\eta_d$ and a dark count probability $p_d$. A successful experimental outcome for any given round is defined as a ``single-click event", where exactly one of the $N_D$ detectors fires. All other events (no clicks or multiple clicks) are discarded.
	\end{itemize}

	\subsection{Protocol}
	\label{subsec:protocol}
	The execution of the fully passive CKA protocol can be divided into two main stages: a physical transmission stage where quantum signals are exchanged, followed by a classical post-processing stage where the raw data is refined into a secret key~\cite{Li2024CKA}.
	
	\subsubsection*{1. State Transmission and Measurement}
	In each experimental round, the physical process is identical and straightforward:
	\begin{itemize}
		\item Each user, $A_i$, generates a random optical signal using their fully passive source as described in the setup. No active modulation is performed.
		\item While the quantum signal is sent to the central relay, each user locally measures and records the classical phases, $\phi_{i,1}$ and $\phi_{i,2}$, from their two internal lasers. This local classical data is kept private for now.
		\item The central relay, Charlie, performs the interference measurement and publicly announces the outcome. A round is considered successful only if a single-click event, $\Omega_j$, occurs, where $j$ identifies which detector fired. All other rounds are immediately discarded.
	\end{itemize}
	
	\subsubsection*{2. Classical Post-Processing}
	After a large number of transmission rounds have been completed, the parties begin the classical post-processing phase. This is where the raw detection events are sorted and analyzed to generate the key.
	
	\paragraph{Late Assignment of Roles (KG vs. PE)}
	A key advantage of the fully passive protocol is the concept of ``late basis choice." The physical signals sent for key generation and parameter estimation are identical. The decision on how to use each round's data is made only after the quantum communication is finished~\cite{Li2024CKA}:
	\begin{itemize}
		\item The reference party, $A_0$, publishes a long, random binary string.
		\item This string retroactively assigns each successful round to one of two groups: \textbf{Key Generation (KG)} rounds with probability $p_X$, or \textbf{Parameter Estimation (PE)} rounds with probability $1-p_X$.
	\end{itemize}
	This procedure completely eliminates the need for multi-user basis sifting, which is a major source of inefficiency in active CKA protocols. For $N_U$ users, this provides a sifting efficiency advantage of $1/p_X^{N_U-1}$ over a comparable active scheme~\cite{Li2024CKA}.
	
	\paragraph{Processing of Key Generation (KG) Rounds}
	For the rounds designated as KG, the parties use their locally recorded phases to distill a raw key:
	\begin{itemize}
		\item Each user $A_i$ takes their two measured phases, $\phi_{i,1}$ and $\phi_{i,2}$, and determines which of the $M$ phase ``slices" each phase falls into, yielding two integer indices, $k_{i,1}$ and $k_{i,2}$.
		\item In the simplest version of the protocol, a round is accepted for the raw key only if a strict alignment condition is met: for \textbf{all} users, their phase indices must be either $(k_{i,1}, k_{i,2}) = (1, 1)$ or $(k_{i,1}, k_{i,2}) = (M/2+1, M/2+1)$.
		\item A classical bit is then assigned: the ``all top slice" case ($k_{i,1}=k_{i,2}=1$ for all $i$) corresponds to bit `0`, while the ``all bottom slice" case ($k_{i,1}=k_{i,2}=M/2+1$ for all $i$) corresponds to bit `1`.
	\end{itemize}
	
	\paragraph{Processing of Parameter Estimation (PE) Rounds}
	For the rounds designated as PE, the parties use their local data to gather the statistics needed for the decoy-state analysis:
	\begin{itemize}
		\item Each user $A_i$ calculates the absolute difference between their two measured phases, $|\phi_{i,1} - \phi_{i,2}|$. This value is directly related to the intensity of the signal they transmitted in that round.
		\item Each user ``bins" their calculated intensity into one of a small number of predefined ranges (e.g., ``low," ``medium," ``high" for a three-decoy protocol).
		\item All users publicly announce which intensity bin their signal belonged to. The combination of these announcements (e.g., ``Alice sent low, Bob sent high, ...") defines a joint decoy state setting for that round, which is used to measure the gains for the multipartite decoy-state analysis~\cite{Wang2023}.
	\end{itemize}
	
	\subsubsection*{Strategy to Utilize All Slice Combinations}
	The strict post-selection rule for KG rounds, accepting only events where all users fall into either slice 1 or slice M/2+1, is conceptually simple but extremely inefficient, discarding the vast majority of successful detection events. A crucial insight of the fully passive framework is that this enormous sifting loss is not necessary. In principle, a secret key can be extracted from \emph{any} round, as long as the phase relationship between the users is known~\cite{Wang2023,Li2024CKA}.
	
	The underlying physics is that the interference outcome at the relay (i.e., which detector, $\Omega_j$, clicks) is deterministically governed by the relative phases of the signals sent by all users.
	\begin{itemize}
		\item In the strict case, when everyone sends a signal from slice 1 (bit `0'), the phase relationships are simple and well-defined, leading to a predictable interference pattern and a clear rule for correlating the final key bits.
		\item Now, consider a ``mismatched" event: for instance, in a two-user case, Alice's signal comes from slice 1, while Bob's comes from slice 2. Their signals still interfere at the relay. Because both Alice and Bob know which slice their signal came from, the relative phase between them is still known to within a small tolerance.
	\end{itemize}
This means that for this specific input combination (`Alice=slice 1', `Bob=slice 2'), the resulting interference pattern is just as predictable as in the perfectly aligned case. The correlation rule simply changes. The parties can use their knowledge of the slice combination and the detector outcome $\Omega_j$ to calculate what the expected correlation should be. They then apply the necessary classical corrections (bit-flips) during post-processing to align their key strings.
	
	This principle generalizes to all $M^{2N_U}$ possible combinations of slice choices. Each combination has its own unique ``recipe" for correlating the bits based on the detector outcome. Instead of a protocol limited by a tiny sifting factor, the total key rate becomes a large sum over the secure key contributions from every single combination. This transforms the sifting problem into a computational one, dramatically improving the protocol's overall efficiency and final key rate~\cite{Li2024CKA}.

\subsection{Security Framework}
\label{subsec:security}

The security proof for the fully passive CKA protocol largely follows that of its active counterpart, as the MDI-based architecture removes all detector-side vulnerabilities. 
A key technical step is to model the fully passive source using two equivalent formulations, illustrated in Fig.~\ref{fig:securitymodel}.  
The main challenge and modification arise from the imperfect state preparation inherent in the fully passive sources. 
This section details the framework used to rigorously account for these source imperfections.

The detailed mathematical forms of the imperfect-source model, the yield
correction integrals, and the phase-error estimation expressions used in
this chapter are collected in Appendix~\ref{sec:cka_security}.

\begin{figure}[H]
	\centering
	\includegraphics[width=0.85\linewidth]{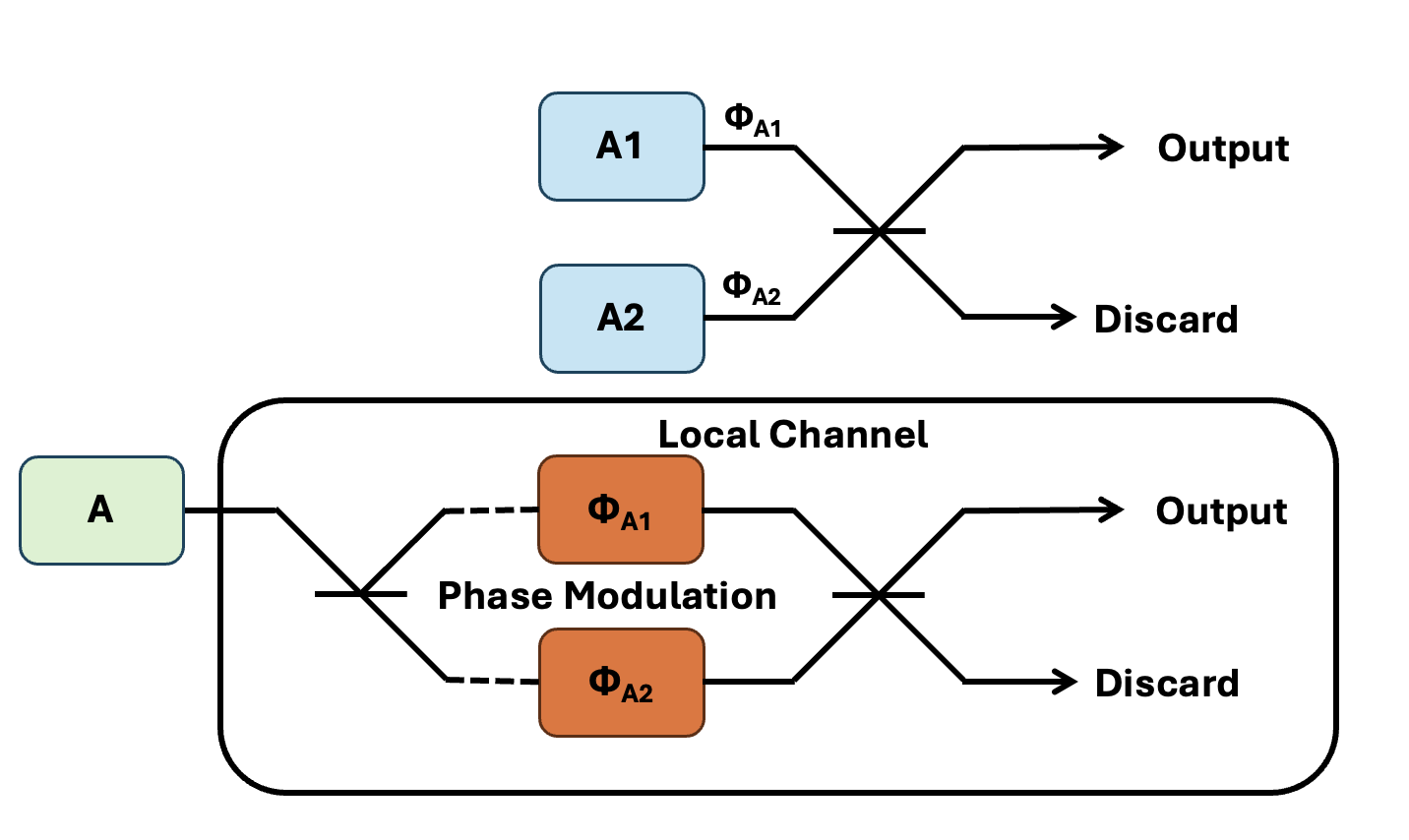}
	\caption{
		Two equivalent descriptions of the fully passive source model used in
		the security analysis. (a) In the first picture, the user employs two independent
		random-phase sources whose outputs interfere at a beam splitter to form
		the transmitted signal. (b) In the second, operationally equivalent formulation, a single
		optical field is split into two arms, each undergoing an independent
		random phase modulation before being recombined at a beam splitter.
		The pair of phase modulators and the beam splitters together define the
		effective channel $\mathcal{E}_{A}$. \textit{Figure reproduced from Figs.~3 and~4 of
			W.~Wang, R.~Wang, and H.-K.~Lo,
			``Fully-Passive Twin-Field Quantum Key Distribution,'' 
			(2023) \cite{Wang2023}, and subsequently incorporated into my passive CKA work
			\cite{Li2024CKA}.}
	}
	\label{fig:securitymodel}
\end{figure}

	\subsubsection*{The Challenge: Imperfect Source Preparation}
	In an active protocol, prepared states are assumed to be perfect. In my passive scheme, however, a state is defined by post-selecting a signal whose generating phases fall within a ``slice" of width $2\Delta$. This means the actual phase of the signal is not a single value but is randomly distributed within a small range, e.g., $[-\Delta, \Delta]$. This unavoidable phase fluctuation is the key source imperfection that the security proof must address.
	
	\subsubsection*{The Solution: The Equivalent Local Channel Model}
	To analyze this imperfection, I adopt a powerful conceptual tool: the equivalent local channel model~\cite{Wang2023}. I model each user's imperfect passive source as being mathematically equivalent to a two-step process:
	\begin{enumerate}
		\item A perfect, hypothetical source inside the user's lab emits a signal in a pure quantum state with a precise phase.
		\item This perfect signal then passes through a ``local channel," denoted $\mathcal{E}_{A_i}$, which is also inside the user's lab. This channel consists of random phase modulations and beam splitters that precisely replicate the physical effect of the phase uncertainty within a slice.
	\end{enumerate}
	This allows us to separate the ideal state preparation from the physical imperfection, which is now encapsulated entirely within the local channel $\mathcal{E}_{A_i}$.
	
	\subsubsection*{The Conservative Security Assumption}
	To guarantee security, I make the most pessimistic assumption possible: I yield this entire local channel $\mathcal{E}_{A_i}$ to the eavesdropper, Eve. In this picture, Eve not only controls the external fiber and the relay, but also the small amount of local optics that model my source imperfection. This has two immediate consequences:
	\begin{itemize}
		\item The Quantum Bit Error Rate (QBER) measured in the Key Generation rounds will naturally be higher, as the phase fluctuations in $\mathcal{E}_{A_i}$ will contribute to errors.
		\item The yields estimated from the Parameter Estimation rounds must be corrected to account for the effects of this local channel.
	\end{itemize}
	
	\subsubsection*{Correcting the Yields for Security Analysis}
	The yields estimated via the multipartite decoy-state analysis, which I can denote $Y_{m_A,m_B,...}^j$, only reflect the properties of the channel ``external" to the users' labs. However, the phase-error rate formula must be applied to the yields of the hypothetical perfect states before they entered the local channels.
	
	Therefore, I must perform a ``yield correction." I mathematically reverse the effect of the local channel $\mathcal{E}_{A_i}$ to find the true yields, $Y_{n_A,n_B,...}^j$, corresponding to the initial perfect states. This correction is a convolution that averages over all possible transformations within the local channels. For a four-user case, the corrected yield is given by~\cite{Wang2023}:
	\begin{equation}
		Y_{n_{A}n_{B}n_{C}n_{D}}^{j} = \int \sum_{m_{A}...m_{D}} \left[ \prod_{i=A}^{D} P_{\phi_{i1},\phi_{i2}}(m_{i}|n_{i}) \right] Y_{m_{A}m_{B}m_{C}m_{D}}^{j} d\phi_{A1}d\phi_{A2}\cdots d\phi_{D2}.
	\end{equation}
	Here, $Y_{m_A m_B m_C m_D}^{j}$ are the yields estimated from the PE data (characterizing the external channel), and $P_{\phi_{i1},\phi_{i2}}(m_i|n_i)$ is the probability that an initial state of $n_i$ photons becomes a state of $m_i$ photons after passing through user $i$'s local channel with random phase modulations $\phi_{i1}$ and $\phi_{i2}$. This conditional probability is given by~\cite{Wang2023}:
\begin{align}
	P_{\phi_{A1},\phi_{A2}}(m_{A}|n_{A}) 
	={}& \Biggl| \frac{1}{2!} \sqrt{\frac{n_{A}!}{m_{A}!(n_{A}-m_{A})!}} \notag \times \left\{ 1 + e^{i[\pi/2+(\phi_{A2}-\phi_{A1})]} \right\}^{m_{A}}  \\ 
	& \times \left\{ 1 - e^{i[\pi/2+(\phi_{A2}-\phi_{A1})]} \right\}^{(n_{A}-m_{A})} \Biggr|^2.
	\label{eq:prob_phi}
\end{align}
	After calculating these ``corrected yields" $Y_{n_A n_B n_C n_D}^{j}$, they are then inserted into the standard phase-error rate formula for active CKA. This procedure ensures that the calculated phase-error bound is secure and rigorously accounts for the imperfections of the passive sources.

	The explicit corrected-yield formulas, distributions $P_{\phi_{i1},\phi_{i2}}(m_i|n_i)$, and the full phase-error estimation
	procedure are provided in Appendix~\ref{sec:cka_security} for reference.
	
	\subsection{Key Rate Formulation}
	\label{subsec:keyrate}
	The overall asymptotic secret key rate, R, for the fully passive CKA protocol is an average over all possible experimental outcomes. The final formula, which utilizes all possible combinations of phase slices to maximize efficiency, is given by~\cite{Li2024CKA}:
	\begin{equation}
R = \frac{1}{M^{2N_{U}}} \sum_{k_{A1}=1}^{M} \cdots \sum_{k_{D2}=1}^{M} \max \left[ 0, \sum_{j=0}^{N_{D}-1} \mathrm{Pr}(\Omega_{j} | \mathrm{KG}, k) \cdot r_{j}(k) \right].
	\end{equation}
	To understand this expression, I can break it down into its constituent parts.
	
	For comparison, Appendix~\ref{sec:cka_active} summarizes the corresponding key-rate 
	expression for the active CKA protocol, which my passive formulation 
	generalizes.

	\subsubsection*{Breakdown of the Formula}
	\begin{itemize}
		\item R: This is the total secret key rate, the final performance metric of the protocol, given in bits per transmitted signal pulse.
		
		\item $\frac{1}{M^{2N_U}}$: This is the normalization factor. It is not a sifting factor that causes loss. Since each of the $2N_U$ locally measured phases can randomly fall into one of M slices, there are $M^{2N_U}$ possible outcomes in total. This pre-factor simply averages the key rate contributions over all these equally probable outcomes.
		
		\item $\sum_{k_{A1}=1}^{M} \cdots \sum_{k_{D2}=1}^{M}$: This is the summation over all slice combinations. This is the mathematical representation of the strategy to use all experimental data. Instead of discarding mismatched events, I calculate the key contribution from every single combination of slice choices, $\mathbf{k} = (k_{A1}, k_{A2}, ..., k_{D2})$, and sum them up.
		
		\item $\sum_{j=0}^{N_D-1}$: This is the summation over all detector events. For any given slice combination, a single click can occur at any of the $N_D$ detectors. The total key rate for that combination is the sum of the contributions from each possible single-click event $\Omega_j$.
		
		\item $\text{Pr}(\Omega_j|\text{KG}, \mathbf{k})$: This is the conditional probability of observing a single click at detector $j$, given that the parties' phases corresponded to the specific slice combination $\mathbf{k}$.
		
		\item $r_j(\mathbf{k})$: This is the secret fraction, which represents the amount of secure key (in bits) that can be generated from a single, specific event (a given slice combination $\mathbf{k}$ and a given detector click $j$). It is defined as~\cite{Carrara2023}:
		\begin{equation}
			r_{j}(k) = 1 - H_2(\overline{Q}_{Z}^{j}(k)) - \max_{i \ge 1} \left\{ H_2(Q_{X_{0},X_{i}}^{j}(k)) \right\}.
		\end{equation}
		The components of the secret fraction are:
		\begin{itemize}
			\item $H_2(Q_{X_0,X_i}^j(\mathbf{k}))$ is the cost of error correction, determined by the worst-case Quantum Bit Error Rate (QBER) for that specific event.
			\item $H_2	(\overline{Q}_Z^j(\mathbf{k}))$ is the cost of privacy amplification, determined by the upper bound on the phase-error rate for that event, which is calculated using the corrected yields as described in the security framework.
		\end{itemize}
		
		\item $\max[0, \dots]$: This function simply ensures that if, for a particular slice combination, the costs of error correction and privacy amplification exceed the raw bit of information, its contribution to the total key rate is set to zero.
	\end{itemize}

	\section{Results and Discussion}
	\label{sec:results}
	
	\subsection{Numerical Configuration}
	\label{subsec:numcfg}
	To evaluate the performance of the protocol, I conducted a numerical simulation of a four-user fully passive CKA network. Unless specified otherwise, the simulation uses the following baseline parameters.
	
	To evaluate the protocol's performance, I simulate a four-user fully passive CKA network. The channel between each user and the relay is modeled as a standard optical fiber with a loss of 0.2 dB/km, and the relay's single-photon detectors have a dark count probability of $p_d = 10^{-8}$. For the passive source post-selection, each user divides their phase circle into M=8 slices. I simulate a two-decoy state protocol for parameter estimation, where each user evenly divides their output intensity range into two bins, the joint announcements from all four users then define the decoy cells for the linear programming analysis. To ensure a fair comparison with the active CKA protocol, the mean input intensity of the passive sources is set to be the same as the optimized intensity used in the active CKA simulation. The baseline simulation assumes perfect alignment between all users.
	
	The numerical evaluation uses a custom C++ implementation incorporating
	the Cuba library for high-dimensional integration and Gurobi for the 
	linear-program steps.  The branch-cutting acceleration used to prune
	negligible slice combinations is described in Appendix~\ref{sec:cka_branch}.

	\subsection{Key Rate Performance}
	\label{subsec:perf}
	Figure \ref{fig:rate_vs_loss} shows the simulated asymptotic key rate as a function of channel loss for the four-user network. The performance of the fully passive CKA is compared against that of its active counterpart~\cite{Li2024CKA}.
	
	The results show that, as expected, the introduction of passive sources reduces the overall key rate. Specifically, when using theoretical (exact) yields, the key rate of the passive protocol is approximately two to three orders of magnitude lower than that of the active protocol. When using a more practical two-decoy linear program to estimate the yields, the key rate is reduced by an additional order of magnitude. This second reduction is due to the fact that using a limited number of decoys provides looser bounds on the photon-number yields, which translates to a more conservative (and thus lower) key rate.
	
	Despite this reduction in performance, the simulation confirms that the protocol remains viable. The four-user fully passive CKA can achieve a secure key rate over a communication loss of about 28 dB. This demonstrates that while some performance is sacrificed, it is in exchange for a significantly higher level of implementation security by removing all side-channels associated with source modulators.
	
	\begin{figure}[H]
		\centering
		\includegraphics[width=0.82\linewidth]{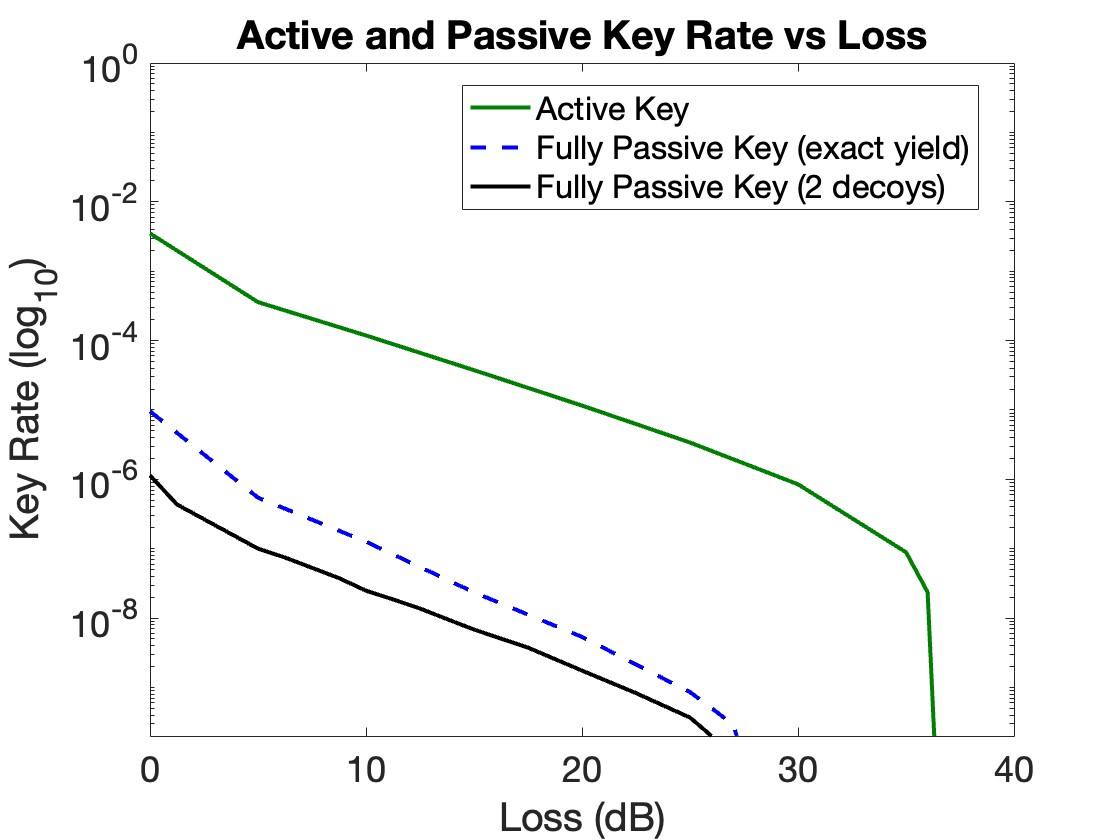}
		\caption{
Asymptotic secret-key rate versus channel loss for a four-user
conference-key-agreement network. The performance of the fully passive architecture is compared with that
of an actively modulated CKA scheme under identical system parameters. For consistency, the active protocol’s optimized intensities are reused
in the passive simulation. The passive model employs $M=8$ phase partitions and incorporates the
branch-cutting refinement for postselection. All results assume a detector dark-count probability of $10^{-8}$. 
		}
		\label{fig:rate_vs_loss}
	\end{figure}

	\subsection{Efficiency and Protocol Advantages}
	\label{subsec:efficiency}
	The fully passive CKA protocol offers several important advantages over active schemes, particularly concerning efficiency in a multi-user setting~\cite{Wang2023,Li2024CKA}.
	\begin{itemize}
		\item Absence of Basis Sifting: A significant advantage is the concept of ``late basis choice." Since the physical signals are identical for every round, the decision to use a round for key generation (KG) or parameter estimation (PE) is made classically after all quantum communication is finished. This completely eliminates the need for basis sifting among the users, a process that becomes increasingly inefficient as the number of users, $N_U$, grows.
		\item Resistance to Misalignment: As the final key rate is calculated by summing the contributions from all possible slice combinations, the protocol is highly robust against phase misalignment. If a small misalignment causes the phase of a signal to shift from one slice to a neighboring one, its contribution to the key rate is not lost. Instead, it is simply accounted for in a different term of the overall sum. This ``soft-landing" behavior ensures that the total key rate remains stable even with practical imperfections.
	\end{itemize}
	
	\subsection{Computational Acceleration: The Branch Cutting Method}
	\label{subsec:branchcut}

	Calculating the total key rate requires summing over all $M^{2N_U}$ possible combinations of phase slices. For a four-user protocol with M=8 slices, this amounts to $8^8$ terms, which is computationally prohibitive. To make the simulation feasible, I developed a ``branch cutting" method to intelligently prune the summation~\cite{Li2024CKA}.
	
	The key insight is that combinations where the phases are highly mismatched contribute negligibly to the final key rate. I therefore discard any slice combination that meets one of the following criteria:
	\begin{itemize}
		\item Large intra-user phase mismatch: The difference between a single user's two measured phase slices is too large. Formally, for any user $i$, I require $|k_{i1} - k_{i2}| \le x$.
		\item Large inter-user phase mismatch: The difference between the average phase of any two users is too large. Formally, for any pair of users $i$ and $j$, I require $|\frac{k_{i1}+k_{i2}}{2} - \frac{k_{j1}+k_{j2}}{2}| \le y$.
	\end{itemize}
	In my simulation, I used thresholds of $x=2$ and $y=2$ for $M=8$ slices. This method dramatically reduces the number of computations required while having a negligible impact on the calculated key rate, as it only removes terms that would have contributed near-zero key.
	
		A formal definition of the branch-cutting criteria used in the simulation
	is provided in Appendix~\ref{sec:cka_branch}.

	\subsection{Outlook}
	\label{subsec:practical}
	This work serves as a proof-of-concept for applying the fully passive architecture to multipartite QKD. Several avenues remain for future work and practical implementation.
	Future works may include: The key rate performance, particularly for the decoy-state case, could be improved by implementing a three-decoy (or more) analysis to obtain tighter bounds on the yields. With access to greater computational resources, the branch cutting thresholds could be relaxed, and a full optimization of the input intensity and number of slices (M) could be performed. Extending the analysis to larger numbers of users and conducting a full finite-size security analysis are also important next steps.

	\chapter{QUANTUM SPEED LIMITS: BACKGROUND AND THEORETICAL FOUNDATIONS}
	\label{chap:qsl_background}
	
	\section{The Fundamental Question}
	\label{sec:qsl_question}
	
	How fast can a physical system evolve? Can the evolution time be arbitrarily short? This is one of the most fundamental questions in physics. Quantum speed limits (QSLs) provide the answer by establishing rigorous lower bounds on the minimal time, $\tau_{min}$, required for a quantum system to evolve from an initial state $\rho_0$ to a target state $\rho_\tau$~\cite{Deffner17a}.
	
	Unlike relativistic constraints (e.g., the speed of light) which limit the propagation of information in spacetime, QSLs concern the rate of change in the system's abstract state space. They quantify the ultimate performance limits for any quantum process, thereby constraining the ``clock speed" of quantum computers~\cite{Lloyd00}, setting precision-time tradeoffs in quantum metrology~\cite{Giovannetti04}, and defining power constraints in quantum thermodynamics~\cite{Binder2015}. A rigorous understanding of QSLs is therefore indispensable both for foundational insight and for setting practical, physically impassable benchmarks for quantum technologies given the value of certain observable of the system.
	
	At a high level, most QSLs can be written in the schematic form
	\begin{equation}
		\label{eq:qsl_schematic}
		\tau \ge \frac{\text{distance}(\rho_0, \rho_\tau)}{\text{average dynamical speed.}}
	\end{equation}
	Here, the ``distance'' is a measure of distinctness or distinguishability between the initial and final states. It need not be a formal metric. The ``speed" is set by a suitable functional of the generator of dynamics (such as its energy variance or mean, or, in more general formulations, a norm of its action on the state). The specific choice of distance and speed functional determines the particular bound and its regime of applicability. This chapter develops the historical context, core concepts, and modern formulations necessary to situate the new class of speed limits introduced in the next chapter.

	\section{Historical Foundations}
	\label{sec:qsl_history}
	
	\subsection{From Time-Energy Uncertainty to Minimal-Time Bounds}
	\label{subsec:qsl_mt_ml}
	
	The earliest QSLs were derived for closed systems evolving unitarily under a time-independent Hamiltonian, $H$. These bounds are often seen as a formal expression of the time-energy uncertainty principle.
	
	In 1945, Mandelstam and Tamm (MT) derived a bound based on the system's energy variance. For a pure state $|\psi_t\rangle$ evolving to an orthogonal state, the minimal time $\tau$ is bounded by~\cite{Mandelstam45}:
	\begin{equation}
		\label{eq:mt_bound}
		\tau \ge \frac{\pi\hbar}{2\Delta E},
	\end{equation}
	where $\Delta E^2 \equiv \langle\psi_0|H^2|\psi_0\rangle - \langle\psi_0|H|\psi_0\rangle^2$ is the variance of the system's energy. This bound is state-dependent and is tightest for states with a large energy spread.
	
	Over half a century later, Margolus and Levitin (ML) derived a complementary bound, this time dependent on the system's mean energy relative to its ground state energy, $E_0$~\cite{Margolus98}:
	\begin{equation}
		\label{eq:ml_bound}
		\tau \ge \frac{\pi\hbar}{2(\langle H \rangle - E_0)},
	\end{equation}
	where $\langle H \rangle = \langle\psi_0|H|\psi_0\rangle$. This bound is most restrictive for systems with a high average energy. For a closed, time-independent system, the true minimal evolution time is constrained by the tighter (larger) of these two bounds, leading to the unified QSL~\cite{Levitin2009}:
	\begin{equation}
		\label{eq:unified_closed}
		\tau \ge \max\left[ \frac{\pi\hbar}{2\Delta E}, \frac{\pi\hbar}{2(\langle H \rangle - E_0)} \right].
	\end{equation}
These foundational results crystallized the connection between evolution time and energy, launching a line of inquiry that now encompasses general open dynamics and time-dependent control~\cite{Giovannetti03,Deffner13,Deffner17a}.

	\subsection{A Geometric Interpretation of Evolution Speed}
	\label{subsec:qsl_geometric}

	\begin{figure}[t]
		\centering
		\includegraphics[width=0.55\textwidth]{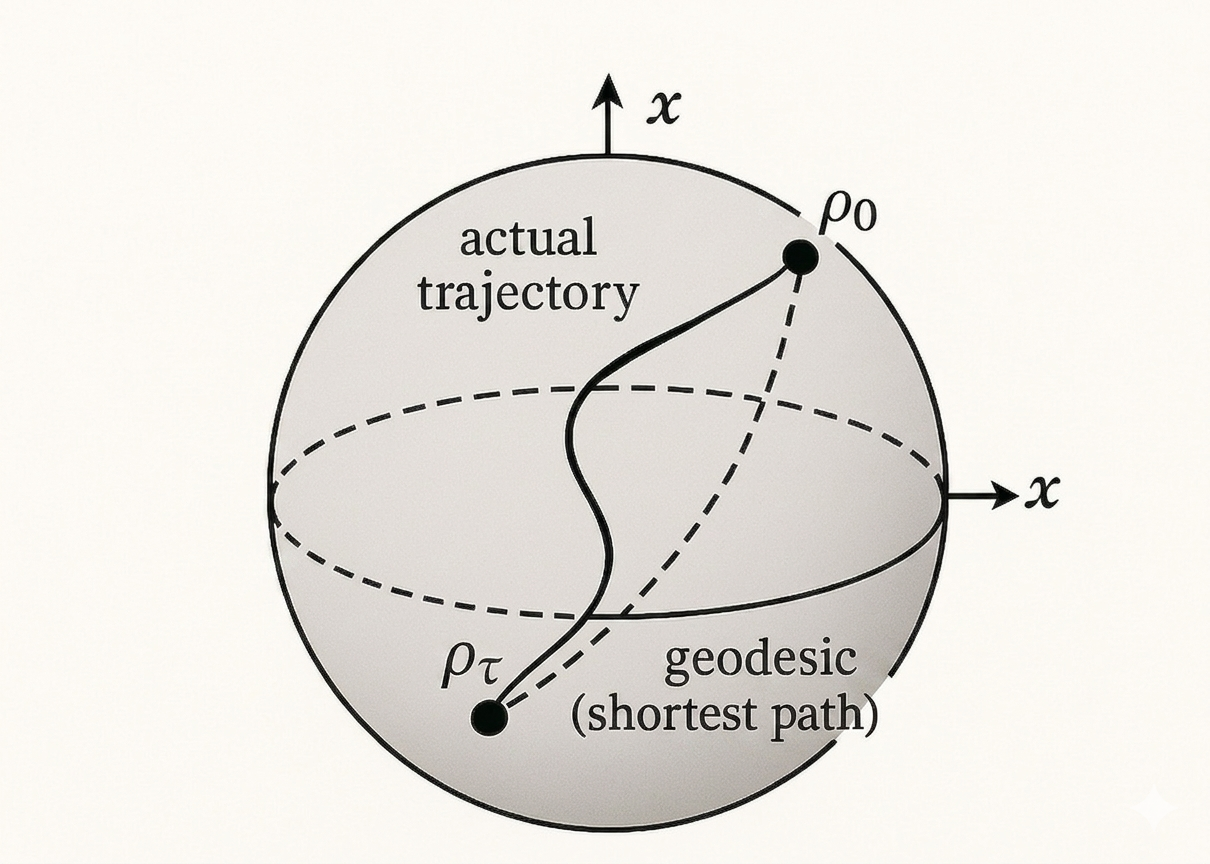}
		\caption{
			\textbf{Geometric interpretation of the Quantum Speed Limit on the Bloch sphere.}
			The system evolves from an initial state $\rho_0$ to a target state $\rho_\tau$ along
			its actual dynamical trajectory (solid curve). 
			The QSL follows from the requirement that the length of this actual path must be at
			least the geodesic distance (dashed curve), representing the minimal distance between
			states under the chosen quantum metric (e.g., Bures angle).
		}
		\label{fig:qsl_geodesic}
	\end{figure}

	A more general and intuitive perspective was introduced by Anandan and Aharonov, who reframed unitary dynamics as a path through the projective Hilbert space. The ``speed" of evolution is the rate at which the state vector traces this path~\cite{Anandan90}.
	
	For pure states, the natural measure of distance is the Fubini-Study metric. The (infinitesimal) squared distance $ds^2$ between a state $|\psi\rangle$ and a nearby state $|\psi + d\psi\rangle$ is given by
	\begin{equation}
		\label{eq:fs_metric}
		ds^2 = 4 \left( \langle d\psi | d\psi \rangle - |\langle \psi | d\psi \rangle|^2 \right),
	\end{equation}
	assuming $\langle \psi | \psi \rangle = 1$. The total distance $S$ between two arbitrary pure states $|\psi_0\rangle$ and $|\psi_\tau\rangle$ is the length of the shortest path (geodesic) connecting them, given by $S = 2 \arccos(|\langle \psi_0 | \psi_\tau \rangle|)$. The MT bound can be re-derived directly from this. Using the Schrödinger equation, the instantaneous speed of evolution $v(t) = ds/dt$ along the actual path is found to be~\cite{Anandan90}:
	\begin{equation}
		\label{eq:fs_speed}
		v(t) = \frac{ds}{dt} = \frac{2}{\hbar} \sqrt{\langle H^2 \rangle - \langle H \rangle^2} = \frac{2\Delta E_t}{\hbar}.
	\end{equation}
	The total length $L$ of the path traced by the state in time $\tau$ is $L = \int_0^\tau v(t) dt$. Since the actual path length must be greater than or equal to the shortest distance $S$ ($L \ge S$), an evolution to an orthogonal state ($S = \pi$) requires $\int_0^\tau (2\Delta E_t/\hbar) dt \ge \pi$. For a time-independent $H$, this simplifies to $\tau \ge \pi\hbar / (2\Delta E)$, which is the MT bound.
	
	The power of this geometric picture, however, is fundamentally limited to pure states. The Fubini-Study metric is ``only" defined for the projective Hilbert space (the space of pure states). For a qubit, this corresponds to the ``surface" of the Bloch sphere. An open system, however, is subject to dissipation and dephasing, which causes a pure state to evolve into a ``mixed state". This corresponds to a trajectory that leaves the surface of the Bloch sphere and moves into its ``interior". The Fubini-Study metric is not defined for this trajectory, as it cannot measure the distance between a point on the surface (the pure initial state) and a point in the interior (the mixed final state).
	
	This challenge is solved by finding a metric that (a) is defined for \textit{all} states (pure and mixed) and (b) is \textit{contractive} under the dynamics. A metric is contractive if the distance between any two states can only decrease (or stay the same) under noisy evolution: $D(\Phi(\rho_1), \Phi(\rho_2)) \le D(\rho_1, \rho_2)$, where $\Phi$ is a completely positive trace-preserving (CPTP) map. This is a crucial physical requirement, as it ensures that noisy processes, which make states ``less" distinguishable, do not appear to increase the distance between them~\cite{Pires2016}. The most common choice that satisfies these conditions is the Bures angle, $\Theta_{Bures}$, which is a true metric that defines its own geodesics. A \textbf{geodesic} is simply the shortest possible path between two points in a given space (e.g., a straight line on a flat plane, or a ``great circle" route on the curved surface of the Earth). The Bures angle is defined via the Uhlmann fidelity:
\begin{equation}
	\label{eq:bures_angle}
	\Theta_{Bures}(\rho_0, \rho_\tau) \equiv 
	\arccos\left( \Tr \sqrt{\sqrt{\rho_0} \rho_\tau \sqrt{\rho_0}} \right),
\end{equation}
which is the Bures angle associated with the Uhlmann fidelity~\cite{Braunstein96,Pires2016}.

	This angle is the proper generalization of the Fubini-Study distance for mixed states. In this view, the schematic QSL from Section \ref{sec:qsl_question} becomes:
	\begin{equation}
		\label{eq:qsl_geometric_form}
		\tau \ge \frac{\Theta_{Bures}(\rho_0, \rho_\tau)}{\bar{v}},
	\end{equation}
	which is the natural geometric form of a quantum speed limit in terms of the Bures angle and the time-averaged speed~\cite{Pires2016,Deffner17a}.
	
	Here, $\Theta_{Bures}(\rho_0, \rho_\tau)$ is the geodesic distance (the shortest path)
	between the initial and final states in the full space of density matrices
	(see Fig.~\ref{fig:qsl_geodesic}).
	 $\bar{v} = (1/\tau) \int_0^\tau v(t) dt$ is the time-averaged speed of the system's \textit{actual} path, as measured by the Bures metric. The central task of open-system QSLs, as will be discussed, is to find a bound for this instantaneous speed $v(t)$ (e.g., by using a norm of the dynamical generator $\mathcal{L}$) in order to make this inequality a computable bound.
	
\section[From Idealized to Realistic Dynamics: Open and Driven Systems]{From Idealized to Realistic Dynamics: \\ Open and Driven Systems}
	\label{sec:qsl_open_driven}
	
	Practical quantum devices are not closed or time-independent. They are \textit{open} systems that inevitably interact with an uncontrolled environment, and they are often \textit{driven} by time-dependent control fields, $H(t)$. Modern QSLs must therefore address two intertwined challenges:
	(i) identifying a suitable contractive distance in state space, and (ii) quantifying the dynamical speed in the presence of dissipation, dephasing, and time-dependent generators.
	
	\subsection{Operator-Norm Formulation for Open Systems}
	\label{subsec:qsl_operator_norm}
	A significant breakthrough came from extending the geometric approach to open systems described by a master equation, $\dot{\rho}_t = \mathcal{L}(\rho_t)$, where $\mathcal{L}$ is the (possibly time-dependent) generator of the dynamics (e.g., a Lindbladian). For such driven systems, the instantaneous speed of evolution $v(t)$ is no longer constant, and the QSL must be expressed as an integral over this changing speed.
	
	Deffner and Lutz bounded this instantaneous speed $v(t)$ (derived from the Bures angle) by relating it to standard operator norms of the generator's action on the state, $\mathcal{L}(\rho_t)$~\cite{Deffner13}. Integrating this speed leads to a powerful family of bounds. By defining the time-averaged norms:
	\begin{equation}
		\label{eq:avg_norm}
		\Lambda_{\tau}^{k} = \frac{1}{\tau} \int_0^\tau ||\mathcal{L}(\rho_t)||_k dt,
	\end{equation}
	where $k$ can be the operator norm ($op$), trace norm ($tr$), or Hilbert-Schmidt norm ($hs$), they derived a unified QSL for open systems~\cite{Deffner13}:
	\begin{equation}
		\label{eq:deffner_lutz}
		\tau \ge \max\left\{ \frac{1}{\Lambda_{\tau}^{op}}, \frac{1}{\Lambda_{\tau}^{tr}}, \frac{1}{\Lambda_{\tau}^{hs}} \right\} \sin^2 \Theta_{Bures}(\rho_0, \rho_\tau).
	\end{equation}
	This formulation explicitly generalizes both the MT and ML bounds. The Mandelstam-Tamm (MT) type bound is the one involving the Hilbert-Schmidt norm, $||\cdot||_{hs}$. The connection is not an approximation, for unitary evolution of a pure state $\rho_t = |\psi_t\rangle\langle\psi_t|$, the generator is $\mathcal{L}(\rho_t) = \dot{\rho}_t = -\frac{i}{\hbar}[H, \rho_t]$, and the Hilbert-Schmidt norm of this generator is exactly proportional to the energy variance:
\begin{align}
	||\mathcal{L}(\rho_t)||_{hs}^2 
	&= \Tr(\dot{\rho}_t^\dagger \dot{\rho}_t) = \Tr(\dot{\rho}_t^2) = - \frac{1}{\hbar^2}\Tr([H, \rho_t]^2) \notag \\
	&= \frac{2}{\hbar^2}(\langle H^2 \rangle_t - \langle H \rangle_t^2) = \frac{2}{\hbar^2}(\Delta E_t)^2 \label{eq:hs_variance_link}.
\end{align}
This family of bounds has become a standard benchmark for assessing performance in realistic open-system scenarios, particularly in the regime of quantum optimal control where QSLs set in-principle limits on achievable gate times~\cite{Caneva09,Campbell17,Deffner17a}.

	\subsection{Information-Geometric Formulation \\ (Quantum Fisher Information)}
	\label{subsec:qsl_qfi}
	
	A complementary formulation of quantum speed limits arises from the viewpoint of \emph{quantum information geometry}.  
	Instead of quantifying the speed of evolution using operator norms, one can describe it using the intrinsic metric structure of quantum state space—the \textbf{Quantum Fisher Information (QFI)}~\cite{Taddei13,Pires2016}.
	
	\paragraph{From local to global geometry.}
	The QFI, $\mathcal{F}_Q$, defines the \emph{infinitesimal} distance between two neighboring quantum states along an evolution $\rho_t$.  
	Mathematically, the QFI is defined via the \textbf{Symmetric Logarithmic Derivative (SLD)}, $L_{\rho_t}$, which is the Hermitian operator satisfying the Lyapunov equation:
	\begin{equation}
		\mathcal{L}(\rho_t) = \frac{1}{2}(L_{\rho_t} \rho_t + \rho_t L_{\rho_t}) ,
	\end{equation}
	where $\mathcal{L}(\rho_t) \equiv \dot{\rho}_t$ is the generator of the dynamics. The notation $\mathcal{F}_Q[\rho_t, \mathcal{L}(\rho_t)]$ specifies the QFI as a functional of both the current state $\rho_t$ and the generator $\mathcal{L}(\rho_t)$, calculated as:
	\begin{equation}
		\mathcal{F}_Q[\rho_t, \mathcal{L}(\rho_t)] = \mathrm{Tr}(\rho_t L_{\rho_t}^2) .
	\end{equation}
	
	If $\rho_t$ and $\rho_{t+dt}$ are infinitesimally close, the so-called Bures line element satisfies
\begin{equation}
	ds_{\mathrm{Bures}}^2 = \frac{1}{4}\,\mathcal{F}_Q[\rho_t, \mathcal{L}(\rho_t)]\, dt^2 ,
	\label{eq:bures-element}
\end{equation}
as dictated by the quantum Fisher information metric on state space~\cite{Braunstein96,Taddei13,Pires2016}.

	Integrating this local distance along the entire trajectory gives the total \emph{Bures length} between $\rho_0$ and $\rho_\tau$,
	\begin{equation}
		\mathcal{L}_{\mathrm{Bures}}(\rho_0,\rho_\tau)
		= \int_0^\tau \sqrt{\tfrac{1}{4}\,\mathcal{F}_Q[\rho_t, \mathcal{L}(\rho_t)]}\, dt ,
	\end{equation}
	whose geodesic value equals the Bures angle $\Theta_{\mathrm{Bures}}(\rho_0,\rho_\tau)$.
	In other words,
	\begin{itemize}
		\item the \textbf{QFI} describes the \emph{local curvature} or ``instantaneous speed'' of motion in state space, and  
		\item the \textbf{Bures angle} measures the \emph{total distance} accumulated along the entire path.
	\end{itemize}

	\paragraph{How they enter a QSL.}
	Because the Bures angle quantifies the total distance to be covered, it naturally appears in the \emph{numerator} of a QSL inequality. The QFI, by contrast, determines the instantaneous speed of evolution, and its time average sets the \emph{denominator}. Explicitly, defining the instantaneous ``Bures velocity''
	\begin{equation}
		v_{\mathcal{F}_Q}(t) \equiv 
		\sqrt{\tfrac{1}{4}\,\mathcal{F}_Q[\rho_t, \mathcal{L}(\rho_t)]},
	\end{equation}
	and its time average
	\begin{equation}
		\bar{v}_{\mathcal{F}_Q} = \frac{1}{\tau}\int_0^\tau v_{\mathcal{F}_Q}(t)\,dt,
	\end{equation}
	the information-geometric quantum speed limit takes the compact form~\cite{Pires2016}
	\begin{equation}
		\tau \;\ge\;
		\frac{\Theta_{\mathrm{Bures}}(\rho_0,\rho_\tau)}{\bar{v}_{\mathcal{F}_Q}} .
		\label{eq:qfi_bound}
	\end{equation}

	Hence, a large QFI corresponds to a higher ``statistical speed'' and therefore allows faster evolution. This connects the theory of QSLs directly to quantum metrology: the same QFI that governs the quantum Cramér–Rao bound,
	\((\Delta\theta)^2 \ge 1/\mathcal{F}_Q\),
	also quantifies the local velocity of state change in time~\cite{Braunstein96,Giovannetti2011}. Fast-evolving states are precisely those that can carry more information about a parameter—linking the limits of dynamics and of precision in a single geometric framework.

	{\small
		\renewcommand{\arraystretch}{1.5}
		\begin{longtable}{@{} p{14cm} @{}}
			\caption{Summary of major quantum speed limit (QSL) bounds discussed in this chapter.}
			\label{tab:qsl_summary} \\
			\hline
			\endfirsthead
			\hline
			\endhead
			
			\textbf{Mandelstam--Tamm (1945)~\cite{Mandelstam45}} \\
			\textbf{Basic Formula:}
			$$\tau \ge \frac{\pi\hbar}{2\,\Delta E}$$
			\textbf{Key Notes:}
			\vspace*{-0.5em}
			\begin{itemize}\setlength\itemsep{1pt}
				\item Closed, time-independent systems.
				\item Based on energy variance $\Delta E$.
				\item Fundamental lower bound for unitary evolution.
			\end{itemize} \\
			\hline
			
			\textbf{Margolus--Levitin (1998)~\cite{Margolus98}} \\
			\textbf{Basic Formula:}
			$$\tau \ge \frac{\pi\hbar}{2\,(\langle H\rangle - E_0)}$$
			\textbf{Key Notes:}
			\vspace*{-0.5em}
			\begin{itemize}\setlength\itemsep{1pt}
				\item Uses mean excitation energy.
				\item Complements Mandelstam–Tamm bound.
				\item Closed, time-independent dynamics.
			\end{itemize} \\
			\hline
			
			\textbf{Anandan--Aharonov (1990)~\cite{Anandan90}} \\
			\textbf{Basic Formula:}
			$$\tau \ge \frac{S_{\mathrm{FS}}}{\bar{v}}, \quad S_{\mathrm{FS}}=2\arccos|\langle\psi_0|\psi_\tau\rangle|$$
			\textbf{Key Notes:}
			\vspace*{-0.5em}
			\begin{itemize}\setlength\itemsep{1pt}
				\item Geometric reformulation via Fubini–Study metric.
				\item Interprets $\Delta E/\hbar$ as evolution speed.
				\item Equivalent to MT bound for pure states.
			\end{itemize} \\
			\hline
			
			\textbf{Deffner--Lutz (2013)~\cite{Deffner13}} \\
			\textbf{Basic Formula:} \\
			\textbf{(General unified form)}
			$$\tau \ge \max\!\Bigg\{ \frac{1}{\Lambda_{\tau}^{op}}, \frac{1}{\Lambda_{\tau}^{tr}}, \frac{1}{\Lambda_{\tau}^{hs}} \Bigg\} \sin^2\!\Theta_{\mathrm{Bures}}(\rho_0,\rho_\tau)$$
			\textbf{(MT-type)}
			$$\tau_{\mathrm{MT}}^{DL} \ge \frac{\sin^2\!\Theta_{\mathrm{Bures}}}{\Lambda_{\tau}^{hs}}$$
			\textbf{(ML-type)}
			$$\tau_{\mathrm{ML}}^{DL} \ge \frac{\sin^2\!\Theta_{\mathrm{Bures}}}{\Lambda_{\tau}^{op}}$$
			\textbf{Key Notes:}
			\vspace*{-0.5em}
			\begin{itemize}\setlength\itemsep{1pt}
				\item Applies to open and driven systems.
				\item Extends MT/ML to mixed-state Lindblad dynamics.
				\item MT-type uses Hilbert–Schmidt norm ($\propto\Delta E$).
				\item ML-type uses operator norm (mean-energy analog).
			\end{itemize} \\
			\hline
			
			\textbf{Pires \emph{et al.} (2016)~\cite{Pires2016}} \\
			\textbf{Basic Formula:}
			$$\tau \ge \frac{\Theta_{\mathrm{Bures}}(\rho_0,\rho_\tau)}{\bar{v}_{\mathcal{F}_Q}}, \quad \bar{v}_{\mathcal{F}_Q}= \frac{1}{\tau}\!\int_0^\tau \!\sqrt{\tfrac{1}{4}\mathcal{F}_Q[\rho_t,\mathcal{L}(\rho_t)]}\,dt$$
			\textbf{Key Notes:}
			\vspace*{-0.5em}
			\begin{itemize}\setlength\itemsep{1pt}
				\item QFI-based, information-geometric formulation.
				\item Connects QSLs with quantum metrology limits.
				\item Reduces to geometric MT for unitary evolution.
			\end{itemize} \\
			\hline
			\textbf{Giovannetti--Lloyd--Maccone (2003)~\cite{10.1117/12.507486}} \\
			\textbf{Basic Formula:}
			$$\tau \ge \max\left\{ \frac{\pi\hbar}{2\Delta E}, \frac{\pi\hbar}{2(\langle H\rangle - E_0)} \right\}$$
			\textbf{Key Notes:}
			\vspace*{-0.5em}
			\begin{itemize}\setlength\itemsep{1pt}
				\item \textbf{Unification:} Formally unifies the Mandelstam--Tamm (variance-based) and Margolus--Levitin (mean-energy-based) bounds into a single, tighter limit.
				\item \textbf{Mixed State Extension:} Extends the QSL framework to mixed states, moving beyond the pure-state constraints of earlier geometric formulations.
				\item \textbf{Multipartite Systems:} Establishes speed limits for complex, multipartite quantum systems, providing a fundamental bound for quantum information processing.
				\item \textbf{Entanglement Dynamics:} Specifically analyzes how quantum entanglement and correlations can be leveraged to enhance (speed up) dynamical evolution.
			\end{itemize} \\
			\hline

		\end{longtable}
	} 

	\section{Physical Significance and Applications}
	\label{sec:qsl_applications}
	
	QSLs have matured from formal inequalities into operational benchmarks with broad, practical implications.
	\begin{itemize}
		\item \textbf{Quantum Computing and Quantum Control:} QSLs impose the ultimate minimal-time constraint on high-fidelity quantum gates and state-transfer protocols~\cite{Caneva09}. They define the ``clock speed" of a quantum processor, separating limitations due to engineering (e.g., control pulse quality) from fundamental physical limits~\cite{Lloyd00}.
		
		\item \textbf{Quantum Metrology:} In sensing and metrology, QSLs quantify the fundamental tradeoff between interrogation time $\tau$ and estimation precision $\Delta\theta$ (e.g., in phase accumulation)~\cite{Giovannetti2011}. This leads to generalized time-energy uncertainty relations that limit the sensitivity of clocks, gravimeters, and other quantum sensors~\cite{Pang2017}.
		
		\item \textbf{Quantum Thermodynamics:} In the study of quantum engines and work extraction, QSLs constrain the maximum power output~\cite{Binder2015}. Thermodynamic processes driven at high speeds (approaching the QSL) incur costs, such as friction or non-adiabatic excitations, that limit efficiency. This connects QSLs to the ``shortcuts to adiabaticity" research program~\cite{Campbell17}.
		
		\item \textbf{Quantum Batteries:} In models of energy storage, many-body correlations can potentially lead to a ``super-extensive" charging speed. QSLs provide the ultimate cap on this charging power, guiding the design and assessing the feasibility of quantum batteries.
	\end{itemize}
	
	\section{Summary}
	\label{sec:qsl_summary}
	
	Quantum speed limits (QSLs) capture the ultimate temporal constraints imposed by quantum mechanics, linking the geometry of state space with the dynamical generator of evolution. From the early MT and ML bounds to geometric, information-based, and open-system formulations, the field has evolved into a unified framework describing how energy, coherence, and distinguishability jointly govern the speed of quantum change~\cite{Giovannetti03,Deffner13,Pires2016,Deffner17a}.

	Yet, despite this progress, existing formulations often rely on fixed norm distances and metrics that can be rigid or loose in realistic, representation-dependent dynamics. This tension between universality and tightness motivates the search for a more flexible formulation of QSLs—one that remains general but adapts naturally to the structure of a given system.

	The next chapter develops such a formulation by introducing a generalized norm-based framework that extends the traditional geometric picture. This new approach builds a family of QSL that is applicable universally. That means it can be applied to open, closed, time-dependent, time-independent systems, as well as for any linear operators. The family of QSL also involves tunability, there will a set of free parameters that can be tuned to match different quantum systems. Lastly, as my next chapter will demonstrate, the QSL can be computed efficiently.

		\newtheorem{theorem}{Theorem}[chapter] 
	\chapter{QUANTUM SPEED LIMITS BASED ON REPRESENTATION-DEPENDENT $\ell_{w}^{p}$-SEMINORMS}
	\label{chap:my_work_seminorm}
\section*{Statement of Contribution}

This chapter is based on the research presented in the following works:
\begin{quote}
	H. F. Chau and J. Li, ``Quantum Speed Limits For Open System Dynamics Based On Representation Basis Dependent $\ell_{w}^{p}$-Seminorm,'' arXiv preprint arXiv:2508.17053, 2025~\cite{Chau2025}.
\end{quote}
\begin{quote}
	J. Li and H. F. Chau, ``A New Quantum Speed Limit Based on Unitarily Invariant Norms,'' Contributed talk at the 25th Asian Quantum Information Science (AQIS) Conference, 2025~\cite{LiChau_AQIS2025}.
\end{quote}
\begin{quote}
	H. F. Chau and J. Li, ``Quantum Speed Limits for Open System Dynamics Based on a Representation-Basis-Dependent $\ell_{w}^{p}$-Seminorm,'' \emph{Phys. Rev. A} \textbf{113}, 042218, 2026~\cite{trss-qhc8}.
\end{quote}

This work was a collaboration between the author and Prof. H. F. Chau. The authors jointly identified the limitations of existing norm-based speed limits regarding universality and tightness, and proposed the fundamental concept of using representation-basis-dependent weighted norms to construct a more flexible and tunable QSL framework.

The author's contributions to this chapter are as follows:

The author derived the initial supremum form of the inequality based on an idea proposed by Prof. Chau. The author further introduced the concept of weighting factors in those inequalities and assisted in proving the associated mathematical properties.

Regarding the numerical analysis, the author and Prof. Chau jointly designed and performed the computations for the first case study involving qubit and qudit systems. The author independently designed and performed the numerical simulations for the remaining three case studies presented in Sec.~\ref{sec:seminorm_applications}: the spontaneously emitting qubit, the NV-center gate, and the quantumness measures of linear operators. 

Finally, the author demonstrated the universality of this QSL framework by extending the definitions to general linear operators.


\section{Introduction}
\label{sec:seminorm_intro}

In Chapter \ref{chap:qsl_background}, I reviewed the theoretical foundations of quantum speed limits (QSLs), from the foundational MT and ML bounds to modern open-system formulations~\cite{Mandelstam45,Margolus98,Levitin2009,Deffner13,Deffner17a}. As noted, a persistent challenge is to find a single QSL framework that is simultaneously \textit{universal} (applicable to all dynamics), \textit{tight} (providing a sharp, non-trivial bound), and \textit{computationally efficient}.

This chapter introduces such a framework, based on the author's work in Ref.~\cite{Chau2025,trss-qhc8}. I develop and analyze a new family of QSLs derived from a representation-basis-dependent weighted $\ell_{w}^{p}$-seminorm~\cite{Chau2025,trss-qhc8}. This approach provides a powerful and practical tool for benchmarking quantum dynamics, capable of setting limits on systems ranging from simple qubits to complex open-system environments.

This chapter is structured as follows: Section \ref{sec:seminorm_derivation} provides the complete mathematical derivation of the QSLs and analyzes their fundamental properties. Section \ref{sec:seminorm_applications} presents four distinct case studies to demonstrate the applicability and sharpness of the bounds. Finally, Section \ref{sec:seminorm_conclusion} summarizes the findings and discusses an outlook for future work.
	
	\section{The $\ell_{w}^{p,max}$-Norm QSLs and \\Their Properties}
	\label{sec:seminorm_derivation}
	
	
	\subsection{My QSLs and Their Proofs}
	\label{subsec:seminorm_proof}
	My derivation begins by representing the $n$-dimensional quantum state $\rho(\tau)$ at time $\tau$ in a fixed orthonormal basis $\mathcal{B}$~\cite{vonNeumann1932,Fano1957}. I ``vectorize" this operator into an $n^2$-dimensional complex vector by column-stacking \cite{Watrous_2018}:
	\begin{equation}
		vec_{\mathcal{B}}(\rho(\tau)) = (\rho(\tau)_{11}, \rho(\tau)_{12}, \dots, \rho(\tau)_{nn})^T \in \mathbb{C}^{n^2}.
	\end{equation}
	To quantify the change between the initial state $\rho_0$ and the state at time $\tau$, $\rho_\tau$, I define a time-dependent difference vector:
	\begin{equation}
		\Delta_{\mathcal{B}}(\tau) = vec_{\mathcal{B}}(\rho_{\tau}) - vec_{\mathcal{B}}(\rho_{0}).
	\end{equation}
	For clarity, I will omit the $\mathcal{B}$ subscript when the basis is clear from context or unimportant.
	
	I then equip this vector space with a specific norm derived from the weighted $\ell_w^p$-seminorm. For any $p \ge 1$ and a non-zero, non-negative real vector $w \in \mathbb{R}^{n^2}$, I define the $\ell_{w}^{p,max}$ norm as:
	\begin{align}
		\mathcal{D}_{p,w,\mathcal{B}}^{max}(\rho_{\tau},\rho_{0}) &= ||\Delta_{\mathcal{B}}(\tau)||_{p,w}^{max} \equiv \max_{P \in S_{n^2}} \left( \sum_{j=1}^{n^2} w_{P(j)} |\Delta_{\mathcal{B},j}(\tau)|^p \right)^{1/p} \label{eq:my_qsl_norm_perm} \\
		&= \left[ \sum_{j=1}^{n^2} w_{j}^{\downarrow} (|\Delta_{\mathcal{B}}(\tau)|_{j}^{\downarrow})^p \right]^{1/p} \label{eq:my_qsl_norm_sorted}.
	\end{align}
	Here, $S_{n^2}$ is the symmetric group on $n^2$ elements, and the $\downarrow$ superscript denotes that the elements of the vector are arranged in descending order. This quantity is a true norm even if $\ell_w^p$ is only a seminorm (i.e., if some elements of $w$ are zero)~\cite{Chau2025}. By continuity, for $p \to \infty$, I define $\mathcal{D}_{\infty,w,\mathcal{B}}^{max}(\rho_{\tau},\rho_{0}) = ||\Delta_{\mathcal{B}}(\tau)||_{\infty,w}^{max} = |\Delta_{\mathcal{B}}(\tau)|_{1}^{\downarrow}$.
	
	Let us assume $\mathcal{D}_{p,w,\mathcal{B}}^{max}(\rho_{\tau},\rho_{0}) \neq 0$. I define $\overline{w}$ as the vector $(w_{P(j)})_j$ that achieves the maximum in Eq. \ref{eq:my_qsl_norm_perm}, where $P$ is the maximizing permutation. This $\overline{w}$ may depend on $\Delta_{\mathcal{B}}(\tau)$ and need not be unique. I then define the $\ell_{\overline{w}}^p$ seminorm of the vector of absolute values, $|\Delta_{\mathcal{B}}(t)|$, as:
	\begin{equation}
		\mathcal{D}_{p,\overline{w},\mathcal{B}}(\rho_t,\rho_0) = \left( \sum_{j=1}^{n^2} \overline{w}_j |\Delta_{\mathcal{B},j}(t)|^p \right)^{1/p}.
	\end{equation}
	Treating $\mathcal{D}_{p,w,\mathcal{B}}^{max}$ and $\mathcal{D}_{p,\overline{w},\mathcal{B}}$ as functions of time $t$, I find $\frac{d}{dt}\mathcal{D}_{p,w,\mathcal{B}}^{max} = \frac{d}{dt}\mathcal{D}_{p,\overline{w},\mathcal{B}}$. For $p > 1$, this derivative is:
	\begin{equation}
		\frac{d\mathcal{D}_{p,\overline{w},\mathcal{B}}}{dt} = \mathcal{D}_{p,\overline{w},\mathcal{B}}^{1-p} \sum_{j=1}^{n^2} \overline{w}_j |\Delta_{\mathcal{B},j}|^{p-1} \text{sgn}(\Delta_{\mathcal{B},j}) \frac{d}{dt}vec_{\mathcal{B}}(\rho_t)_j.
	\end{equation}
	I now set $1/p + 1/q = 1$ and apply the Hölder inequality to the vectors $u_j = \overline{w}_j^{1/p} \frac{d}{dt}vec_{\mathcal{B}}(\rho_t)_j$ and $v_j = \overline{w}_j^{1/q} |\Delta_{\mathcal{B},j}|^{p-1} \text{sgn}(\Delta_{\mathcal{B},j})$. Multiplying the result by $\mathcal{D}_{p,\overline{w},\mathcal{B}}^{1-p}$ yields the key differential inequality:
	\begin{equation}
		\left| \frac{d\mathcal{D}_{p,\overline{w},\mathcal{B}}}{dt} \right| \le \left|\left| \frac{d}{dt}vec_{\mathcal{B}}(\rho_t) \right|\right|_{p,\overline{w}} = ||vec_{\mathcal{B}}(\mathcal{L}\rho_t)||_{p,\overline{w}},
	\end{equation}
	where $||\cdot||_{p,\overline{w}}$ is the $\ell_{\overline{w}}^p$ norm and $\mathcal{L}$ is the Lindblad super-operator governing the dynamics $\dot{\rho}_t = \mathcal{L}\rho_t$~\cite{Deffner17a}. By taking limits, this inequality is also shown to hold for $p=1$ and $p=\infty$.
	
	Integrating this expression from $t=0$ to $t=\tau$ gives my two fundamental QSL bounds~\cite{Chau2025}:
	\begin{align}
		\tau &\ge \tau_{p,w,\mathcal{B}}^{int} \equiv \frac{\mathcal{D}_{p,w,\mathcal{B}}^{max}(\rho_{\tau},\rho_{0})}{\frac{1}{\tau} \int_{0}^{\tau} ||vec_{\mathcal{B}}(\mathcal{L}\rho_{t})||_{p,\overline{w}} dt} \label{eq:thesis_qsl_int} ,\\
		\tau &\ge \tau_{p,w,\mathcal{B}}^{sup} \equiv \frac{\mathcal{D}_{p,w,\mathcal{B}}^{max}(\rho_{\tau},\rho_{0})}{\sup_{t\in[0,\tau]} ||vec_{\mathcal{B}}(\mathcal{L}\rho_{t})||_{p,\overline{w}}} \label{eq:thesis_qsl_sup}.
	\end{align}
	I refer to $\tau^{int}$ as the \textbf{basic integral form QSL} and $\tau^{sup}$ as the \textbf{basic supremum form QSL}.
	
	\paragraph{Several important remarks are in order:}
	\begin{enumerate}
		\item Both inequalities (Eq. \ref{eq:thesis_qsl_int} and \ref{eq:thesis_qsl_sup}) hold trivially if the initial and final states are identical, i.e., $\mathcal{D}_{p,w,\mathcal{B}}^{max}(\rho_{\tau},\rho_{0}) = 0$.
		\item These QSLs are, in general, \textit{not} unitarily invariant. A change of basis in $\rho_t$ corresponds to a change in $\mathcal{B}$ for $\mathcal{D}_{p,w,\mathcal{B}}^{max}$, which will change the value of the bound. This basis-dependence is a key feature distinguishing my bounds from others~\cite{Pires2016,Taddei13,Giovannetti03}.
		\item The quantities $\mathcal{D}_{p,w,\mathcal{B}}^{max}(\rho_t,\rho_0)$ (the distance) and $||vec_{\mathcal{B}}(\mathcal{L}\rho_t)||_{p,\overline{w}}$ (the speed) are, in principle, observables that can be determined at any time $t$, though this may require quantum state tomography or POVMs~\cite{NielsenChuang}.
		\item The bounds are invariant to a positive scaling of the weight vector, $w \to \lambda w$ for $\lambda > 0$. I can thus normalize $w$ without loss of generality (e.g., $w_1^{\downarrow} = 1$). Furthermore, if $w$ has only one non-zero component ($w_j^{\downarrow} = 0$ for $j>1$), then Eq. \ref{eq:my_qsl_norm_sorted} shows the bound becomes independent of $p$.
		
		\item While $\tau^{int}$ generally provides a tighter bound than $\tau^{sup}$, calculating it requires full knowledge of the state $\rho_t$ for all $t \in [0, \tau]$ and performing an integration, which can be computationally intensive or experimentally unfeasible. In contrast, $\tau^{sup}$ is simpler, more robust to uncertainties in the evolution, and, as I will see, can still provide a strong QSL bound.
	\end{enumerate}
	
	These basic QSLs can be further optimized over the free parameters: the norm index $p$, the weight vector $w$, and the representation basis $\mathcal{B}$. I denote this (fully) optimized QSL as $\tau_{opt}^{int}$. If the norm $||vec(\mathcal{L}\rho_t)||_{p,\overline{w}}$ is sufficiently piecewise smooth, optimization over $p$ (for local maxima) is efficient. Similarly, optimization over the basis $\mathcal{B}$ is a variational problem on a compact Lie group, which can also be solved efficiently for local solutions. As I will show in Section \ref{subsubsec:seminorm_efficient_opt}, optimization over $w$ is also highly efficient. Therefore, finding the locally optimized bound $\tau_{opt}^{int}$ is computationally tractable~\cite{Boumal14,Chau2025}.
	
	It is critical to note that my derivation did not rely on any specific properties of density matrices (e.g., Hermiticity, unit trace, positivity). Consequently, my QSLs are valid for the evolution of \textit{any} linear operator, including non-Hermitian ones~\cite{Shrimali25}.
	
	Finally, this framework can be generalized to infinite-dimensional systems. I extend the definition of the norm (Eq. \ref{eq:my_qsl_norm_perm}) by considering an operator $A$ on $\mathcal{H}$ represented in a basis $\mathcal{B}$ with indices $x, y \in \mathbb{R}$. I equip $\mathbb{R}^2$ with a measure $\mu$ and define $||vec(A)||_{p,\mu} = [\int_{\mathbb{R}^2} |A(x,y)|^p d\mu]^{1/p}$. The $max$ norm becomes $||vec(A)||_{p,\mu}^{max} = \sup_{g \in G} [\int_{\mathbb{R}^2} |A(x,y)|^p d(g^{-1}\mu)]^{1/p}$, where $G$ is the category of all $\mu$-measure preserving maps on $\mathbb{R}^2$. The entire derivation carries over, and numerical computation can be performed using careful, uniformly convergent discretization of the system~\cite{Chau2025}.

	\subsection{Properties of My Basic QSLs}
	\label{subsec:seminorm_properties}
	
	\subsubsection{Sufficient Condition for My QSLs to be Representation Basis Independent}
	\label{subsubsec:seminorm_basis_independent}
	
	As noted in Remark 2, my QSLs $\tau^{int}$ and $\tau^{sup}$ are generally representation-basis-dependent. This is a crucial distinction from unitarily invariant bounds~\cite{Pires2016,Taddei13}. There is, however, one significant exception.
	
	Consider any linear operator $A$ acting on $\mathcal{H}^n$. The sum of the squared magnitudes of its matrix elements is equal to the trace of $A A^\dagger$, which is also the sum of its squared singular values, $s_j^\downarrow(A)$:
	\begin{equation}
		\sum_{j,k} |A_{jk}|^2 = \sum_{j,k} \langle j | A | k \rangle \langle k | A^\dagger | j \rangle = \text{Tr}(A A^\dagger) = \sum_j (s_j^\downarrow(A))^2.
	\end{equation}
	As a result, if I choose $p=2$ and $w = \mathbb{I}_{n^2}$ (a vector of $n^2$ ones), my $\ell_{w}^{p,max}$ norm is equivalent to the Frobenius norm $||\cdot||_2$:
	\begin{equation}
		||vec_{\mathcal{B}}(A)||_{2, \mathbb{I}_{n^2}} = \left( \sum_{j,k} |A_{jk}|^2 \right)^{1/2} = \left( \sum_j (s_j^\downarrow(A))^2 \right)^{1/2} = ||A||_2.
	\end{equation}
	Since the Frobenius norm (and the singular values) of an operator are independent of the basis, my QSL bounds in Eqs. \ref{eq:thesis_qsl_int} and \ref{eq:thesis_qsl_sup} become representation-basis-independent for the specific choice $p=2$ and $w = \mathbb{I}_{n^2}$~\cite{Chau2025}.
	
	In this specific case, for a sufficiently small time interval $[t, t+\Delta t]$, the generator's action can be approximated by the Hamiltonian component, $\mathcal{L}\rho_t = -i[H(t), \rho_t]/\hbar + O(\Delta t^2)$, where $H(t) = \sum_j E_j(t) |E_j(t)\rangle \langle E_j(t)|$. This allows us to relate the Frobenius norm of the generator to the energy variance $\Delta E_t$: $|| \mathcal{L}\rho_t ||_2 = \sqrt{2} \Delta E_t / \hbar$. My basis-independent QSLs then take the form:
	\begin{align}
		\tau_{2, \mathbb{I}_{n^2}}^{int} &= \frac{||\rho_{\tau} - \rho_0||_2}{\frac{1}{\tau} \int_0^\tau ||\mathcal{L}\rho_t||_2 dt} = \frac{\hbar ||\rho_{\tau} - \rho_0||_2}{\frac{\sqrt{2}}{\tau} \int_0^\tau \Delta E_t dt} \label{eq:thesis_frob_int}, \\
		\tau_{2, \mathbb{I}_{n^2}}^{sup} &= \frac{||\rho_{\tau} - \rho_0||_2}{\sup_{t\in[0,\tau]} ||\mathcal{L}\rho_t||_2} = \frac{\hbar ||\rho_{\tau} - \rho_0||_2}{\sqrt{2} \sup_{t\in[0,\tau]} \Delta E_t}. \label{eq:thesis_frob_sup}
	\end{align}
	This demonstrates a close formal relationship between my integral form QSL and the MT bound (Eq. \ref{eq:mt_bound})~\cite{Mandelstam45,Giovannetti03}. The key difference lies in the numerator: my bound uses the Frobenius distance, $||\rho_\tau - \rho_0||_2$, whereas the MT bound uses the Bures angle, $\Theta_{Bures}(\rho_\tau, \rho_0)$~\cite{Taddei13,Braunstein96}. This confirms my earlier remark that my QSLs are fundamentally different in nature from unitarily invariant bounds, even when their forms appear similar. This also distinguishes my bounds from others, as in the time-independent case, $||vec_{\mathcal{B}}(\mathcal{L}\rho_t)||_{p, \mathbb{I}_{n^2}}$ cannot generally be expressed in terms of $\langle |E|^p \rangle$~\cite{Pires2016}.
	
	One might wonder if other simple choices of $w$, such as $w = \mathbb{I}_1$ (a vector with one 1 and the rest 0), also lead to basis-independent bounds. By Ky Fan's maximum principle, $s_1^\downarrow(A)^2 = \max_{|j\rangle} \langle j | A A^\dagger | j \rangle = ||vec(A A^\dagger)||_{p, \mathbb{I}_1}^{max}$. This suggests $\tau_{p, \mathbb{I}_1}^{int}$ and $\tau_{p, \mathbb{I}_1}^{sup}$ could be basis-independent. However, this is not true. The state $|j\rangle$ that maximizes the ``distance" term $\mathcal{D}_{p,w,\mathcal{B}}^{max}(\rho_\tau, \rho_0)$ is not necessarily the same state that maximizes the ``speed" term $||vec_{\mathcal{B}}(\mathcal{L}\rho_t)||_{p, \overline{w}}$ at all times $t$. My numerical investigations confirm that the only basis-independent QSLs in my framework occur for $p=2$ and $w = \mathbb{I}_{n^2}$~\cite{Chau2025}.
	
	\subsubsection{Efficient Optimization Over $w$}
	\label{subsubsec:seminorm_efficient_opt}
	
	A significant practical advantage of my QSL framework is that the optimization over the weight vector $w$ can be performed very efficiently. For a fixed time $t$, representation basis $\mathcal{B}$, and $p \in [1, \infty)$, I can find the maximum of my QSL bounds (Eqs. \ref{eq:thesis_qsl_int} and \ref{eq:thesis_qsl_sup}) by checking only $n^2$ specific weight vectors. The $p=\infty$ case follows by continuity~\cite{Chau2025}.
	
	First, observe that my basic bounds are invariant to the ordering of the elements in $w$. Therefore, without loss of generality, I only need to consider non-negative weight vectors $w = (w_i)$ whose elements are arranged in descending order.
	
	Now, consider two such vectors, $w$ and $w'$, and their convex combination $\lambda w + \overline{\lambda} w'$, where $\lambda \in [0, 1]$ and $\overline{\lambda} = 1 - \lambda$. The right-hand side of both QSL bounds (Eq. \ref{eq:thesis_qsl_int} and \ref{eq:thesis_qsl_sup}) can be written in the functional form $f(\lambda) = \left( \frac{\lambda N + \overline{\lambda} N'}{\lambda D + \overline{\lambda} D'} \right)^{1/p}$, where $N, N' \ge 0$ and $D, D' > 0$. By elementary algebra, this function $f(\lambda)$ is quasi-convex (or quasi-concave, depending on the values) and always attains its maximum value at the boundaries of the domain $\lambda \in [0, 1]$, i.e., at $\lambda=0$ or $\lambda=1$.
	
	This implies that the maximum value of my QSLs must be attained by a weight vector $w$ that cannot be decomposed as a weighted sum of two other linearly independent, non-negative, descending-order vectors. The only vectors that satisfy this condition are the ``extremal" vectors $w = \mathbb{I}_j$ for $j = 1, 2, \dots, n^2$, where $\mathbb{I}_j$ is the vector whose first $j$ components are 1 and the remaining $(n^2 - j)$ components are 0.
	
	Consequently, to maximize my basic QSLs for a fixed basis $\mathcal{B}$ and parameter $p$, I do not need to perform a complex search over the continuous, high-dimensional space of $w$. I only need to evaluate the right-hand side of my bounds for these $n^2$ different extremal weight vectors $\mathbb{I}_j$. This result is crucial, as it demonstrates that I can improve the sharpness of my QSL by introducing the weightings $w$ while fully maintaining the computational efficiency of the optimization~\cite{Chau2025}.

	\subsubsection{Efficient Optimization Over $w$}
	\label{subsubsec:seminorm_efficient_opt}

	Another important advantage of my QSL framework is its efficient tunability. The tightest bound is found by optimizing over the parameters $p$, the basis $\mathcal{B}$, and the weight vector $w$. While optimizing $p$ and $\mathcal{B}$ involves standard numerical approaches, the optimization over the high-dimensional continuous parameter $w$ might seem computationally intractable. In this subsection, I prove that this is not the case: the optimal $w$ can be found by checking only $n^2$ discrete extremal vectors~\cite{Chau2025}.

	\begin{theorem}[Optimal Weight Vector]
		For a fixed basis $\mathcal{B}$, parameter $p \in [1, \infty)$, and time $\tau$, the maximum value of the QSL bounds $\tau_{p,w,\mathcal{B}}^{int}$ (Eq. \ref{eq:thesis_qsl_int}) and $\tau_{p,w,\mathcal{B}}^{sup}$ (Eq. \ref{eq:thesis_qsl_sup}) is attained when the weight vector $w$ is one of the $n^2$ extremal vectors $\mathbb{I}_j$, defined as the vector whose first $j$ components are 1 and the remaining $(n^2 - j)$ components are 0.
	\end{theorem}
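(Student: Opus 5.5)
The plan is to reduce the statement to a mediant inequality over the cone of descending nonnegative weight vectors, after showing that numerator and denominator depend on $w$ in a sufficiently linear way. It is cleanest to work with the $p$-th powers of the norms.

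\textbf{Step 1 (fixing the permutation).} By the invariance remark it suffices to take $w$ nonnegative and in descending order. By Eq.~\eqref{eq:my_qsl_norm_sorted}, the maximizing permutation in Eq.~\eqref{eq:my_qsl_norm_perm} pairs the largest weights with the largest entries of $|\Delta_{\mathcal{B}}(\tau)|$. Choose one sorting permutation $\sigma$ of $|\Delta_{\mathcal{B}}(\tau)|$, breaking ties arbitrarily but once and for all. Then $\overline{w}=w\circ\sigma^{-1}$ for \emph{every} descending $w$, so $\overline{w}$ depends linearly on $w$. Consequently
\[
N(w)\equiv \big(\mathcal{D}^{max}_{p,w,\mathcal{B}}\big)^p=\sum_j w_j a_j,\qquad
\|vec_{\mathcal{B}}(\mathcal{L}\rho_t)\|_{p,\overline{w}}^p=\sum_j w_j b_j(t),
\]
with $a_j,b_j(t)\ge 0$ independent of $w$.

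\textbf{Step 2 (extremal decomposition).} Every descending nonnegative $w$ can be written as $w=\sum_{j=1}^{n^2} c_j\mathbb{I}_j$ with $c_j=w_j-w_{j+1}\ge 0$ (setting $w_{n^2+1}=0$). Write $N_j=N(\mathbb{I}_j)$ and $\beta_j(t)=\big(\sum_k (\mathbb{I}_j)_k b_k(t)\big)^{1/p}$. By linearity, $N(w)=\sum_j c_jN_j$ and
\[
\|vec_{\mathcal{B}}(\mathcal{L}\rho_t)\|_{p,\overline{w}}=\Big\|\big(c_j^{1/p}\beta_j(t)\big)_j\Big\|_{\ell^p}.
\]

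\textbf{Step 3 (integral form).} Let $D_j=\int_0^\tau\beta_j(t)\,dt$ and let $D(w)$ denote the integral of the speed. Apply Minkowski's integral inequality to the $\ell^p$-valued function $t\mapsto (c_j^{1/p}\beta_j(t))_j$:
\[
D(w)=\int_0^\tau\big\|(c_j^{1/p}\beta_j(t))_j\big\|_{\ell^p}dt\;\ge\;\Big(\sum_j c_j D_j^{\,p}\Big)^{1/p}.
\]
Hence
\[
\frac{\tau^{int}_{p,w,\mathcal{B}}}{\tau}=\frac{N(w)^{1/p}}{D(w)}\le\Big(\frac{\sum_j c_jN_j}{\sum_j c_jD_j^{\,p}}\Big)^{1/p}\le \max_{j:\,c_j>0}\frac{N_j^{1/p}}{D_j}
\]
by the mediant inequality. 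The right-hand side is the bound evaluated at some $\mathbb{I}_j$. Degenerate cases where some $D_j=0$ need separate care; in those cases the corresponding speed vanishes identically and the bound is infinite or trivial.

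\textbf{Step 4 (supremum form) — expected main obstacle.} The analogous inequality
\[
\sup_t\big\|(c_j^{1/p}\beta_j(t))_j\big\|_{\ell^p}\ge\Big(\sum_j c_j(\sup_t\beta_j)^p\Big)^{1/p}
\]
fails in general, because the suprema of different $\beta_j$ may occur at different times. I would first try the two-point convexity argument in the text: along $\lambda w+\bar\lambda w'$ the numerator is linear while the denominator is a supremum of functions of the form $(\lambda x+\bar\lambda y)^{1/p}$. I would then check whether the maximizer can be pushed to an endpoint, for example by exploiting that each $\beta_j$ is nondecreasing in $j$. If this fails, I would establish the statement for the supremum form only when all the $\beta_j$ attain their supremum at a common time $t^\ast$; in that case the supremum reduces to evaluation at $t^\ast$ and Step 3 applies verbatim. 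Otherwise I would look for a counterexample to the unrestricted supremum-form claim.
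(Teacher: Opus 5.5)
Your Steps 1--3 are correct, and for the integral form they are a more careful proof than the paper's. The paper argues that along a segment $\lambda w+\bar\lambda w'$ the bound has the form $\bigl((\lambda N+\bar\lambda N')/(\lambda D+\bar\lambda D')\bigr)^{1/p}$, so it peaks at an endpoint, and then concludes by an informal extremality argument.

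That affine-over-affine form holds literally only for the integral bound with $p=1$. Even then it needs the maximizing permutation to be fixed so that $\overline{w}$ is linear in $w$; that is your Step 1, which the paper skips. For $p>1$ the denominator is $\int_0^\tau(\lambda B(t)+\bar\lambda B'(t))^{1/p}\,dt$, where $B,B'$ are the $p$-th powers of the weighted speeds for $w,w'$.

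Your Minkowski step $D(w)\ge(\sum_j c_jD_j^{\,p})^{1/p}$ is exactly what restores an affine lower bound. Your cone decomposition $w=\sum_j c_j\mathbb{I}_j$ together with the mediant inequality replaces the paper's ``cannot be decomposed'' step, which as written would still need an induction on the number of distinct entries of $w$.

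Your degenerate case never arises. If $\Delta_{\mathcal{B}}(\tau)\neq0$, integrating the $\sigma(1)$ entry of $vec_{\mathcal{B}}(\mathcal{L}\rho_t)$ gives $D_1\ge|\Delta_{\mathcal{B}}(\tau)|^{\downarrow}_1>0$. Since $\beta_j\ge\beta_1$, it follows that $D_j>0$ for every $j$.

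Step 4 cannot be completed, because the supremum-form half of the statement is false; this is also exactly where the paper's proof breaks. Along $\lambda w+\bar\lambda w'$, the $p$-th power of the denominator is $\sup_t(\lambda B(t)+\bar\lambda B'(t))$. This is a convex, piecewise-affine function of $\lambda$ that can lie strictly below its chord. Hence $\tau^{sup}$ is quasi-concave in $\lambda$ rather than quasi-linear, and it can peak in the interior. Monotonicity of $\beta_j$ in $j$ does not prevent this.

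Here is a concrete counterexample.
\begin{enumerate}
\item Take $p=1$, $\tau=1$, the computational basis of a qubit, and $\rho_0=\bigl(\begin{smallmatrix}1/4&-1/8\\-1/8&3/4\end{smallmatrix}\bigr)$.
\item Let $\rho_{11}$ rise at rate $\tfrac12$ on $[0,\tfrac12)$ and at rate $\tfrac38$ on $[\tfrac12,1]$.
\item Let the real entry $\rho_{12}$ stay frozen on $[0,\tfrac12)$ and rise at rate $\tfrac12$ on $[\tfrac12,1]$.
\end{enumerate}
Then $|\Delta(1)|$ equals $(\tfrac7{16},\tfrac7{16},\tfrac14,\tfrac14)$ on the entries $(\rho_{11},\rho_{22},\rho_{12},\rho_{21})$. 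The entrywise speeds are $(\tfrac12,\tfrac12,0,0)$ on the first half and $(\tfrac38,\tfrac38,\tfrac12,\tfrac12)$ on the second half.

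For $w=(1,1,\tfrac14,\tfrac14)$ the numerator is $1$ and the weighted speed equals $1$ on both halves, so $\tau^{sup}_{1,w,\mathcal{B}}=1=\tau$. By contrast, $\mathbb{I}_1,\mathbb{I}_2,\mathbb{I}_3,\mathbb{I}_4$ give $\tfrac78,\tfrac78,\tfrac9{10},\tfrac{11}{14}$ respectively.

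This is a legitimate open-system evolution. The path stays full rank, so it is generated by the time-dependent Lindbladian $\mathcal{L}_tX=\kappa(\mathrm{Tr}(X)\,\sigma_t-X)$ with $\sigma_t=\rho_t+\dot\rho_t/\kappa$ and $\kappa$ large. Smoothing the piecewise-constant rates does not close the gap.

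So your fallback is the correct theorem for the supremum form. Suppose all $\beta_j$ attain their supremum at a common time $t^\ast$, as in the spontaneous-emission example, where every entry of $\dot\rho_t$ decreases in modulus from $t=0$. Then the supremum is just evaluation at $t^\ast$, and the mediant step of Step 3 goes through.

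In general, the $w$-optimization of $\tau^{sup}$ is still tractable. By scale invariance it is the convex problem of minimizing $\sup_t\sum_j c_jB_j(t)$, with $B_j=\beta_j^{\,p}$, subject to $\sum_j c_jN_j=1$ and $c\ge0$; once $t$ is discretized this is a linear program. Its optimum, however, need not lie on a ray through any $\mathbb{I}_j$.
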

	
	\begin{proof}
		The QSL bounds are invariant to the ordering of elements in $w$, so I can consider only non-negative vectors $w = (w_i)$ with elements in descending order. The set of such normalized vectors is convex.
		
		Let $w$ and $w'$ be two such vectors. The QSL bounds, as a function of their convex combination $\lambda w + \overline{\lambda} w'$ (where $\lambda \in [0, 1], \overline{\lambda} = 1 - \lambda$), take the functional form:
		\begin{equation}
			f(\lambda) = \left( \frac{\lambda N + \overline{\lambda} N'}{\lambda D + \overline{\lambda} D'} \right)^{1/p}, \quad \text{where } N, N' \ge 0 \text{ and } D, D' > 0.
		\end{equation}
		By elementary algebra, this function $f(\lambda)$ always attains its maximum value at the boundaries of the domain, i.e., at $\lambda=0$ or $\lambda=1$.
		
		This implies that the maximum value for the QSL must be attained at a vector $w$ that cannot be decomposed as a weighted sum of two other linearly independent, non-negative, descending-order vectors. These are precisely the extremal vectors of the set, which are the vectors $\mathbb{I}_j$ for $j = 1, 2, \dots, n^2$.
	\end{proof}

	Consequently, to find the optimal bound for a given $p$ and $\mathcal{B}$, I do not need to perform a complex search over the continuous space of $w$. I only need to evaluate my bounds for these $n^2$ different extremal vectors $\mathbb{I}_j$. This result is crucial, as it proves that the $w$-optimization step is computationally efficient, allowing the full optimization (including $p$ and $\mathcal{B}$) to remain practical~\cite{Boumal14,Chau2025}.

	\section{Applications of My QSLs}
	\label{sec:seminorm_applications}
	
	To demonstrate the wide applicability and sharpness of my new bounds, I now present four case studies across a range of quantum systems, from simple closed dynamics to complex, driven open systems. The numerical results for these examples were generated using the Manopt package in Matlab, which is an optimization toolkit~\cite{Boumal14}.  Each data point typically requires less than 10 minutes of computation on a standard laptop.
	
	\subsection{Time-Independent Hamiltonian Evolution}
	\label{subsec:seminorm_closed_system}
	
	I begin my analysis with the foundational case of pure state evolution under a time-independent Hamiltonian.
	
	\subsubsection{The Qubit Case}
	\label{subsubsec:seminorm_qubit_case}
	
	As a first warm-up, I consider the evolution of the initial state $|\psi_0\rangle = (|E_0\rangle + |E_1\rangle)/\sqrt{2}$ under the simple Hamiltonian $H = \sum_j j |E_j\rangle\langle E_j|$, where $\{|E_j\rangle\}$ are the energy eigenstates. For this system, the energy standard deviation is $\Delta E = 1/2$, and the MT bound $\tau_{MT}$ is known to be tight for evolution times $\tau \in [0, \pi\hbar]$.
	
	I find that my bound can also be saturated. If I fix $\tau \in [0, \pi\hbar]$ and choose the basis $\mathcal{B}_d$ that diagonalizes the difference vector $\Delta(\tau)$, it is straightforward to show that $\mathcal{D}_{1,\mathbb{I}_1,\mathcal{B}_d}^{max}(\rho_\tau, \rho_0) = |\sin(\tau/2\hbar)|$. The corresponding instantaneous speed is $||vec_{\mathcal{B}_d}(\dot{\rho}(t))||_{1,\overline{\mathbb{I}}_1} = |\cos[(t-\tau/2)/\hbar]| / 2\hbar$. Integrating this speed from 0 to $\tau$ yields exactly $|\sin(\tau/2\hbar)|$.
	Therefore, the numerator and denominator of my integral bound match, and I find:
	\begin{equation}
		\tau_{1,\mathbb{I}_1,\mathcal{B}_d}^{int} = \tau, \quad \text{for } \tau \in [0, \pi\hbar],
	\end{equation}
	My basic integral form QSL is also tight for this system~\cite{Chau2025}.
	
	I also note that for a randomly chosen basis $\mathcal{B}$, there is at least a $1/4$ probability that $\tau_{1,\mathbb{I}_1,\mathcal{B}}^{int}$ remains tight, as the relevant trigonometric factors cancel in both the numerator and denominator.
	
	More revealing is the case when $\tau > \pi\hbar$. For an evolution time of $\tau = 4\pi\hbar/3$, I find my fully optimized bound is $\tau_{opt}^{int} = 3.20\hbar$. This is significantly tighter than the MT bound, which is $\tau_{MT} = 2\pi\hbar/3 \approx 2.09\hbar$. My bound also surpasses the optimized Chau-Zeng (CZ) bound \cite{CZbound}, which for this system is also $2\pi\hbar/3$. This optimum is achieved for $w = \mathbb{I}_1$ (and thus for any $p \ge 1$), and I find that it can be reached by numerous different representation bases, making it discoverable via simple Monte Carlo sampling.
	
	\subsubsection{The Four-Dimensional Qudit Case}
	\label{subsubsec:seminorm_qudit_case}
	
	To demonstrate the power of my QSL beyond simple systems, I analyze a 4D qudit. It is known that most QSLs are only saturated in qubit or qutrit systems~\cite{Deffner17a,Pires2016}. I consider a randomly chosen initial state $|\psi_0\rangle = \sqrt{0.2}|E_0\rangle + \sqrt{0.4}|E_1\rangle + \sqrt{0.3}|E_\pi\rangle + \sqrt{0.1}|E_4\rangle$ evolving under the same diagonal Hamiltonian $H$.
	
	At an evolution time of $\tau_c = 3.43\hbar$, I find that the optimized CZ bound is only $\tau_{CZ} = 0.87\hbar$~\cite{CZbound}. A non-optimized version of my bound (using the energy eigenbasis) gives $\tau^{int} = 1.98\hbar$, which is better but still not tight.
	
	However, my \textit{fully optimized} bound, $\tau_{opt}^{int}$, tells a different story. I find numerically that $\tau_{opt}^{int}$ is equal to $\tau$ up to at least 7 significant figures for \textit{any} $\tau \in [0, \tau_c]$. This strongly suggests my QSL is \textbf{saturated} for this system, demonstrating its power in higher-dimensional cases where other bounds, even optimized ones, are not tight~\cite{Chau2025}. As with the qubit case, I observe that the optimal parameters appear to be degenerate, which facilitates the optimization process.

	\subsection{A Qubit Undergoing Spontaneous Emission}
	\label{subsec:seminorm_spontaneous}
	
	I now turn to open system dynamics, using the canonical example of a qubit undergoing spontaneous emission. The system is initialized in the state $(|0\rangle + |1\rangle) / \sqrt{2}$ and decays to the ground state $|0\rangle$. The dynamics are governed by the Lindblad master equation in the interaction picture:
	\begin{equation}
		\frac{d\rho_{t}}{dt} = \mathbb{L}(\rho_{t}) = \gamma \left( |0\rangle\langle1|\rho_{t}|1\rangle\langle0| - \frac{1}{2}\{|1\rangle\langle1|,\rho_{t}\} \right),
	\end{equation}
	where $\gamma$ is the emission rate. The density matrix at time $t$, written in the energy eigenbasis $\mathcal{B}_E = \{|0\rangle, |1\rangle\}$, is
	\begin{equation}
		\rho_{t} = \frac{1}{2}
		\begin{pmatrix}
			2-e^{-\gamma t} & e^{-\gamma t/2} \\
			e^{-\gamma t/2} & e^{-\gamma t}
		\end{pmatrix},
	\end{equation}
	
	I compare my QSLs against the well-known Deffner-Lutz (DL) bound, $\tau_{DL} = \max\left\{ \frac{1}{\Lambda_{\tau}^{op}}, \frac{1}{\Lambda_{\tau}^{hs}} \right\} \sin^2 \Theta_{Bures}(\rho_\tau, \rho_0)$~\cite{Deffner13}. The Hilbert-Schmidt norm ($\Lambda^{hs}$) term serves as the open-system generalization of the MT bound.  For this system, the operator norm ($\Lambda^{op}$) term provides the tighter bound, which I refer to as the DL bound.
	
	I first analyze my bound with the simple parameter choice $p=1$, $w=\mathbb{I}_1$, and $\mathcal{B}=\mathcal{B}_E$. From the expression for $\rho_t$, the distance is $\mathcal{D}_{1,\mathbb{I}_1,\mathcal{B}_E}^{max}(\rho_\tau, \rho_0) = (1 - e^{-\gamma\tau}) / 2$. The instantaneous speed is $||vec_{\mathcal{B}_E}(\dot{\rho}_t)||_{1,\overline{\mathbb{I}}_1} = \gamma e^{-\gamma t} / 2$. Integrating this speed from 0 to $\tau$ gives an average speed of $\frac{1}{\tau} \int_0^\tau \dots dt = (1 - e^{-\gamma\tau}) / (2\tau)$.
	Substituting these into my integral bound (Eq. \ref{eq:thesis_qsl_int}) reveals a remarkable result:
	\begin{equation}
		\tau_{1,\mathbb{I}_1,\mathcal{B}_E}^{int} = \frac{(1 - e^{-\gamma\tau}) / 2}{(1 - e^{-\gamma\tau}) / (2\tau)} = \tau,
	\end{equation}
	My basic integral form QSL, even with a simple, non-optimized parameter choice, is \textbf{exactly tight} and saturates the bound for all $\tau \ge 0$.
	
	The corresponding supremum form bound is also easily calculated. The speed is maximized at $t=0$, so $\sup ||\cdot|| = \gamma/2$. This gives the bound $\tau_{1,\mathbb{I}_1,\mathcal{B}_E}^{sup} = (1 - e^{-\gamma\tau}) / \gamma$.
	
	In Figure \ref{F:qubit_case}, I plot these bounds for a fixed evolution time $\tau=1$ ns against varying decay rates $\gamma$.
	\begin{itemize}
		\item Panel (a) clearly shows $\tau_{1,\mathbb{I}_1,\mathcal{B}_E}^{int}$ (blue line) saturating the actual evolution time (yellow line). It also shows that the simpler $\tau_{1,\mathbb{I}_1,\mathcal{B}_E}^{sup}$ (solid black) provides a tighter bound than both the MT (dashed red) and DL (dashed black) bounds for $\gamma < 2.5 \text{ ns}^{-1}$.
		\item Panel (b) confirms this performance for the basis-independent choice $p=2, w=\mathbb{I}_4$.
		\item Panels (c) and (d) illustrate the effects of varying $p$ and $w$, respectively, demonstrating that my bounds are effective over a wide range of parameters.
	\end{itemize}
	A significant advantage demonstrated here is that even without full optimization, a sub-optimal parameter choice yields a QSL that is extremely close to saturation, affirming the practical utility and robustness of my framework~\cite{Chau2025}.

	\begin{figure*}[t!]
		\centering
		\subfloat[\label{fig:qubit_a}]{
			\includegraphics[width=0.47\textwidth]{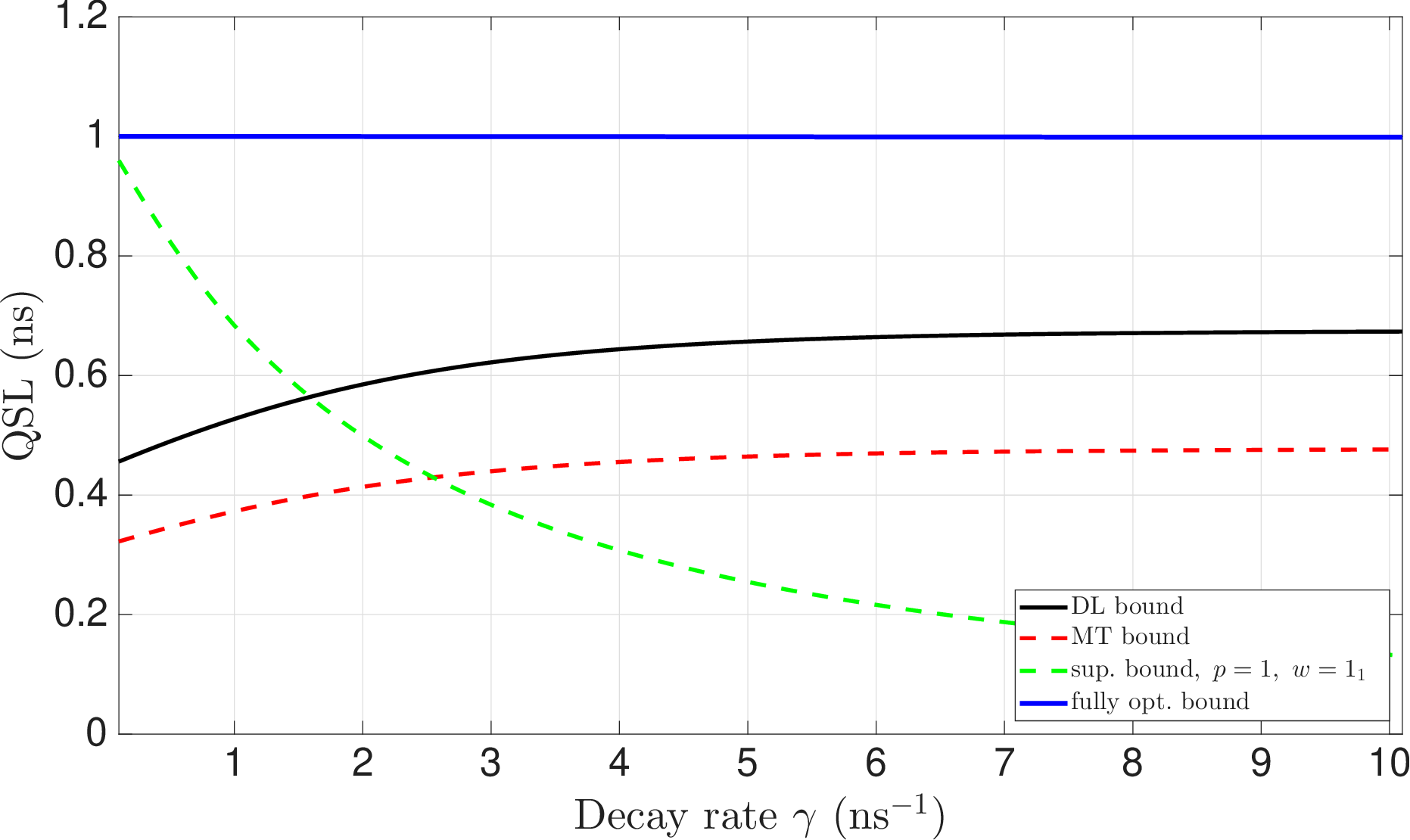}
		}
		\hfill 
		\subfloat[\label{fig:qubit_b}]{
			\includegraphics[width=0.47\textwidth]{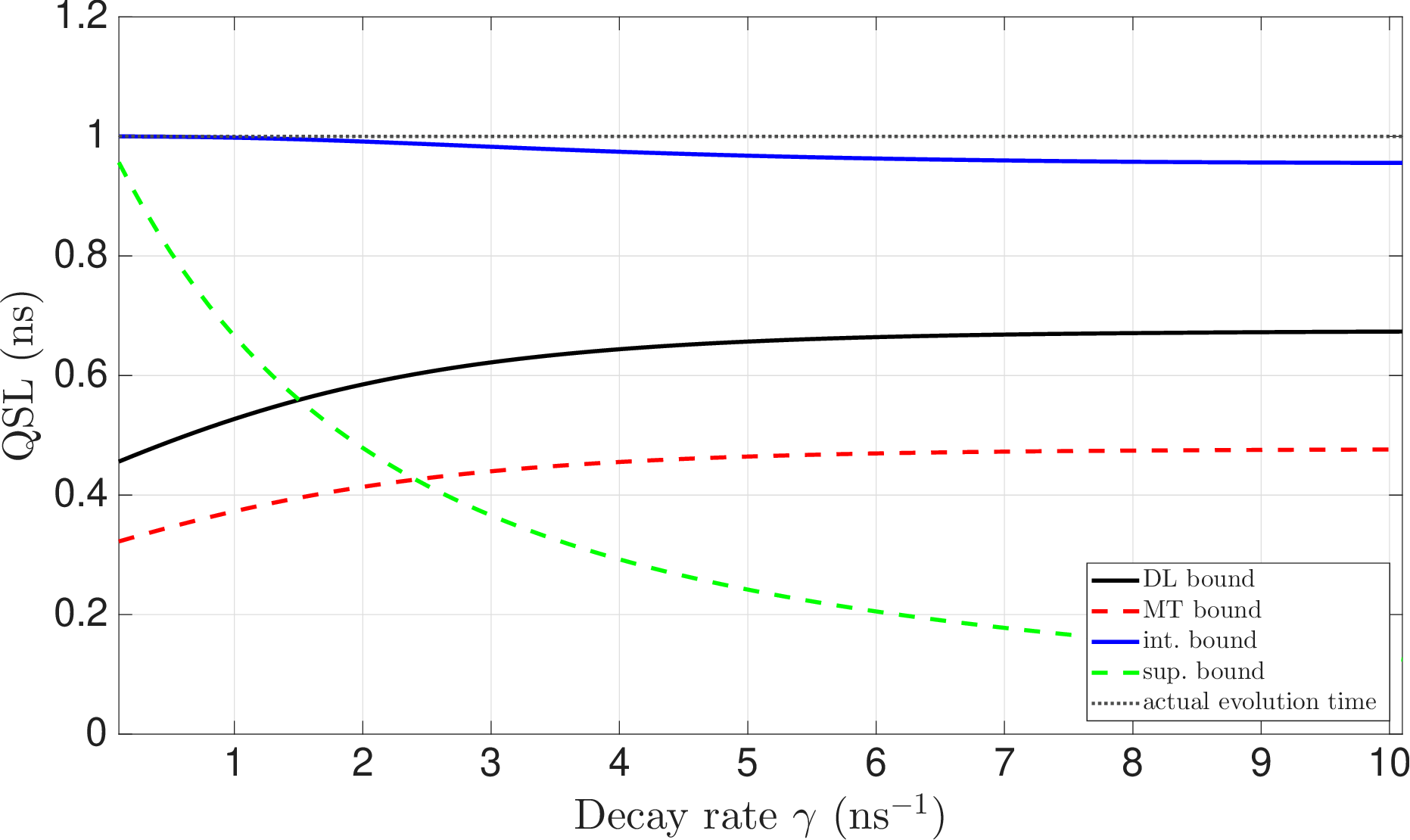}
		}
		
		\vspace{1em} 

		\subfloat[\label{fig:qubit_c}]{
			\includegraphics[width=0.47\textwidth]{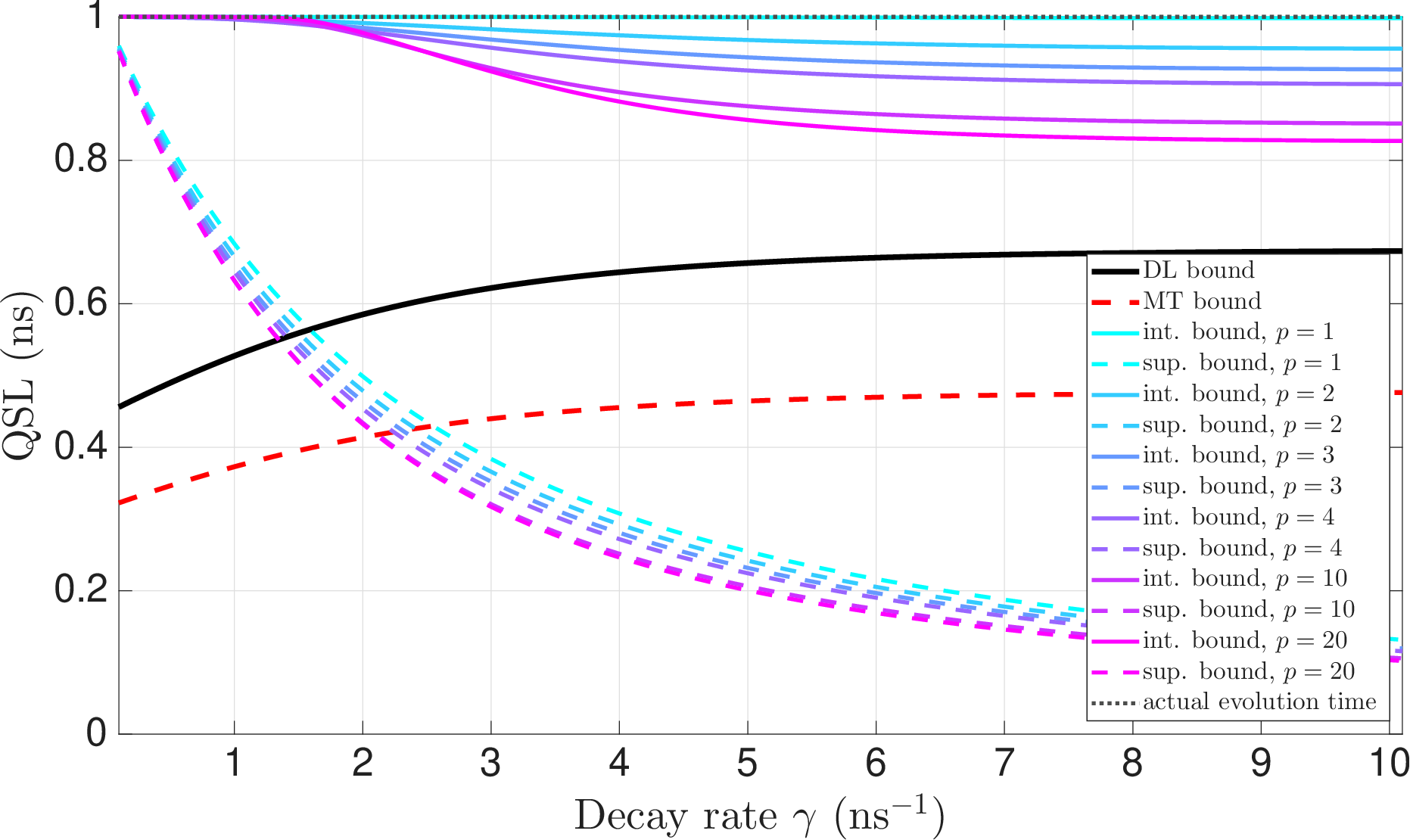}
		}
		\hfill 
		\subfloat[\label{fig:qubit_d}]{
			\includegraphics[width=0.47\textwidth]{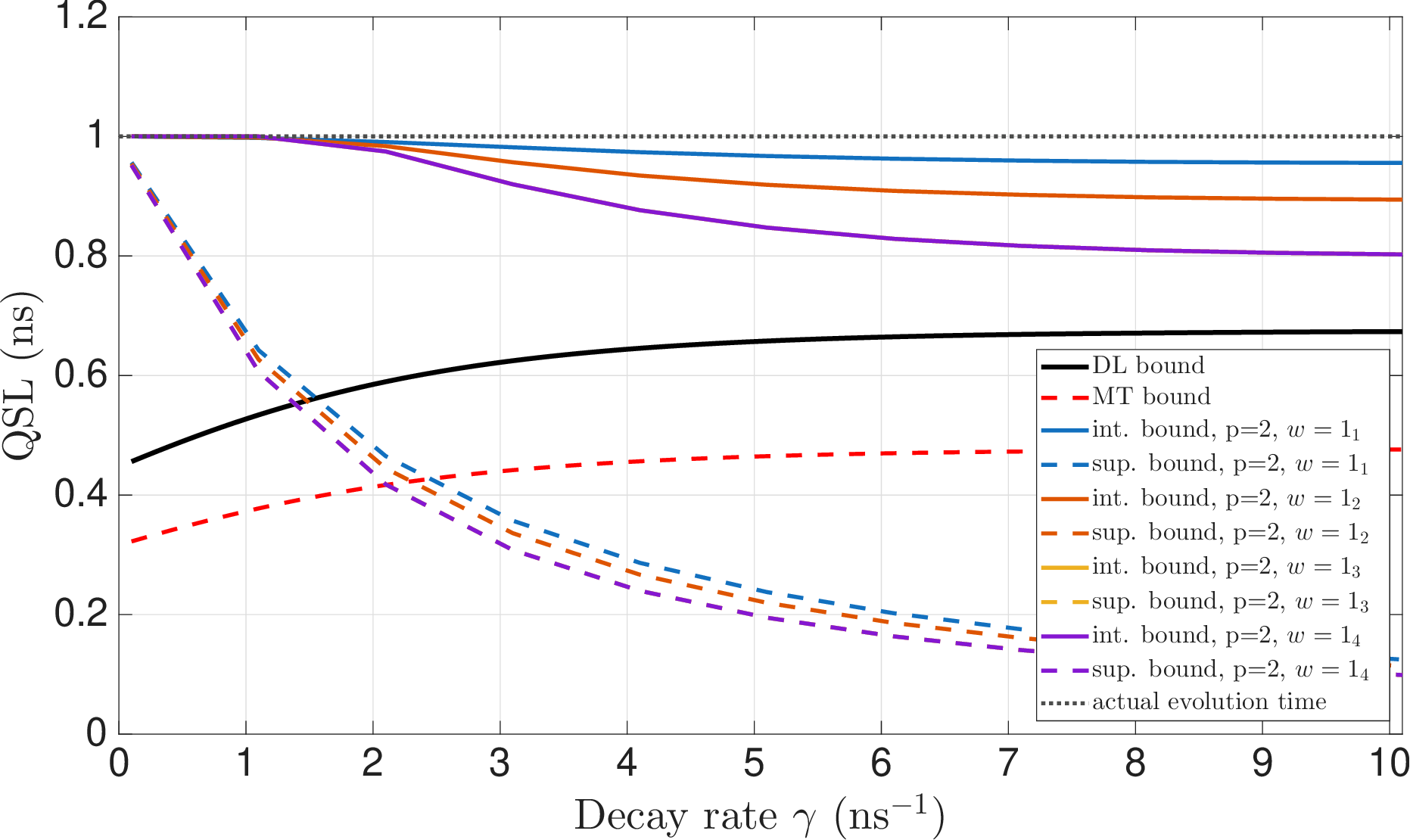}
		}
		
		\caption{
		Comparison of different \QSL bounds for a spontaneously emitting qubit plotted as a function of the decay rate $\gamma$ (with total time fixed at $\tau=1$).
		Panel~(a) displays the supremum ($\tau_{1,\mathbb{1}_1,\mathcal{B}_E}^{\text{sup}}$) and integral ($\tau_{1,\mathbb{1}_1,\mathcal{B}_E}^{\text{int}}$) bounds alongside the standard \MT and \DL limits. A key observation here is that the integral bound $\tau_{1,\mathbb{1}_1,\mathcal{B}_E}^{\text{int}}$ exactly saturates the actual evolution time $\tau$.
		Panel~(b) presents a similar comparison but utilizes the parameter set $p=2$ and $w=\mathbb{1}_4$.
		Finally, Panels~(c) and (d) demonstrate the sensitivity of the bounds by varying the norm parameter $p$ and the weight vector $w$, respectively, while holding all other parameters constant at the values used in Panel~(b).
			\label{F:qubit_case}
		}
	\end{figure*}

	\subsection{High-fidelity Gates Using an NV-center in Diamond}
	\label{subsec:seminorm_nv_center}
	
	Next, I test my framework on a more complex and experimentally relevant scenario: the dynamics of a spin-1 nitrogen-vacancy (NV) center in diamond. This system is a promising platform for quantum information processing, and I analyze the time required to perform a high-fidelity quantum gate, comparing my QSLs with the MT bound from~\cite{Deffner13}.
	
	The spin-1 operators in the basis $\{|m_s=+1\rangle, |m_s=0\rangle, |m_s=-1\rangle\}$ are given by:
	\begin{align}
		\hat{S}_x &= \frac{1}{\sqrt{2}}\begin{pmatrix} 0 & 1 & 0 \\ 1 & 0 & 1 \\ 0 & 1 & 0 \end{pmatrix}, \quad
		\hat{S}_y = \frac{1}{\sqrt{2}}\begin{pmatrix} 0 & -i & 0 \\ i & 0 & -i \\ 0 & i & 0 \end{pmatrix}, \quad
		\hat{S}_z = \begin{pmatrix} 1 & 0 & 0 \\ 0 & 0 & 0 \\ 0 & 0 & -1 \end{pmatrix},
	\end{align}
	The dynamics are governed by a time-dependent Hamiltonian $\hat{H}(t) = \hat{H}_0 + \hat{H}_c(t)$. The static Hamiltonian is:
	\begin{equation}
		\hat{H}_0 = \hbar(D\hat{S}_z^2 + \gamma_e B_0 \hat{S}_z),
	\end{equation}
	where $D$ is the zero-field splitting and $\gamma_e$ is the electron gyromagnetic ratio. The control Hamiltonian, representing a time-dependent magnetic field $B_1$ applied perpendicular to the NV axis, is:
	\begin{equation}
		\hat{H}_c(t) = \hbar\gamma_e B_1 [f_x(t)\hat{S}_x + f_y(t)\hat{S}_y].
	\end{equation}
	The control field switches direction at the halfway point:
	\begin{equation}
		(f_x(t), f_y(t)) = \begin{cases} (1, 0) & \text{if } t < \tau/2 \\ (0, 1) & \text{if } t \ge \tau/2 \end{cases}.
	\end{equation}
	I initialize the system in the state $|\psi_0\rangle = |m_s = +1\rangle$ and use the parameters $D=2\pi \times 2.87$ GHz, $\gamma_e = 2\pi \times 28.0345$ GHz/T, and $B_0 = 0.05$ T.
	
	The results are shown in Figure \ref{fig:nv_center}.
	\begin{itemize}
		\item Panel (a) shows my fully optimized bound, $\tau_{opt}^{int}$, compared to the MT bound. My bound is demonstrably tighter across the entire range of magnetic field ratios $B_0/B_1$, ``often by a wide margin." The optimal parameters $(p, w, \mathcal{B})$ themselves vary with the field ratio in a complex manner.
		\item Panels (b), (c), and (d) show the effect of varying each parameter ($p$, $w$, $\mathcal{B}$) individually, while fixing the others to a baseline (e.g., $p=2, w=\mathbb{I}_9$, computational basis). These plots show that optimizing even a single parameter is effective in achieving a good bound.
		\item I observe that all bounds show significant fluctuations with the $B_0/B_1$ ratio. However, my QSLs, particularly for larger $p$, tend to exhibit a \textbf{smaller variance}. This is a key practical advantage: a QSL with low variance provides a more robust estimate of the minimal gate time, making it less susceptible to minor experimental parameter drifts~\cite{Chau2025}.
	\end{itemize}
	This analysis demonstrates that by optimizing over $p$ and $w$, my QSL framework can offer both superior sharpness and a less fluctuating, more robust bound than existing methods for complex, time-dependent systems.

	\begin{figure*}[t!]
		\centering
		\subfloat[\label{fig:nv_a}]{
			\includegraphics[width=0.48\textwidth]{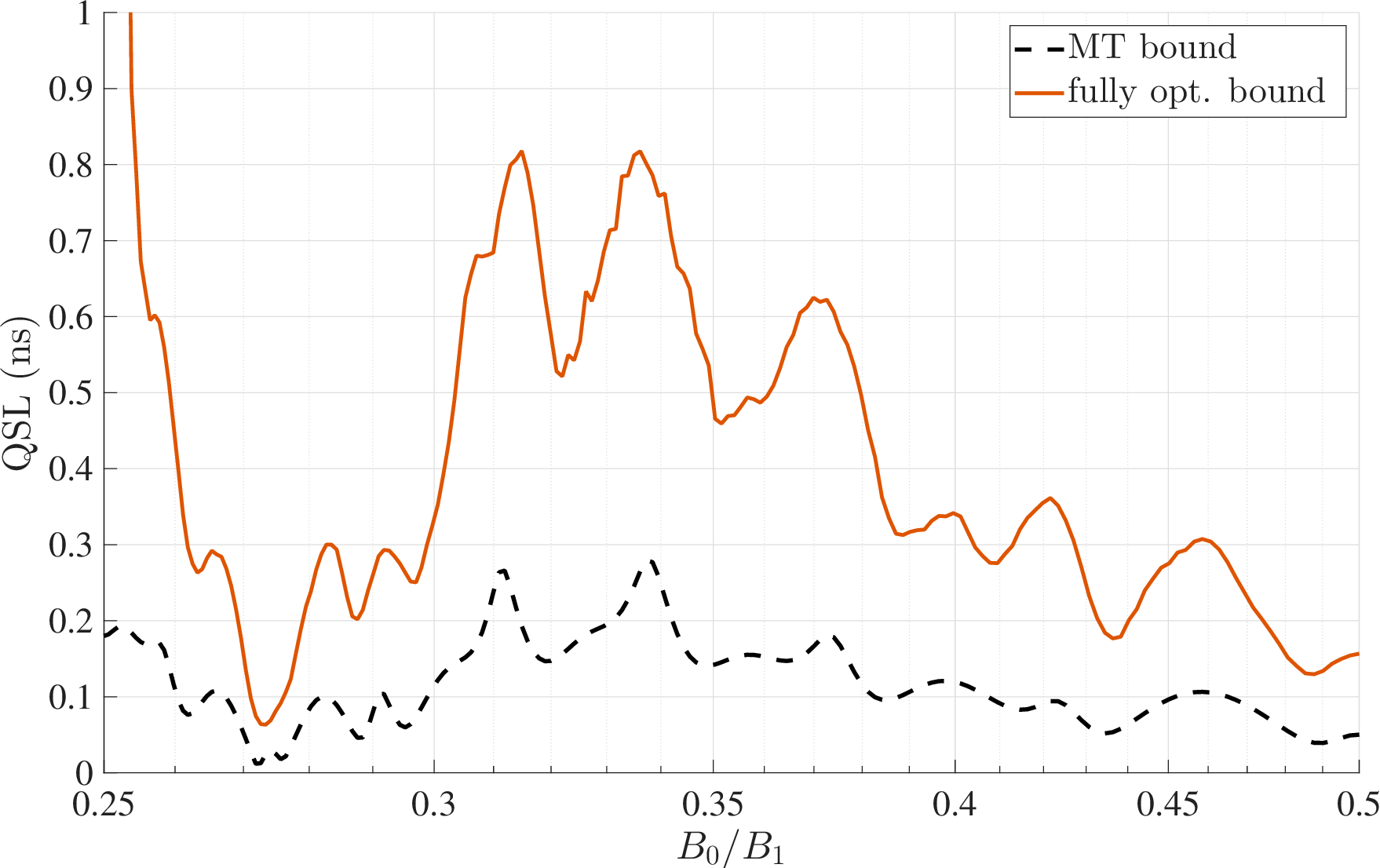}
		}
		\hfill 
		\subfloat[\label{fig:nv_b}]{
			\includegraphics[width=0.48\textwidth]{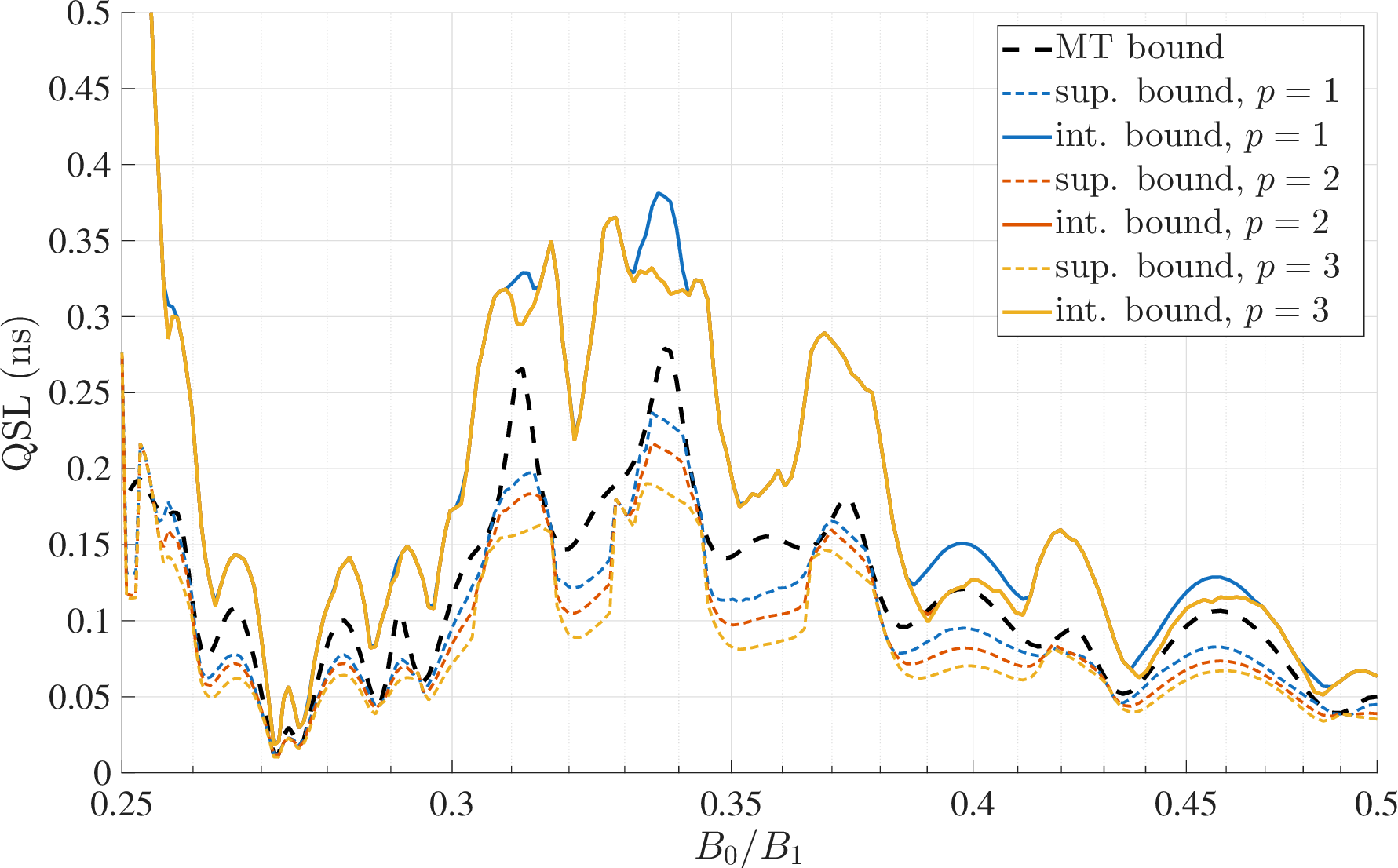}
		}
		
		\subfloat[\label{fig:nv_c}]{
			\includegraphics[width=0.48\textwidth]{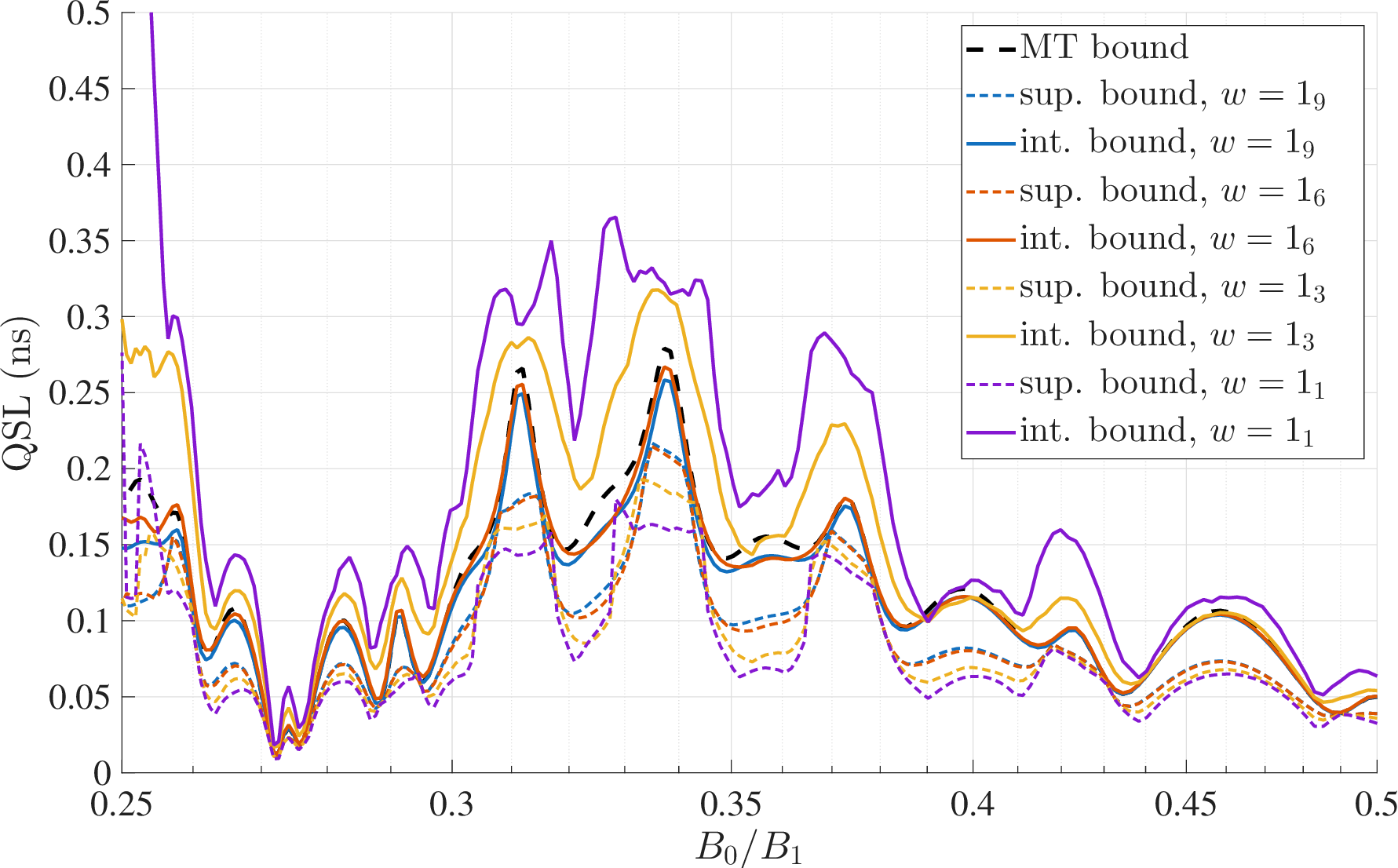}
		}
		\hfill 
		\subfloat[\label{fig:nv_d}]{
			\includegraphics[width=0.48\textwidth]{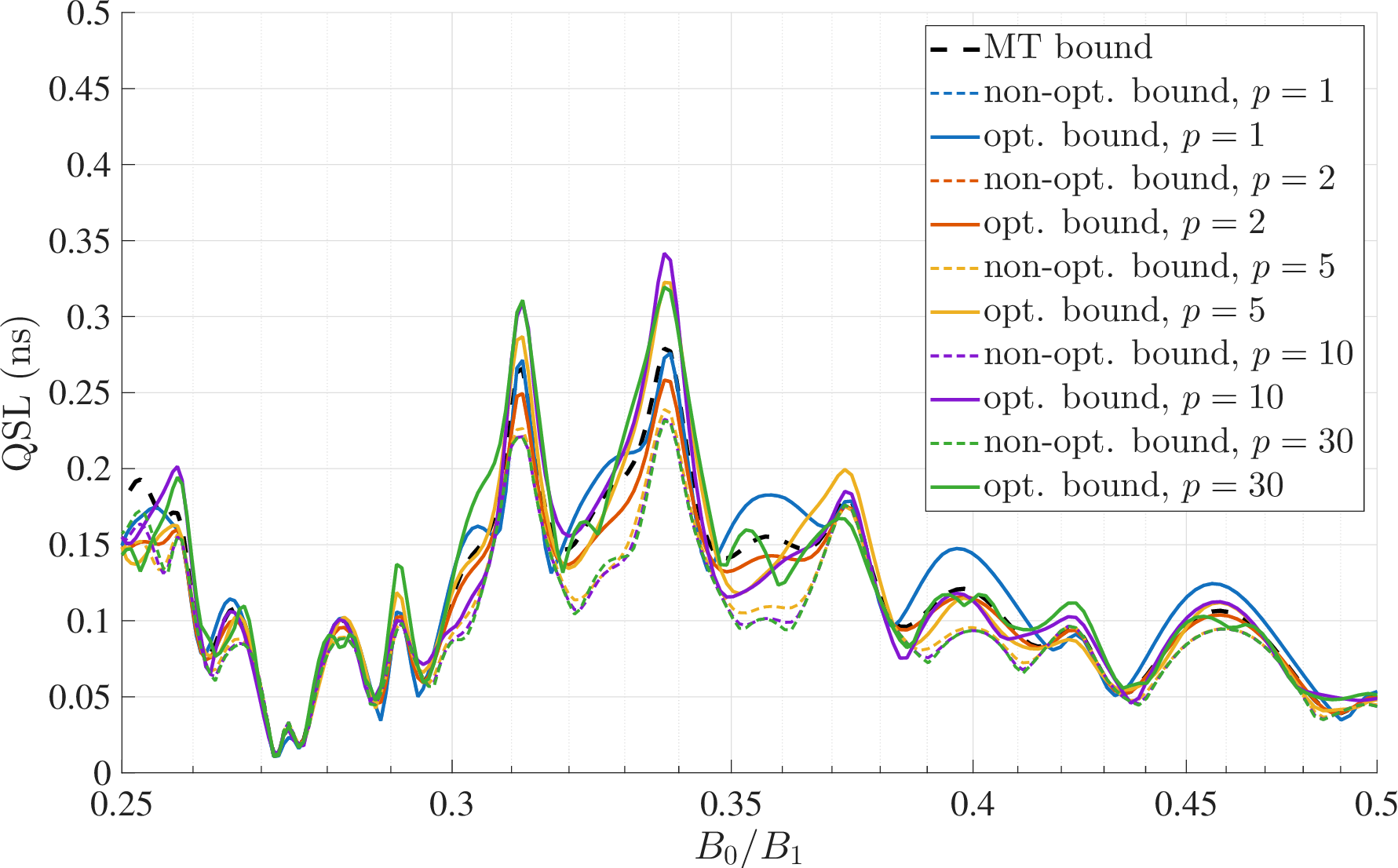}
		}
		
		\caption{
			Comparison of various \QSL bounds for the NV-center spin gate operation (defined in Sec.~III\,C) plotted against the magnetic field ratio $B_0/B_1$. I assume natural units ($\hbar = 1$) throughout.
			Panel~(a) presents the globally optimized integral bound, $\tau_{\text{opt}}^{\text{int}}$, note that this plot utilizes an expanded $y$-axis range for clarity.
			Panels~(b), (c), and (d) isolate the effects of specific parameters: I vary the norm $p$, the weight vector $w$, and the basis $\mathcal{B}$, respectively. In each of these three panels, the non-varying parameters are held at their reference values: $p=2$, uniform weights ($w = \mathbb{1}_9$), and the computational basis.
			\label{fig:nv_center}
		}
	\end{figure*}
	
	\subsection{Quantumness Measures}
	\label{subsec:seminorm_quantumness}
	
	A key feature of my framework, as established in Section \ref{subsec:seminorm_proof}, is its applicability to \textit{any} linear operator, not just density matrices. I now demonstrate this unique versatility by applying my QSL to the dynamics of operator ``quantumness," specifically non-commutativity. I adopt the models from \cite{Shrimali25}, which derived separate, specialized QSLs for dephasing and coherence generation~\cite{Shrimali25}. In contrast, my framework provides a single, unified bound applicable to both processes.
	
	\subsubsection{The Dephasing Process}
	\label{subsubsec:seminorm_dephasing}
	
	First, I consider a dephasing process where the observable of interest is initialized to $\mathcal{A}_0 = \sigma_y$ and evolves under the Hamiltonian $H = \sigma_x$. The time-evolved observable is:
	\begin{equation}
		\mathcal{A}_t = e^{-\gamma t/2} \begin{pmatrix}
			\cos 2t & -i \sin 2t \\
			i \sin 2t & -\cos 2t
		\end{pmatrix}.
	\end{equation}
	A measure of ``quantumness" between the initial and final observables is given by $Q(\mathcal{A}_0, \mathcal{A}_t) \equiv 2 ||[\mathcal{A}_0, \mathcal{A}_t]||^2 = 16 e^{-\gamma t} \sin^2 2t$~\cite{Shrimali25}.
	A specialized QSL, $T_Q$, was derived for this specific process in \cite{Shrimali25}:
	\begin{equation}
		T \ge T_Q \equiv \frac{\sqrt{Q(\mathcal{A}_0, \mathcal{A}_\tau)}}{\sqrt{2 \langle ||[\mathcal{A}_0, \mathcal{L}^\dagger(\mathcal{A}_t)]||_{hs} \rangle_\tau}},
	\end{equation}
	where $\langle \cdot \rangle_\tau$ denotes the time average over $[0, \tau]$~\cite{Shrimali25}.
	
	I apply my QSL to this operator evolution. I find that the optimal supremum bound $\tau_{opt}^{sup}$ is attained for $p=2$ and $w=\mathbb{I}_4$, which, as shown in Section \ref{subsubsec:seminorm_basis_independent}, means the bound is representation-basis-independent.
	
	As shown in Figure \ref{fig:qsl_dephasing}, my bound (solid orange) is tight for short evolution times ($t < 0.35$). While the specialized bound $T_Q$ (dashed blue) is slightly tighter for intermediate times, my QSL performs better for longer evolution times ($t > 1.0$) and, importantly, does not decrease at large $t$~\cite{Shrimali25}.

	\subsubsection{The Coherence Generation Process}
	\label{subsubsec:seminorm_coherence_gen}
	
	Second, I analyze a coherence generation process. The system starts in the state $\rho_0 = |0\rangle\langle 0|$ and evolves under $H = \sigma_x$ with dissipation:
	\begin{equation}
		\frac{d\rho_t}{dt} = -i[H, \rho_t] + \frac{\gamma}{2} (\sigma_z \rho_t \sigma_z - \rho_t).
	\end{equation}
	At time $t$, the state is:
	\begin{equation}
		\rho_t = \frac{1}{2} e^{-\gamma t/2} \begin{pmatrix}
			e^{\gamma t/2} + \cos 2t & -i \sin 2t \\
			i \sin 2t & e^{\gamma t/2} - \cos 2t
		\end{pmatrix}.
	\end{equation}
	The coherence with respect to $\mathcal{A} = \sigma_z$ is $C(\rho_t, \mathcal{A}) = e^{-\gamma t/2} |\sin 2t|$~\cite{Shrimali25}. This process also has its own specialized QSL from \cite{Shrimali25}:
	\begin{equation}
		T \ge T_C \equiv \frac{\sqrt{2} |\sqrt{C(\rho_0, \mathcal{A})} - \sqrt{C(\rho_\tau, \mathcal{A})}|}{\langle \sqrt{\sum_k ||\partial_t \sqrt{\rho_t} |k\rangle\langle k||^2} \rangle_\tau}.
	\end{equation}
	
	As before, I find my optimal supremum bound is the basis-independent case $p=2, w=\mathbb{I}_4$. Figure \ref{fig:qsl_coherence_generation} shows that my bound (solid orange) is again tight for short times ($t < 0.35$) and outperforms the specialized bound $T_C$ (dashed blue) for longer times ($t > 1.1$)~\cite{Shrimali25}.
	
	These two examples highlight the robustness and generality of my framework. It can be applied successfully to operator dynamics, not just state evolution, and provides a single, unified bound that remains a sharp and valid constraint across diverse quantum processes~\cite{Shrimali25,Chau2025}.
	
\begin{figure}[H]
	\centering
	\includegraphics[height=6cm]{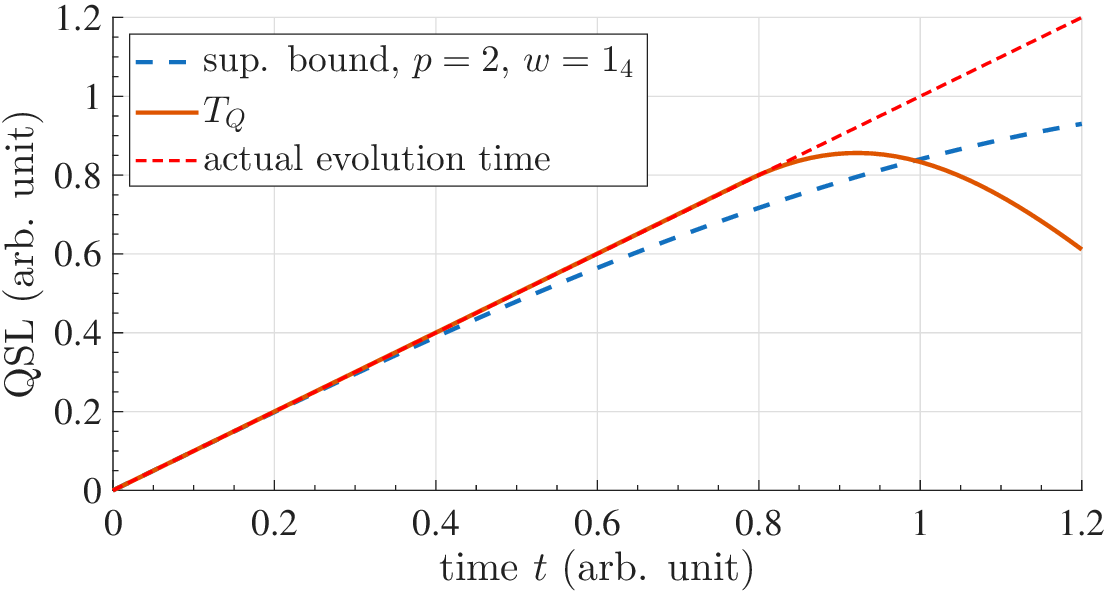}
	\caption{
		Optimized \QSL for a qubit undergoing dephasing. The bound is maximized using $p=2$ and a uniform weight vector ($w=\mathbb{1}_4$). I observe that the bound becomes saturated for $t<0.35$ and yields a tighter limit than Shrimali's for $t>1.0$.
	}
	\label{fig:qsl_dephasing}
\end{figure}

\begin{figure}[H]
	\centering
	\includegraphics[height=6cm]{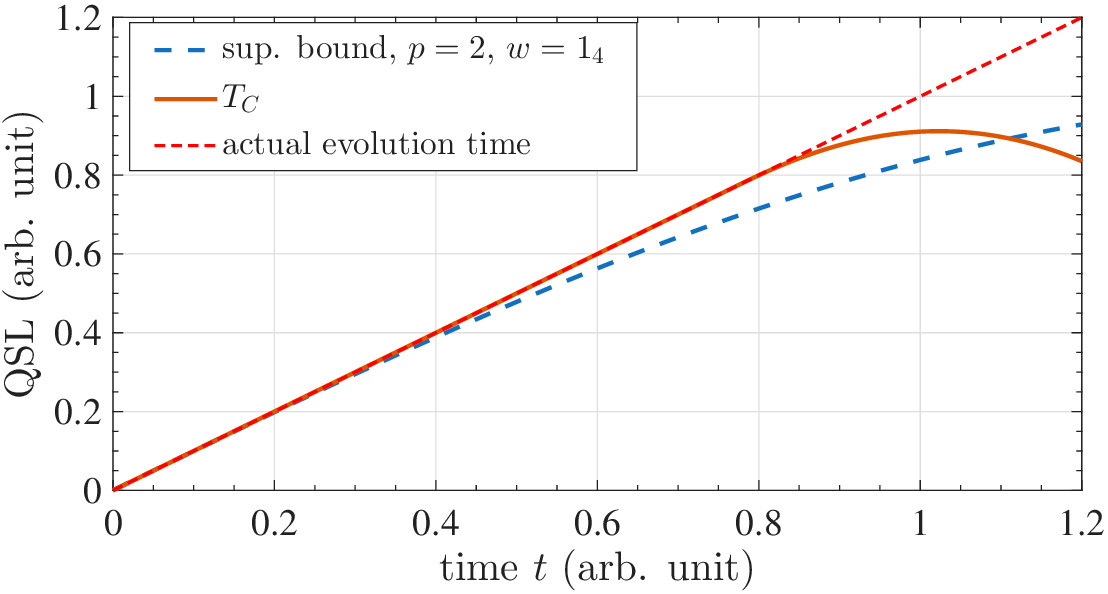}
	\caption{
		The optimized \QSL applied to qubit coherence generation. The optimum configuration corresponds to $p=2$ and uniform weights ($w=\mathbb{1}_4$). The bound saturates in the regime $t<0.35$ and improves upon Shrimali's result for $t>1.1$.
	}
	\label{fig:qsl_coherence_generation}
\end{figure}

	\section{Discussions and Outlook}
	\label{sec:seminorm_conclusion}

	In summary, this chapter has introduced a family of $\ell_{w}^{p,max}$-norm-based QSLs that are both easy to use and powerfully general~\cite{Chau2025}. I have demonstrated their wide applicability, encompassing closed and open quantum systems, time-dependent and independent Hamiltonians, and even extending beyond density matrices to general quantum operators~\cite{Deffner17a,Shrimali25,Herb24}.
	
	The four case studies confirm that my QSLs are sharp enough to provide the tightest known bounds in many scenarios, particularly for relatively short evolution times. In several cases, such as the 4D qudit and spontaneous emission, my optimized integral bound $\tau_{opt}^{int}$ was shown to be exactly tight. Furthermore, the computationally less intensive supremum form $\tau^{sup}$ often proved to be a strong and effective bound.
	
	The key to my framework's success is the novel use of a representation-basis-dependent norm. This feature, which is unique among the QSLs I are aware of, provides the flexibility to tune the bound to the specific dynamics, which is crucial for achieving tightness.
	
	This work also opens several avenues for future research. My numerical examples show that the optimal parameters $(p, w, \mathcal{B})$ do not follow any obvious, simple pattern. A deeper investigation into the structure of this optimization space could yield further insights. Additionally, the observation from my closed-system example,that optimal bases are common and can be found via simple sampling, especially for short times, is a curious phenomenon. While I provided an analytical reason for the simple qubit case, a general explanation for this ``degeneracy" is still lacking and warrants further study.
	
	These findings provide a powerful new tool for benchmarking and understanding the fundamental limits of quantum technologies.

	\chapter{DISCUSSIONS, CONCLUSION, AND OUTLOOK}
	\label{ch:conclusion}
	
	\section{Overview of the Thesis}
	
	This thesis has explored two distinct yet complementary directions in quantum information science. The first part concerns the \emph{fundamental dynamics} of quantum systems, and the second part focuses on the \emph{practical realization} of secure quantum communication protocols.
	
	On the foundational side, I developed a family of quantum speed limits (QSLs) based on representation-dependent weighted $\ell_{p,w}$-seminorms~\cite{Chau2025}. This framework provides a unified and computationally efficient way to characterize the minimal time required for general quantum evolutions, including those governed by Lindbladian dynamics~\cite{Deffner17a}. It reveals that representation dependence—often seen as a limitation—can, in fact, be leveraged to achieve task-specific sharpness and improved interpretability.
	
	On the applied side, I proposed and analyzed fully passive quantum communication protocols, including both measurement-device-independent quantum key distribution (MDI-QKD) and multi-user conference key agreement (CKA)~\cite{Li2024,Li2024CKA}. For the MDI-QKD scheme, a fully passive source architecture was introduced, relying solely on intrinsic optical interference rather than active modulation~\cite{Li2024,Curty2010,Wang2023a,Hu2023,Lu2023}. A complete decoy-state and linear-program analysis was developed for mixed single-photon ensembles, accompanied by a rigorous mapping from mixed to ideal states for secure key-rate evaluation and numerical verification of long-distance performance~\cite{Lo2005,Wang2005,Ma2005,Curty2014}. For the multi-user CKA case, a twin-field–based passive design was constructed, incorporating realistic loss modeling and multi-party decoy analysis~\cite{Li2024CKA,Grasselli2019CKA,Curty2019}. The work further introduced a branch-cutting algorithm that significantly reduced computational effort in parameter optimization, enabling tractable simulation of high-dimensional systems.
	
	Together, these two studies address different layers of the quantum information hierarchy: one examines the physical limits of evolution itself, and the other demonstrates how careful engineering can realize secure quantum communication with minimal device trust.
	
	\section{Discussion and Outlook}
	
	The developments presented in this thesis contribute to a broader understanding of how fundamental principles and practical implementation coexist within quantum information science.
	
	From the theoretical perspective, the $\ell_{p,w}$-based QSL framework introduced here offers a family of easy-to-use, nonunitary-invariant bounds that provide the combined benefits of universality and computational simplicity. As demonstrated across diverse illustrative systems, this framework is powerful enough to provide the tightest known bounds to date in many scenarios, particularly when the evolution time $\tau$ is relatively short. The primary driver of this performance is the strategic use of the representation-basis-dependent norm $||vec_{\mathcal{B}}(\mathcal{L}\rho_t)||_{p,\overline{w}}$ and its time integral, which represents a unique feature compared to the unitarily invariant QSLs currently predominant in the literature. 
	
	Crucially, the framework’s robustness allows for generalization to operator dynamics beyond density matrices, a feature critical for addressing a wide spectrum of quantum phenomena including dephasing and coherence generation. This suggests that the $\ell_{p,w}$ framework is highly suitable for benchmarking performance in quantum control, metrology, and the thermodynamics of open systems.
	
	Future work could be 
		 \textbf{Analytical Optimization}: Investigate the theoretical foundations of $(p,w,B)$ optimization. While numerical results show optimized bases are common for small $\tau$, a sound analytical explanation for this degeneracy is missing. Future research should clarify parameter patterns and develop methods to find the optimal basis $B$ for fixed $p$ and $w$.

	From the applied perspective, this thesis demonstrates that fully passive architectures provide highly secure and practical alternatives to actively modulated quantum communication systems. By simultaneously closing side channels at both the source modulator and detector sides, these protocols offer better implementation security. 
	
	Future applied work should pursue several key directions:
	
	\begin{itemize}
		\item \textbf{Passive MDI-QKD Advancements}: While highly secure, the double-sifting nature of the passive sources currently reduces the key rate by 2--3 orders of magnitude. Future work must tackle this sifting challenge, extend the security proof with a rigorous finite-size analysis, and achieve experimental realization.
		
		\item \textbf{Multi-User Passive CKA Optimization}: For the fully passive CKA protocol, further progress requires leveraging enhanced computational resources to optimize slice numbers and input intensities without relying on branch-cutting approximations. Additionally, implementing higher-order (e.g., three-decoy) state analyses will provide tighter yield bounds and improve performance.
		\item \textbf{Network Scaling and Deployment}: In the broader context, it is critical to analyze the scaling laws between sifting efficiency and an increasing number of network users.
	\end{itemize}
	
	In summary, this thesis contributes to two essential aspects of the field: the fundamental understanding of quantum dynamical limits and the development of secure, implementable communication schemes. While these topics differ in scope, they share a common aim—to advance quantum information science through a balance of theoretical clarity and practical feasibility. It is hoped that the ideas developed here will continue to inspire future studies that connect the abstract and the applied, strengthening the conceptual and technological foundation of emerging quantum networks.

	\appendix
\renewcommand{\thechapter}{\Roman{chapter}}

	\chapter{DECOY-STATE-ANALYSIS -- LINEAR PROGRAMMING CONSTRUCTION}
	\label{sec:mdi_lp}
	
	In passive MDI-QKD, the gain and error–gain satisfy
\begin{align}
	\langle Q\rangle_{S_i^A S_j^B}
	&=\sum_{n,m=0}^{\infty}
	\left\langle P_n^A P_m^B 
	Y_{nm}\right\rangle_{S_i^A S_j^B},
	\label{eq:Q_avg}
	\\
	\langle Q E\rangle_{S_i^A S_j^B}
	&=\sum_{n,m=0}^{\infty}
	\left\langle P_n^A P_m^B e_{nm} Y_{nm}\right\rangle_{S_i^A S_j^B}.
	\label{eq:QE_avg}
\end{align}
	where $S_i^A$ and $S_j^B$ denote Alice’s and Bob’s post-selection regions, $P_n^{A/B}$ are Poisson probabilities, depending on $(\mu_H^{A/B}, \mu_V^{A/B})$, and $Y_{nm}$, $e_{nm}Y_{nm}$ are functions of the observables $(\mu_H^{A}, \mu_V^{A}, \phi_{HV}^{A}, \mu_H^{B}, \mu_V^{B}, \phi_{HV}^{B})$. Writing $(\mu_H^A,\mu_V^A)$ in polar form $(r^A,\theta^A)$ (and similarly for Bob), and setting $\phi_{HV}^A \equiv \phi^A$, I have

\begin{align}
	\left\langle P_n^A P_m^B Y_{nm}\right\rangle_{S_i^A S_j^B} 
	&= \frac{1}{P_{S_i^A S_j^B}^{\mu^A, \mu^B, \phi^A, \phi^B}} \notag \\
	&\quad \times \iiint \!\!\! \iiint_{S_i^A S_j^B} p^A_\mu(r^A, \theta^A) p^B_\mu(r^B, \theta^B) p^A_\phi(\phi^A) p^B_\phi(\phi^B) \notag \\
	&\quad \times P_n^A(r^A, \theta^A) P_m^B(r^B, \theta^B) Y_{nm}(\theta^A, \phi^A, \theta^B, \phi^B) \notag \\
	&\quad \times r^A r^B \, dr^A dr^B d\theta^A d\theta^B d\phi^A d\phi^B,
	\label{eq:I.2}
\end{align}

	and
\begin{align}
	\left\langle P_n^A P_m^B e_{nm} Y_{nm}\right\rangle_{S_i^A S_j^B} 
	&= \frac{1}{P_{S_i^A S_j^B}^{\mu^A, \mu^B, \phi^A, \phi^B}} \notag \\
	&\quad \times \iiint \!\!\! \iiint_{S_i^A S_j^B} p^A_\mu(r^A, \theta^A) p^B_\mu(r^B, \theta^B) p^A_\phi(\phi^A) p^B_\phi(\phi^B) \notag \\
	&\quad \times P_n^A(r^A, \theta^A) P_m^B(r^B, \theta^B) e_{nm} Y_{nm}(\theta^A, \phi^A, \theta^B, \phi^B) \notag \\
	&\quad \times r^A r^B \, dr^A dr^B d\theta^A d\theta^B d\phi^A d\phi^B.
	\label{eq:I.3}
\end{align}
	
	Here $Y_{nm}$ depends on the polarization variables $\theta$ and $\phi$. Since $Y_{nm}$ and $e_{nm}Y_{nm}$ share the same structure, I concentrate on $Y_{nm}$, the error–yield follows analogously. The steps below follow \cite{Wang2023a}.
	
	First integrate out $\phi^A,\phi^B$:
\begin{align}
	\left\langle P_n^A P_m^B Y_{nm}\right\rangle_{S_i^A S_j^B} 
	&= \frac{1}{P_{S_i^A S_j^B}} \iint \!\! \iint_{S_i^A S_j^B} p^A_\mu(r^A, \theta^A) p^B_\mu(r^B, \theta^B) \notag \\
	&\quad \times P_n^A(r^A, \theta^A) P_m^B(r^B, \theta^B) Y_{nm}(\theta^A, \theta^B) \notag \\
	&\quad \times r^A r^B \, dr^A dr^B d\theta^A d\theta^B,
	\label{eq:exp_value_PnPmYnm}
\end{align}
	with
\begin{align}
	Y_{nm}(\theta^A, \theta^B) = &
	\frac{1}{P_{S_i^A S_j^B}^{\phi^A, \phi^B}}
	\iint_{\phi^A, \phi^B} 
	p^A_\phi\!\left(\phi^A\right)
	p^B_\phi\!\left(\phi^B\right)
	Y_{nm}(\theta^A, \phi^A, \theta^B, \phi^B)\,
	d\phi^A d\phi^B.
	\label{eq:Ynm_theta}
\end{align}
	Next, regroup terms involving $r^{A/B}$. From the expression above,

\begin{align}
	\left\langle P_n^A P_m^B Y_{nm}\right\rangle_{S_i^A S_j^B} 
	&= \frac{1}{P_{S_i^A S_j^B}} \iiint \!\!\! \iiint_{S_i^A S_j^B} p^A_\mu(r^A, \theta^A) p^B_\mu(r^B, \theta^B) \notag \\
	&\quad \times P_n^A(r^A, \theta^A) P_m^B(r^B, \theta^B) Y_{nm}(\theta^A, \theta^B) \, r^A r^B \, dr^A dr^B d\theta^A d\theta^B \notag \\
	&= \iint_{\theta^{A/B}} \left( \frac{\int_{r^A(\theta^A)} p^A_\mu(r^A, \theta^A) P_n^A(r^A, \theta^A) \, r^A dr^A }{P_{S_i^A}} \right) \notag \\
	&\quad \times \left( \frac{\iint_{r^B(\theta^B)} p^B_\mu(r^B, \theta^B) P_m^B(r^B, \theta^B) \, r^B dr^B }{P_{S_j^B}} \right) \notag \\
	&\quad \times Y_{nm}(\theta^A, \theta^B) \, d\theta^A d\theta^B \notag \\
	&= \left\langle P_n^A \right\rangle_{S_i^A} \left\langle P_m^B \right\rangle_{S_j^B} \iint_{\theta^{A/B}} \frac{p_{\theta^A, n, S_i^A}(\theta^A)}{\left\langle P_n^A \right\rangle_{S_i^A}} \frac{p_{\theta^B, m, S_j^B}(\theta^B)}{\left\langle P_m^B \right\rangle_{S_j^B}} \notag \\
	&\quad \times Y_{nm}(\theta^A, \theta^B) \, d\theta^A d\theta^B \notag \\
	&= \left\langle P_n^A \right\rangle_{S_i^A} \left\langle P_m^B \right\rangle_{S_j^B}  Y^{\text{mixed}}_{nm, S_i S_j}.
	\label{eq:exp_value_decomposition}
\end{align}

	Here
\begin{equation}
	P_{S_i^A} = \iint_{S_i^A} p^A_\mu(r^A, \theta^A)\, r^A \, dr^A d\theta^A .
	\label{eq:P_SiA}
\end{equation}
	\begin{equation}
		P_{S_i^A} P_{S_J^B} =  P_{S_i^A S_j^B},
	\end{equation}
	and
	\begin{equation}
		p_{\theta^A, n, S_i^A}(\theta^A)
		=
		\frac{
			\int_{r^A(\theta^A)}
			p^A_\mu\!\left(r^A, \theta^A\right)
			P_n^A(r^A, \theta^A)\,
			r^A dr^A }{P_{S_i^A}}.
	\end{equation}
	Following \cite{Wang2023a}, define

\begin{align}
	Y^{\text{mixed}}_{nm, S_i S_j}
	=& \iint_{\theta^{A/B}} 
	\frac{p_{\theta^A, n, S_i^A}(\theta^A)}
	{\left\langle P_n^A \right\rangle_{S_i^A}}
	\frac{p_{\theta^B, m, S_j^B}(\theta^B)}
	{\left\langle P_m^B \right\rangle_{S_j^B}}  
	\quad
	Y_{nm}(\theta^A, \theta^B) \, d\theta^A d\theta^B .
	\label{eq:Ymixed_def}
\end{align}
	
	Thus,
	\begin{equation}
		\begin{aligned}
			\left\langle P_n^A P_m^B Y_{nm}\right\rangle_{S_i^A S_j^B}
			& =
			\left\langle P_n^A \right\rangle_{S_i^A}
			\left\langle P_n^B \right\rangle_{S_j^B}
			Y^{mixed}_{nm, S_i S_j}.
		\end{aligned}
	\end{equation}
	At this point linear programming is still inapplicable, since $Y^{mixed}_{nm,S_i S_j}$ varies with the decoy regions $S_{i/j}$, i.e., it is not a constant across settings.
	
	To remedy this, I construct decoy settings so that $Y^{mixed}_{nm}(\theta)$ no longer depends on the chosen regions—hence it is identical across decoys and standard LP bounds apply \cite{Wang2023a}. The idea is to introduce an additional post-selection that reshapes the distribution to $p^{\prime}_{\mu} = C e^{r (\sin \theta + \cos\theta)}$ for both Alice and Bob \cite{Wang2023a}, which cancels the exponential from the Poisson terms. Consequently \cite{Wang2023a},
	\begin{equation}
		Y^{mixed}_{nm} = 
		\frac{
			\iint_{\theta^{A/B}} 
			(\sin \theta^A + \cos \theta^A )^n (\sin \theta^B + \cos \theta^B )^m  Y_{nm}(\theta^A, \theta^B) \, d\theta^A d\theta^B 
		}{
			\iint_{\theta^{A/B}} 
			(\sin \theta^A + \cos \theta^A )^n (\sin \theta^B + \cos \theta^B )^m   \, d\theta^A d\theta^B 
		}.
	\end{equation}
	In parallel, the decoy regions are restricted to sector-shaped domains (see Fig.~\ref{fig:regions}). 
	
	Under this construction, the LP constraints become
	\begin{equation}
		\langle Q\rangle_{S_i^A S_j^B}
		= 
		\left\langle P_n^A \right\rangle_{S_i^A}
		\left\langle P_n^B \right\rangle_{S_j^B}
		Y^{mixed}_{nm},
	\end{equation}
	and for the error–yield,
	\begin{equation}
		\langle QE\rangle_{S_i^A S_j^B}
		= 
		\left\langle P_n^A \right\rangle_{S_i^A}
		\left\langle P_n^B \right\rangle_{S_j^B}
		e_{nm}Y^{mixed}_{nm}.
	\end{equation}
	
	\chapter{DECOY-STATE-ANALYSIS -- YIELD BOUNDS}
	\label{sec:mdi_yield}
	
	I claim that the lower bound obtained for the ‘mixed’ single-photon yield is in fact a lower bound for the ‘perfectly encoded’ yield, i.e.,
	\begin{equation}
		Y^{\text{mixed, Lower}}_{11} \le Y^{\text{mixed}}_{11} = Y^{\text{perfect}}_{11}.
	\end{equation}
	
	In passive MDI-QKD, both Alice and Bob must obtain either an $H$ or a $V$ state, I parameterize by $\theta_{A/B}$ and $\phi_{A/B}$ on the Bloch sphere \cite{Wang2023a}. The mixed state can be written as
\begin{align}
	\rho^{\text{mixed}} = &
	\left| H^{\prime} H^{\prime} \right\rangle \left\langle H^{\prime} H^{\prime} \right| +
	\left| H^{\prime}V^{\prime} \right\rangle \left\langle H^{\prime}V^{\prime} \right| \notag \\
	&+ \left| V^{\prime} H^{\prime} \right\rangle \left\langle V^{\prime} H^{\prime} \right| +
	\left| V^{\prime}V^{\prime} \right\rangle \left\langle V^{\prime}V^{\prime} \right|.
	\label{eq:mixed_rho}
\end{align}
	
	Without loss of generality set $\phi_A=0$ and let $\phi_B=\phi$, the global phase is irrelevant and I align it to Alice so $\phi_A=0$.
	
	Starting with the first term (the rest follow the same pattern),
\begin{align}
	\left| H^{\prime} H^{\prime} \right\rangle
	&= \left| H^{\prime}\right\rangle_A \otimes \left| H^{\prime} \right\rangle_B \notag \\
	&= \Bigl( \cos\tfrac{\theta_A}{2}\left| H \right\rangle_A
	+ \sin\tfrac{\theta_A}{2}\left| V \right\rangle_A \Bigr)
	\otimes
	\Bigl( \cos\tfrac{\theta_B}{2}\left| H \right\rangle_B
	+ e^{i\phi}\sin\tfrac{\theta_B}{2}\left| V \right\rangle_B \Bigr).
	\label{eq:hh_prime_expansion}
\end{align}
	Hence the corresponding projector is
\begin{align}
	\left| H^{\prime} H^{\prime} \right\rangle \left\langle H^{\prime} H^{\prime} \right| 
	&=
	\begin{pmatrix}
		\cos\!\frac{\theta_A}{2}\cos\!\frac{\theta_B}{2} & 
		e^{-i \phi} \cos\!\frac{\theta_A}{2}\sin\!\frac{\theta_B}{2} & 
		\sin\!\frac{\theta_A}{2}\cos\!\frac{\theta_B}{2} & 
		e^{-i \phi} \sin\!\frac{\theta_A}{2}\sin\!\frac{\theta_B}{2}
	\end{pmatrix} \notag \\
	&\quad \otimes
	\begin{pmatrix}
		\cos\!\frac{\theta_A}{2}\cos\!\frac{\theta_B}{2} \\ 
		e^{i \phi} \cos\!\frac{\theta_A}{2}\sin\!\frac{\theta_B}{2} \\ 
		\sin\!\frac{\theta_A}{2}\cos\!\frac{\theta_B}{2} \\ 
		e^{i \phi} \sin\!\frac{\theta_A}{2}\sin\!\frac{\theta_B}{2}
	\end{pmatrix}.
	\label{eq:HH_outer_product}
\end{align}
	
	Likewise I obtain
	\begin{equation}
		\begin{aligned}
			\left| H^{\prime} V^{\prime} \right\rangle \left\langle H^{\prime} V^{\prime} \right| 
		\end{aligned}
	\end{equation}
	\begin{equation}
		\begin{aligned}
			\left| V^{\prime} H^{\prime} \right\rangle \left\langle V^{\prime} H^{\prime} \right| 
		\end{aligned}
	\end{equation}
	\begin{equation}
		\begin{aligned}
			\left| V^{\prime} V^{\prime} \right\rangle \left\langle V^{\prime} V^{\prime} \right| .
		\end{aligned}
	\end{equation}
	Summing these projectors, one sees that all diagonal entries equal $1$ and all off-diagonal entries are $0$, i.e., the result is the $4\times4$ identity:
\begin{align}
	\rho^{\prime} & = 
	\left| H^{\prime} H^{\prime} \right\rangle \left\langle H^{\prime} H^{\prime} \right| +
	\left| H^{\prime}V^{\prime} \right\rangle \left\langle H^{\prime}V^{\prime} \right| \notag \\
	& \quad+
	\left| V^{\prime} H^{\prime} \right\rangle \left\langle V^{\prime} H^{\prime} \right| +
	\left| V^{\prime}V^{\prime} \right\rangle \left\langle V^{\prime}V^{\prime} \right| \notag \\
	& = \begin{pmatrix}
		1 &  &  &  \\
		& 1 &  &  \\
		&  & 1 &  \\
		&  &  & 1 \\
	\end{pmatrix}= I .
	\label{eq:rho_prime_identity}
\end{align}
	Here I have used $\sin^2x+\cos^2x=1$. Thus, the ‘mixed’ state is maximally mixed ($I$) and therefore equivalent, for my purpose, to the ‘perfect encoding’ mixture
	\begin{equation}
		\left| H H \right\rangle \left\langle H H \right| +
		\left| HV \right\rangle \left\langle HV \right|+
		\left| V H \right\rangle \left\langle V H \right| +
		\left| VV \right\rangle \left\langle VV \right| = I .
	\end{equation}
	Consequently, the mixed and perfect ensembles share the same lower bound. I may therefore take the lower bound on $Y^{\text{mixed}}_{nm}$, denoted $Y^{\text{mixed, Lower}}_{nm}$, as a lower bound on the perfectly encoded case:
	\begin{equation}
		Y^{\text{mixed, Lower}}_{11} \le Y^{\text{mixed}}_{11} = Y^{\text{perfect}}_{11}.
	\end{equation}

	\chapter{DECOY-STATE-ANALYSIS -- ERROR YIELD BOUNDS}
	\label{sec:mdi_error}
	
	I focus on the $HH$ case (both Alice and Bob emit an $H$ state) \cite{Wang2023a}. I aim to evaluate
\begin{align}
	\rho^{HH} & = 
	\left|H_1 H_1^{\perp}\right\rangle \left\langle H_1 H_1^{\perp}\right| +
	\left|H_1 H_2^{\perp}\right\rangle \left\langle H_1 H_2^{\perp}\right| \notag \\
	& \quad +
	\left|H_2 H_1^{\perp}\right\rangle \left\langle H_2 H_1^{\perp}\right| +
	\left|H_2 H_2^{\perp}\right\rangle \left\langle H_2 H_2^{\perp}\right|.
	\label{eq:rho_HH}
\end{align}
	Here $\left| H_1 H_1^{\perp} \right\rangle=\left| H_1 \right\rangle \otimes \left| H_1^{\perp} \right\rangle$, with $\left| H_1 \right\rangle$ a polarized $H$ state at $(\theta_A,\phi_A)$ on the Bloch sphere, and its paired state $\left| H_1^{\perp} \right\rangle$ located at $(\theta_A,\phi_A+\pi)$. Likewise, $\left| H_2 \right\rangle$ and $\left| H_2^{\perp} \right\rangle$ denote Bob’s counterparts at $(\theta_B,\phi_B)$ and $(\theta_B,\phi_B+\pi)$, respectively \cite{Wang2023a}. I shift phases so that Alice has zero phase and Bob carries a relative phase $\phi$.
	
	Each of the four single-qubit states can be expanded as
	\begin{equation}
		\left| H_1 \right\rangle =  
		\cos\!\left(\frac{\theta_A}{2}\right)\left| H \right\rangle_A 
		+ 
		\sin\!\left(\frac{\theta_A}{2}\right)\left| V \right\rangle_A,
	\end{equation}
	with analogous expressions for the remaining three states.
	
	I then evaluate the four projectors 
	$\left| H_1 H_1^{\perp} \right\rangle \!\left\langle H_1 H_1^{\perp} \right|$,
	$\left| H_1 H_2^{\perp} \right\rangle \!\left\langle H_1 H_2^{\perp} \right|$,
	$\left| H_2 H_1^{\perp} \right\rangle \!\left\langle H_2 H_1^{\perp} \right|$, and
	$\left| H_2 H_2^{\perp} \right\rangle \!\left\langle H_2 H_2^{\perp} \right|$,
	each yielding a $4\times4$ matrix. Summing them gives
	\begin{equation}
		\rho = 
		\begin{pmatrix}
			\cos^2\!\frac{\theta_A}{2}\cos^2\!\frac{\theta_B}{2} &  &  &  \\
			& \cos^2\!\frac{\theta_A}{2}\sin^2\!\frac{\theta_B}{2} &  &  \\
			&  & \sin^2\!\frac{\theta_A}{2}\cos^2\!\frac{\theta_B}{2} &  \\
			&  &  & \sin^2\!\frac{\theta_A}{2}\sin^2\!\frac{\theta_B}{2} \\
		\end{pmatrix}.
	\end{equation}
	
	Equivalently, I can decompose $\rho$ into a combination of basis projectors:
\begin{align}
	\rho & = 
	\big(\cos^2\!\tfrac{\theta_A}{2}-\sin^2\!\tfrac{\theta_A}{2}\big)\big(\cos^2\!\tfrac{\theta_B}{2}\sin^2\!\tfrac{\theta_B}{2}\big)
	\begin{pmatrix}
		1 &  &  &  \\
		& 0 &  &  \\
		&  & 0 &  \\
		&  &  & 0 \\
	\end{pmatrix} \notag \\
	&\quad + 
	\big(\cos^2\!\tfrac{\theta_A}{2}-\sin^2\!\tfrac{\theta_A}{2}\big)\sin^2\!\tfrac{\theta_B}{2}
	\begin{pmatrix}
		1 &  &  &  \\
		& 1 &  &  \\
		&  & 0 &  \\
		&  &  & 0 \\
	\end{pmatrix} \notag \\
	&\quad+
	\big(\cos^2\!\tfrac{\theta_B}{2}-\sin^2\!\tfrac{\theta_B}{2}\big)\sin^2\!\tfrac{\theta_A}{2}
	\begin{pmatrix}
		1 &  &  &  \\
		& 0 &  &  \\
		&  & 1 &  \\
		&  &  & 0 \\
	\end{pmatrix} \notag \\
	&\quad +
	\sin^2\!\tfrac{\theta_A}{2}\sin^2\!\tfrac{\theta_B}{2}
	\begin{pmatrix}
		1 &  &  &  \\
		& 1 &  &  \\
		&  & 1 &  \\
		&  &  & 1 \\
	\end{pmatrix}.
	\label{eq.big}
\end{align}
	which corresponds to
\begin{align}
	\rho & = 
	\big(\cos^2\!\tfrac{\theta_A}{2}-\sin^2\!\tfrac{\theta_A}{2}\big)\big(\cos^2\!\tfrac{\theta_B}{2}\sin^2\!\tfrac{\theta_B}{2}\big)
	\left| HH \right\rangle \!\left\langle HH \right| \notag \\
	&\quad+
	\big(\cos^2\!\tfrac{\theta_A}{2}-\sin^2\!\tfrac{\theta_A}{2}\big)\sin^2\!\tfrac{\theta_B}{2} 
	\big( \left| H \right\rangle \!\left\langle H \right| \big)_A \otimes I_B \notag \\
	&\quad+
	\big(\cos^2\!\tfrac{\theta_B}{2}-\sin^2\!\tfrac{\theta_B}{2}\big)\sin^2\!\tfrac{\theta_A}{2}
	I_A \otimes \big( \left| H \right\rangle \!\left\langle H \right| \big)_B \notag \\
	&\quad+
	\sin^2\!\tfrac{\theta_A}{2}\sin^2\!\tfrac{\theta_B}{2}\,
	I_A \otimes I_B .
	\label{eq.big}
\end{align}
	
	Observe that the last three components each yield a QBER of $50\%$, whereas the first component corresponds to the perfectly encoded case, which necessarily produces a lower error rate than the mixed contribution \cite{Wang2023a}.
	
	Specializing to the $H$ polarization (for both users) and integrating over all pairs of polarized states \cite{Wang2023a}—each with a density matrix of the form in Eq.~\ref{eq.big}—the $HH$ ensemble is a combination of the mixture (the latter three terms in Eq.~\ref{eq.big}) and the perfect-encoding term (the first term of Eq.~\ref{eq.big}). Since the mixture terms contribute $50\%$ QBER, I conclude that
	\begin{equation}
		e_{11}Y_{11}^{\text{mixed}} \ge e_{11}Y_{11}^{\text{perfect}},
	\end{equation}
	and the same reasoning applies for $VV$, $HV$, and $VH$. An analogous statement holds in the $X$ basis for $++$, $--$, $+-$, and $-+$.
	
	Therefore, the mixed-state upper bound obtained via linear programming, $e_{nm}Y_{nm}^{\text{mixed, Upper}}$, indeed upper-bounds the perfect-encoding quantity $eY^{\text{perfect}}$:
	\begin{equation}
		e_{11}Y_{11}^{\text{mixed, Upper}} \ge e_{11}Y_{11}^{\text{mixed}}  \ge   e_{11}Y_{11}^{\text{perfect}}.
	\end{equation}

	\chapter{CHANNEL MODEL}
	\label{sec:mdi_channel}
	
	The optical signals emitted from Alice’s fully passive source are characterized by four degrees of freedom (DOFs): $(\mu_A, \theta_{HVA}, \phi_{HVA}, \phi_{\text{global},A})$. Bob’s source possesses an analogous set of parameters. As the signals propagate through the quantum channel, they experience slight polarization rotations. The Bloch-sphere representation of Alice’s initial polarization state is given by \cite{Wang2023a}
	\begin{equation}
		\Vec{s}_A = (\sin \theta_{HVA} \cos \phi_{HVA}, \sin \theta_{HVA} \sin \phi_{HVA}, \cos \theta_{HVA}).
	\end{equation}
	
	The channel-induced polarization rotation can be described using the Rodrigues rotation formula \cite{Friedberg2022}:
	\begin{equation}
		\Vec{s'}_A = \cos \alpha \, \Vec{s} + \sin \alpha \, (\Vec{r} \times \Vec{s}) + (\Vec{s} \cdot \Vec{r}) (1 - \cos \alpha)\, \Vec{r},
	\end{equation}
	where $\alpha$ is the rotation angle and $\Vec{r}$ is the unit vector defining the rotation axis. After this transformation, the rotated state can be re-expressed in Bloch-sphere coordinates as $\{\mu'_A, \theta'_{HVA}, \phi'_{HVA}, \phi'_{GA}\}$, with primes indicating post-rotation quantities. Although the global phase $\phi'_{GA}$ is also affected, it has no observable impact since it will ultimately be averaged out in the integration.
	
	The two relevant post-rotation parameters for Alice, $(\mu'_A, \theta'_{HVA})$, can be mapped back to the H- and V-mode intensities, $(\mu'_{HA}, \mu'_{VA})$. Taking into account both channel loss and detector efficiency, the intensity at the detectors becomes
	\begin{equation*}
		\mu'_{HA} \rightarrow \mu'_{HA} \eta = \mu'_{HA} \eta_L \eta_d = \mu'_{HA} 10^{-\alpha L/10} \eta_d,
	\end{equation*}
	where $\alpha$ is the channel loss coefficient (typically $\approx 0.2~\mathrm{dB/km}$), $\eta_L$ is the transmission factor, and $\eta_d$ is the detector efficiency. The same scaling applies to all other intensities.
	
	After propagation, the rotated signals reach a 50:50 beam splitter (BS), where interference occurs. The interference can be analyzed independently for the H and V polarization components, each following the standard interference relation:
\begin{align}
	&\left|\sqrt{\mu_{1}} e^{i \phi_{1}}\right\rangle_{a}\left|\sqrt{\mu_{2}} e^{i \phi_{2}}\right\rangle_{b} \rightarrow \notag \\
	&\left|\sqrt{\mu_{1}/2} e^{i \phi_{1}} + i \sqrt{\mu_{2}/2} e^{i \phi_{2}}\right\rangle_{c}
	\left|i \sqrt{\mu_{1}/2} e^{i \phi_{1}} + \sqrt{\mu_{2}/2} e^{i \phi_{2}}\right\rangle_{d}.
	\label{eq:beamsplitter_transform}
\end{align}
	
	From this, the mean output intensity at port $c$ is
	\begin{equation}
		\mu_c = \frac{\mu_1}{2} + \frac{\mu_2}{2} - \sqrt{\mu_1 \mu_2}\sin(\phi),
	\end{equation}
	where $\phi$ denotes the phase difference between the two input fields. Similarly, the output at port $d$ is
	\begin{equation}
		\mu_d = \frac{\mu_1}{2} + \frac{\mu_2}{2} + \sqrt{\mu_1 \mu_2}\sin(\phi).
	\end{equation}
	
	For the H polarization, the relevant intensities are $\mu'_{HA}$ and $\mu'_{HB}$, while for the V polarization they are $\mu'_{VA}$ and $\mu'_{VB}$. The corresponding phase differences are defined as
	
		\begin{equation}
		\phi'_R = \phi'_{HA} - \phi'_{HB} \quad (\text{for the H component}),
		\end{equation}
		\begin{equation}
		\phi'_R + \phi'_{HVA} - \phi'_{HVB} \quad (\text{for the V component}).
	\end{equation}
	
	Using the above relations, one can compute the photon intensities arriving at detectors 3H and 4H (from the H-channel interference) and at 3V and 4V (from the V-channel interference).
	
	Because the relative phases $\phi'_R$ and $\phi'_R + \phi'_{HVA} - \phi'_{HVB}$ are random, evaluating the average gain requires phase integration from $0$ to $2\pi$, as indicated in Eq.~\ref{eq:exp7d}. Although the integrations for H and V channels are distinct, the V-phase difference itself depends on $\phi'_R$. Therefore, integrating $\phi'_R$ over $[0, 2\pi]$ suffices to obtain the correct average gain.
	
	
	\renewcommand{\thesection}{\thechapter.\arabic{section}}
	\renewcommand{\theequation}{\thechapter.\arabic{equation}}
	\setcounter{section}{0}
	\setcounter{equation}{0}

	\chapter{DECOY-STATE-ANALYSIS -- TWO-DECOY}
	\label{sec:cka_lp}
	
	To determine the upper bounds of the photon-number yields $Y^{Z,j}_{m_A m_B m_C m_D}$, one can perform a linear-program analysis based on a set of relations for the four-user scenario~\cite{Wang2023,Carrara2023}:
	\begin{equation}
		G^j_{S_{ijkl}} = 
		\sum_{m_A, \ldots, m_D=0}^{\infty}
		Y^j_{m_A, \ldots, m_D}
		\left\langle P(m_A, m_B, m_C, m_D) \right\rangle_{S_{ijkl}}.
	\end{equation}
	Here $G^j_{S_{ijkl}}$ is an experimentally accessible quantity derived from the channel model: it equals the conditional probability of observing a single-click detection event during a PE round, given the decoy configuration $S_{ijkl}$, namely $G^j_{S_{ijkl}} = \Pr(\Omega_j|S_{ijkl})$~\cite{Carrara2023}.

	The mean photon-number distribution is defined as
\begin{align}
	\left\langle P^Z(m_A, \ldots, m_D)\right\rangle_{S_{ijkl}}
	&=\frac{1}{P_{S_{ijkl}}}
	\int_{S_{ijkl}}
	P_{\text{Poisson}}(\mu_A,m_A)
	\cdots
	P_{\text{Poisson}}(\mu_D,m_D) \notag \\
	&\quad \times P_{\text{int}}(\mu_A,\ldots,\mu_D)
	\, d\mu_A \cdots d\mu_D.
	\label{eq:PZ_avg_Sijkl}
\end{align}
	where $P_{\text{Poisson}}$ denotes the Poisson distribution and
	\begin{equation}
		P_{S_{ijkl}}=
		\int_{S_{ijkl}} P_{\text{int}}(\mu_A,\ldots,\mu_D)
		\, d\mu_A \cdots d\mu_D,
	\end{equation}
	with the joint intensity distribution given by
\begin{align}
	P_{\text{int}}(\mu_A,\ldots,\mu_D)
	&= \frac{1}{\pi^4 \sqrt{\mu_A(\mu_{\max}-\mu_A)} \sqrt{\mu_B(\mu_{\max}-\mu_B)}} \notag \\
	&\quad \times \frac{1}{\sqrt{\mu_C(\mu_{\max}-\mu_C)} \sqrt{\mu_D(\mu_{\max}-\mu_D)}}.
	\label{eq:Pint_def}
\end{align}
	The upper and lower bounds derived from the linear program can then be applied in Eq.~(B1) to correct for local transmission losses, followed by their use in the computation of the phase-error rate.

	\chapter{ACTIVE CKA KEY RATE}
	\label{sec:cka_active}

	In the active CKA protocol, assuming collective attacks and one-way classical post-processing, one can lower bound the asymptotic key rate by
	
	\begin{align}
		r \ge
		\sum_{j=0}^{M-1}
		\Pr(\Omega_j|\mathrm{KG})
		\Big[
		H(X_0|E)_{\Omega_j}
		- 
		\max_{i\ge1} H(X_0|X_i)_{\Omega_j}
		\Big],
	\end{align}
	where each successful post-selected event $\Omega_j$ corresponds to an independent key segment, and all segments are combined to yield the final secret key. Here, $M$ is the total number of detectors. The quantity $H(X_0|E)$ denotes the entropy of the first user’s ($A_0$’s) outcome conditioned on the eavesdropper’s information, whereas $H(X_0|X_i)$ represents the entropy of $A_0$’s measurement conditioned on another user $A_i$’s result.

	\chapter{SECURITY}
	\label{sec:cka_security}
	
	The imperfect preparation from the two arms of each fully passive source can be represented as a phase modulation within $[-\Delta,\Delta]$, as depicted in Fig.~\ref{fig:regions}. As explained in the main text, I conservatively assign the complete local channel $\mathcal{E}_{A_i}$—including both phase modulators and beam splitters—to Eve’s control. This results in an increased QBER in the key-generation (KG) basis and requires a refined treatment of yield estimation during the parameter-estimation (PE) phase.
	
	For a four-party fully passive CKA system, the yields $Y^{Z,j}_{m_A m_B m_C m_D}$ (with photon numbers $m_A, m_B, m_C, m_D$ and detection event $\Omega_j$) are obtained via decoy-state analysis. To emphasize basis choice, the superscript $Z$ is retained here but omitted elsewhere for simplicity. When including the impact of local loss channels, the yield correction takes the form
\begin{align}
	Y^j_{n_A n_B n_C n_D}
	&= \int_{-\Delta_\phi}^{\Delta_\phi}\!\!\cdots\!\!\int_{-\Delta_\phi}^{\Delta_\phi}
	\sum_{m_A=0}^{n_A}\!\!\sum_{m_B=0}^{n_B}\!\!\sum_{m_C=0}^{n_C}\!\!\sum_{m_D=0}^{n_D} \notag \\
	&\quad \times P_{\phi_{A1},\phi_{A2}}(m_A|n_A) P_{\phi_{B1},\phi_{B2}}(m_B|n_B) \notag \\
	&\quad \times P_{\phi_{C1},\phi_{C2}}(m_C|n_C) P_{\phi_{D1},\phi_{D2}}(m_D|n_D) \notag \\
	&\quad \times Y^j_{m_A m_B m_C m_D} \, d\phi_{A1}d\phi_{A2}\cdots d\phi_{D2}.
	\label{eq:Yj_nAnBnCnD}
\end{align}
	where $P_{\phi_{A1},\phi_{A2}}(m_A|n_A)$ represents the conditional probability that party $A$’s photon number is reduced from $n_A$ to $m_A$ due to local loss~\cite{Wang2023}:
\begin{align}
	P_{\phi_{A1},\phi_{A2}}(m_A|n_A)
	&= \Biggl| \frac{1}{2!} \sqrt{\frac{n_A!}{m_A!(n_A-m_A)!}}
	\Bigl[1+e^{i(\pi/2+\phi_{A2}-\phi_{A1})}\Bigr]^{m_A} \notag \\
	&\quad \times
	\Bigl[1-e^{i(\pi/2+\phi_{A2}-\phi_{A1})}\Bigr]^{(n_A-m_A)} \Biggr|^2.
	\label{eq:Pphi_A}
\end{align}
	The loss-adjusted yields can then be fed into the standard active-CKA phase-error estimation procedure~\cite{Carrara2023}:
	
	\begin{equation}
		\bar{Q}_Z^j =
		\frac{1}{\Pr(\Omega_j|\mathrm{KG})}
		\sum_{v\in\mathcal{V}}
		\left(
		\sum_{n_0+\cdots+n_{N-1}\le\bar{n}}
		\prod_{i=0}^{N-1} c_{i,n_i}^{(v_i)} 
		\sqrt{\bar{Y}^j_{n_0,\ldots,n_{N-1}}}
		+
		\Delta_{v,\bar{n}}
		\right)^2,
	\end{equation}
	where $\Pr(\Omega_j|\mathrm{KG})$ denotes the conditional probability of a single-click event given a KG round, $N$ is the number of users, and $\bar{n}$ is the photon-number cutoff in the decoy analysis. The remaining terms are defined as
	\begin{equation}
		\mathcal{V} = 
		\{v \in \{0,2^N-1\} : |\vec{v}| \bmod 2 = 0 \},
	\end{equation}
	\begin{equation}
		c_{i,n}^{(l)} =
		\begin{cases}
			e^{-\alpha_i^2/2} \dfrac{\alpha_i^n}{\sqrt{n!}}, & n+l \text{ even},\\[1ex]
			0, & n+l \text{ odd},
		\end{cases}
	\end{equation}
	\begin{equation}
		\Delta_{v,\bar{n}} =
		\sum_{n_0+\cdots+n_{N-1}\ge\bar{n}+2}
		\prod_{i=0}^{N-1} c_{i,n_i}^{(v_i)}.
	\end{equation}

	\chapter{BRANCH-CUTTING METHOD}
	\label{sec:cka_branch}
	
	In the context of fully passive CKA, the computational load can be reduced by excluding slice-choice combinations that make negligible or zero contribution to the total key rate. These excluded combinations satisfy at least one of the following constraints:
	
	\begin{itemize}
		\item \textbf{Intra-party phase mismatch too large:}  
		For each user $i$, let $k_{i1}$ and $k_{i2}$ denote the indices of the two slice selections. The condition $|k_{i1}-k_{i2}|\le x$ must hold for all $i$.
		
		\item \textbf{Inter-party phase misalignment excessive:}  
		For any pair of users $(i,j)$,
		\begin{equation}
			\frac{k_{i1}+k_{i2}}{2} - \frac{k_{j1}+k_{j2}}{2} \le y.
		\end{equation}
	\end{itemize}
	
	Here, $x$ and $y$ are chosen as positive integers strictly less than the total number of phase slices $M$. For the numerical simulations presented in this work, I fix $M=8$ and set $x=y=2$.

	\setcounter{section}{0}
	\setcounter{equation}{0}
	\renewcommand{\thesection}{\thechapter.\arabic{section}}
	\renewcommand{\theequation}{\thechapter.\arabic{equation}}

\printbibliography[heading=bibintoc, title=REFERENCES]

@article{trss-qhc8,
	title = {Quantum speed limits for open system dynamics based on a representation-basis-dependent ${\ensuremath{\ell}}_{w}^{p}$-seminorm},
	author = {Chau, H. F. and Li, Jinjie},
	journal = {Phys. Rev. A},
	volume = {113},
	issue = {4},
	pages = {042218},
	numpages = {13},
	year = {2026},
	month = {Apr},
	publisher = {American Physical Society},
	doi = {10.1103/trss-qhc8},
	url = {https://link.aps.org/doi/10.1103/trss-qhc8}
}

@inproceedings{10.1117/12.507486,
	author = {Vittorio Giovannetti and Seth Lloyd and Lorenzo Maccone},
	title = {{The quantum speed limit}},
	volume = {5111},
	booktitle = {Fluctuations and Noise in Photonics and Quantum Optics},
	editor = {Derek Abbott and Jeffrey H. Shapiro and Yoshihisa Yamamoto},
	organization = {International Society for Optics and Photonics},
	publisher = {SPIE},
	pages = {1 -- 6},
	year = {2003},
	doi = {10.1117/12.507486},
	URL = {https://doi.org/10.1117/12.507486}
}

@inbook{Watrous_2018, place={Cambridge}, title={Contents}, booktitle={The Theory of Quantum Information}, publisher={Cambridge University Press}, author={Watrous, John}, year={2018}, pages={v–vi}}

@article{Aaronson2016,
	author    = {Scott Aaronson},
	title     = {The Complexity of Quantum States and Transformations: From Quantum Money to Black Holes},
	journal   = {arXiv:1607.05256},
	year      = {2016}
}

@article{Brunner2014RMP,
	author    = {Nicolas Brunner and Daniel Cavalcanti and Stefano Pironio and Valerio Scarani and Stephanie Wehner},
	title     = {Bell nonlocality},
	journal   = {Rev. Mod. Phys.},
	volume    = {86},
	pages     = {419--478},
	year      = {2014}
}

@misc{LiChau_AQIS2025,
	author       = {Jinjie Li and H. F. Chau},
	title        = {A New Quantum Speed Limit Based on Unitarily Invariant Norms},
	howpublished = {Contributed talk at the 25th Asian Quantum Information Science (AQIS) Conference},
	year         = {2025},
	note         = {Held in 2025},
}

@article{Carrara2023,
	author       = {Carrara, G. and Murta, G. and Grasselli, F.},
	title        = {Overcoming Fundamental Bounds on Quantum Conference Key Agreement},
	journal      = {Physical Review Applied},
	volume       = {19},
	number       = {6},
	pages        = {064017},
	year         = {2023},
	doi          = {10.1103/PhysRevApplied.19.064017}
}

@article{Shannon1948,
	author  = {Shannon, Claude E.},
	title   = {A Mathematical Theory of Communication},
	journal = {Bell System Technical Journal},
	year    = {1948},
	volume  = {27},
	number  = {3},
	pages   = {379--423}
}

@article{Moore1965,
	author  = {Moore, Gordon E.},
	title   = {Cramming More Components onto Integrated Circuits},
	journal = {Electronics},
	year    = {1965},
	volume  = {38},
	number  = {8},
	pages   = {114--117}
}

@article{Dennard1974,
	author  = {Dennard, R. H. and Gaensslen, F. H. and Yu, H.-N. and Rideout, V. L. and Bassous, E. and LeBlanc, A. R.},
	title   = {Design of Ion-Implanted {MOSFET}'s with Very Small Physical Dimensions},
	journal = {IEEE Journal of Solid-State Circuits},
	year    = {1974},
	volume  = {9},
	number  = {5},
	pages   = {256--268}
}

@article{Pop2010,
	author  = {Pop, Eric},
	title   = {Energy Dissipation and Transport in Nanoscale Devices},
	journal = {Nano Research},
	year    = {2010},
	volume  = {3},
	number  = {3},
	pages   = {147--169}
}

@article{Keyes2005,
	author  = {Keyes, Robert W.},
	title   = {Physical Limits of Silicon Transistors and Circuits},
	journal = {Reports on Progress in Physics},
	year    = {2005},
	volume  = {68},
	number  = {12},
	pages   = {2701--2746}
}

@article{Feynman1982,
	author  = {Feynman, Richard P.},
	title   = {Simulating Physics with Computers},
	journal = {International Journal of Theoretical Physics},
	year    = {1982},
	volume  = {21},
	number  = {6--7},
	pages   = {467--488}
}

@book{Manin1980,
	author    = {Manin, Yuri I.},
	title     = {Computable and Noncomputable},
	publisher = {Sovetskoye Radio},
	address   = {Moscow},
	year      = {1980},
	note      = {In Russian}
}

@article{Benioff1980,
	author  = {Benioff, Paul},
	title   = {The Computer as a Physical System: A Microscopic Quantum Mechanical Hamiltonian Model of Computers as Represented by {Turing} Machines},
	journal = {Journal of Statistical Physics},
	year    = {1980},
	volume  = {22},
	number  = {5},
	pages   = {563--591}
}

@article{Horodecki2009,
	author  = {Horodecki, R. and Horodecki, P. and Horodecki, M. and Horodecki, K.},
	title   = {Quantum Entanglement},
	journal = {Reviews of Modern Physics},
	year    = {2009},
	volume  = {81},
	number  = {2},
	pages   = {865--942}
}

@inproceedings{Shor1994,
	author    = {Shor, Peter W.},
	title     = {Algorithms for Quantum Computation: Discrete Logarithms and Factoring},
	booktitle = {Proceedings of the 35th Annual Symposium on Foundations of Computer Science (FOCS)},
	year      = {1994},
	pages     = {124--134},
	publisher = {IEEE}
}

@inproceedings{Grover1996,
	author    = {Grover, Lov K.},
	title     = {A Fast Quantum Mechanical Algorithm for Database Search},
	booktitle = {Proceedings of the 28th Annual ACM Symposium on Theory of Computing (STOC)},
	year      = {1996},
	pages     = {212--219},
	publisher = {ACM}
}

@article{Bennett1993,
	author  = {Bennett, Charles H. and Brassard, Gilles and Cr\'epeau, Claude and Jozsa, Richard and Peres, Asher and Wootters, William K.},
	title   = {Teleporting an Unknown Quantum State via Dual Classical and Einstein–Podolsky–Rosen Channels},
	journal = {Physical Review Letters},
	year    = {1993},
	volume  = {70},
	number  = {13},
	pages   = {1895--1899}
}

@article{Briegel1998,
	author  = {Briegel, H.-J. and D\"ur, W. and Cirac, J. I. and Zoller, P.},
	title   = {Quantum Repeaters: The Role of Imperfect Local Operations in Quantum Communication},
	journal = {Physical Review Letters},
	year    = {1998},
	volume  = {81},
	number  = {26},
	pages   = {5932--5935}
}

@article{Kimble2008,
	author  = {Kimble, H. J.},
	title   = {The Quantum Internet},
	journal = {Nature},
	year    = {2008},
	volume  = {453},
	pages   = {1023--1030}
}

@article{Giovannetti04,
	author    = {V. Giovannetti and S. Lloyd and L. Maccone},
	title     = {Quantum-Enhanced Measurements: Beating the Standard Quantum Limit},
	journal   = {Science},
	volume    = {306},
	issue     = {5700},
	pages     = {1330--1336},
	year      = {2004}
}

@article{Giovannetti2011,
	author  = {Giovannetti, Vittorio and Lloyd, Seth and Maccone, Lorenzo},
	title   = {Advances in Quantum Metrology},
	journal = {Nature Photonics},
	year    = {2011},
	volume  = {5},
	pages   = {222--229}
}

@article{Acin2018,
	author  = {Ac\'in, Antonio and others},
	title   = {The Quantum Technologies Roadmap: A European Perspective},
	journal = {New Journal of Physics},
	year    = {2018},
	volume  = {20},
	number  = {8},
	pages   = {080201}
}

@book{NielsenChuang,
	author    = {Nielsen, Michael A. and Chuang, Isaac L.},
	title     = {Quantum Computation and Quantum Information},
	publisher = {Cambridge University Press},
	edition   = {10th Anniversary Edition},
	year      = {2010}
}

@article{Rivest1978,
	author  = {Rivest, Ronald L. and Shamir, Adi and Adleman, Leonard},
	title   = {A Method for Obtaining Digital Signatures and Public-Key Cryptosystems},
	journal = {Communications of the ACM},
	year    = {1978},
	volume  = {21},
	number  = {2},
	pages   = {120--126}
}

@article{Ekert1991,
	author = {Artur K. Ekert},
	journal = {Physical Review Letters},
	pages = {661-663},
	title = {Quantum cryptography based on Bell’s theorem},
	volume = {67},
	year = {1991}
}

@article{Wootters1982,
	author  = {Wootters, W. K. and Zurek, W. H.},
	title   = {A Single Quantum Cannot be Cloned},
	journal = {Nature},
	year    = {1982},
	volume  = {299},
	pages   = {802--803}
}

@article{Bennett1984,
	title = {Quantum cryptography: Public key distribution and coin tossing},
	journal = {Theoretical Computer Science},
	volume = {560},
	pages = {7-11},
	year = {2014},
	note = {Theoretical Aspects of Quantum Cryptography – celebrating 30 years of BB84},
	issn = {0304-3975},
	doi = {https://doi.org/10.1016/j.tcs.2014.05.025},
	url = {https://www.sciencedirect.com/science/article/pii/S0304397514004241},
	author = {Charles H. Bennett and Gilles Brassard}
}

@article{Gisin2002,
	author  = {Gisin, Nicolas and Ribordy, Gr\'egoire and Tittel, Wolfgang and Zbinden, Hugo},
	title   = {Quantum Cryptography},
	journal = {Reviews of Modern Physics},
	year    = {2002},
	volume  = {74},
	number  = {1},
	pages   = {145--195}
}

@article{Scarani2009,
	author  = {Scarani, Valerio and Bechmann-Pasquinucci, Helle and Cerf, Nicolas J. and Du{\v{s}}ek, Miloslav and L\"utkenhaus, Norbert and Peev, Momtchil},
	title   = {The Security of Practical Quantum Key Distribution},
	journal = {Reviews of Modern Physics},
	year    = {2009},
	volume  = {81},
	number  = {3},
	pages   = {1301--1350}
}

@article{Lo2014,
	author  = {Lo, Hoi-Kwong and Curty, Marcos and Tamaki, Kiyoshi},
	title   = {Secure Quantum Key Distribution},
	journal = {Nature Photonics},
	year    = {2014},
	volume  = {8},
	pages   = {595--604}
}

@article{Diamanti2016,
	author  = {Diamanti, Eleni and Lo, Hoi-Kwong and Qi, Bing and Yuan, Zhiliang},
	title   = {Practical Challenges in Quantum Key Distribution},
	journal = {npj Quantum Information},
	year    = {2016},
	volume  = {2},
	pages   = {16025}
}

@article{Xu2020review,
	author = {Feihu Xu and Xiongfeng Ma and Qiang Zhang and Hoi-Kwong Lo and Jian-Wei Pan},
	journal = {Reviews of Modern Physics},
	pages = {025002},
	title = {Secure quantum key distribution with realistic devices},
	volume = {92},
	year = {2020}
}

@article{Liao2017,
	author  = {Liao, Sheng-Kai and others},
	title   = {Satellite-to-Ground Quantum Key Distribution},
	journal = {Nature},
	year    = {2017},
	volume  = {549},
	pages   = {43--47}
}

@article{Chen2021,
	author  = {Chen, Yu-Ao and others},
	title   = {An Integrated Space-to-Ground Quantum Communication Network over 4,600 km},
	journal = {Nature},
	year    = {2021},
	volume  = {589},
	pages   = {214--219}
}

@article{Lo2005,
	author = {Hoi-Kwong Lo and Xiongfeng Ma and Kai Chen},
	journal = {Physical Review Letters},
	pages = {230504},
	title = {Decoy State Quantum Key Distribution},
	volume = {94},
	year = {2005}
}

@article{Wang2005,
	author = {Xiang-Bin Wang},
	journal = {Physical Review Letters},
	pages = {230503},
	title = {Beating the Photon-Number-Splitting Attack in Practical Quantum Cryptography},
	volume = {94},
	year = {2005}
}

@article{Ma2005,
	author  = {Ma, Xiongfeng and Qi, Bing and Zhao, Yi and Lo, Hoi-Kwong},
	title   = {Practical Decoy State for Quantum Key Distribution},
	journal = {Physical Review A},
	year    = {2005},
	volume  = {72},
	number  = {1},
	pages   = {012326}
}

@article{Lo2012,
	author = {Hoi-Kwong Lo and Marcos Curty and Bing Qi},
	journal = {Physical Review Letters},
	pages = {130503},
	title = {Measurement-Device-Independent Quantum Key Distribution},
	volume = {108},
	year = {2012}
}

@article{Curty2014,
	author = {Marcos Curty and Feihu Xu and Wei Cui and Charles Ci Wen Lim and Kiyoshi Tamaki and Hoi-Kwong Lo},
	journal = {Nature Communications},
	pages = {3732},
	title = {Finite-key analysis for measurement-device-independent quantum key distribution},
	volume = {5},
	year = {2014}
}

@article{Xu2013,
	author = {Feihu Xu and Marcos Curty and Bing Qi and Hoi Kwong Lo},
	journal = {New Journal of Physics},
	title = {Practical aspects of measurement-device-independent quantum key distribution},
	volume = {15},
	pages = {113007},
	year = {2013}
}

@article{Wang2023a,
	author = {Wenyuan Wang and Rong Wang and Chengqiu Hu and Victor Zapatero and Li Qian and Bing Qi and Marcos Curty and Hoi-Kwong Lo},
	journal = {Physical Review Letters},
	pages = {220801},
	title = {Fully Passive Quantum Key Distribution},
	volume = {130},
	year = {2023}
}

@article{Hu2023,
	author = {Chengqiu Hu and Wenyuan Wang and Kai-Sum Chan and Zhenghan Yuan and Hoi-Kwong Lo},
	journal = {Physical Review Letters},
	pages = {110801},
	title = {Proof-of-Principle Demonstration of Fully Passive Quantum Key Distribution},
	volume = {131},
	year = {2023}
}

@article{Lu2023,
	author = {Feng-Yu Lu and others},
	journal = {Physical Review Letters},
	pages = {110802},
	title = {Experimental Demonstration of Fully Passive Quantum Key Distribution},
	volume = {131},
	year = {2023}
}

@article{Mandelstam45,
	author  = {Mandelstam, L. and Tamm, I.},
	title   = {{The uncertainty relation between energy and time in non-relativistic quantum mechanics}},
	journal = {J. Phys. (USSR)},
	volume  = {9},
	pages   = {249--254},
	year    = {1945}
}

@article{Fleming73,
	author  = {Fleming, G. N.},
	title   = {A Unitarity Bound on the Evolution of Nonstationary States},
	journal = {Il Nuovo Cimento A},
	year    = {1973},
	volume  = {16},
	number  = {2},
	pages   = {232--240}
}

@article{Anandan90,
	author  = {J. Anandan and Y. Aharonov},
	title   = {Geometry of quantum evolution},
	journal = {Phys. Rev. Lett.},
	volume  = {65},
	pages   = {1697},
	year    = {1990}
}

@article{Vaidman92,
	author  = {Vaidman, Lev},
	title   = {Minimum Time for the Evolution to an Orthogonal Quantum State},
	journal = {American Journal of Physics},
	year    = {1992},
	volume  = {60},
	number  = {2},
	pages   = {182--183}
}

@article{Uhlmann92,
	title = {An energy dispersion estimate},
	journal = {Physics Letters A},
	volume = {161},
	number = {4},
	pages = {329-331},
	year = {1992},
	issn = {0375-9601},
	doi = {https://doi.org/10.1016/0375-9601(92)90555-Z},
	author = {Armin Uhlmann}
}

@article{Pires2016,
	title = {Generalized Geometric Quantum Speed Limits},
	author = {Pires, Diego Paiva and Cianciaruso, Marco and C\'eleri, Lucas C. and Adesso, Gerardo and Soares-Pinto, Diogo O.},
	journal = {Phys. Rev. X},
	volume = {6},
	issue = {2},
	pages = {021031},
	numpages = {19},
	year = {2016},
	month = {Jun},
	publisher = {American Physical Society},
	doi = {10.1103/PhysRevX.6.021031},
	url = {https://link.aps.org/doi/10.1103/PhysRevX.6.021031}
}

@article{Deffner17a,
	author    = {S. Deffner and S. Campbell},
	title     = {Quantum speed limits: from Heisenberg’s uncertainty principle to optimal quantum control},
	journal   = {J. Phys. A},
	volume    = {50},
	number    = {45},
	pages     = {453001},
	year      = {2017}
}

@article{Giovannetti03,
	author  = {Giovannetti, Vittorio and Lloyd, Seth and Maccone, Lorenzo},
	title   = {Quantum Limits to Dynamical Evolution},
	journal = {Physical Review A},
	year    = {2003},
	volume  = {67},
	number  = {5},
	pages   = {052109}
}

@article{Pang2017,
	author  = {Pang, Shengshi and Jordan, Andrew N.},
	title   = {Optimal Adaptive Control for Quantum Metrology with Time-Dependent Hamiltonians},
	journal = {Nature Communications},
	year    = {2017},
	volume  = {8},
	pages   = {14695}
}

@article{Binder2015,
	title = {Quantum thermodynamics of general quantum processes},
	author = {Binder, Felix and Vinjanampathy, Sai and Modi, Kavan and Goold, John},
	journal = {Phys. Rev. E},
	volume = {91},
	issue = {3},
	pages = {032119},
	numpages = {6},
	year = {2015},
	month = {Mar},
	publisher = {American Physical Society},
	doi = {10.1103/PhysRevE.91.032119},
	url = {https://link.aps.org/doi/10.1103/PhysRevE.91.032119}
}

@article{Shannon1949,
	author  = {Shannon, Claude E.},
	title   = {Communication Theory of Secrecy Systems},
	journal = {Bell System Technical Journal},
	year    = {1949},
	volume  = {28},
	number  = {4},
	pages   = {656--715}
}

@article{Maurer1993,
	author  = {Maurer, Ueli M.},
	title   = {Secret Key Agreement by Public Discussion from Common Information},
	journal = {IEEE Transactions on Information Theory},
	year    = {1993},
	volume  = {39},
	number  = {3},
	pages   = {733--742}
}

@misc{Daemen1999,
	author    = {Daemen, Joan and Rijmen, Vincent},
	title     = {AES Proposal: Rijndael},
	booktitle = {Proceedings of the First Advanced Encryption Standard (AES) Conference},
	year      = {1999},
	address   = {NIST, Gaithersburg, MD}
}

@ARTICLE{Vernam1926,
	author={Vernam, G. S.},
	journal={Transactions of the American Institute of Electrical Engineers}, 
	title={Cipher Printing Telegraph Systems For Secret Wire and Radio Telegraphic Communications}, 
	year={1926},
	volume={XLV},
	number={},
	pages={295-301},
	doi={10.1109/T-AIEE.1926.5061224}}

@article{Townsend1993,
	author  = {Townsend, P. D.},
	title   = {Quantum Cryptography on Multiuser Optical Fibre Networks},
	journal = {Nature},
	year    = {1997},
	volume  = {385},
	pages   = {47--49}
}

@article{Wegman1981,
	author  = {Wegman, Mark N. and Carter, J. Lawrence},
	title   = {New Hash Functions and Their Use in Authentication and Set Equality},
	journal = {Journal of Computer and System Sciences},
	year    = {1981},
	volume  = {22},
	number  = {3},
	pages   = {265--279}
}

@article{Bennett1992,
	author  = {Bennett, Charles H. and Bessette, Fran\c{c}ois and Brassard, Gilles and Salvail, Louis and Smolin, John},
	title   = {Experimental Quantum Cryptography},
	journal = {Journal of Cryptology},
	year    = {1992},
	volume  = {5},
	number  = {1},
	pages   = {3--28}
}

@article{ShorPreskill2000,
	author  = {Shor, Peter W. and Preskill, John},
	title   = {Simple Proof of Security of the BB84 Quantum Key Distribution Protocol},
	journal = {Physical Review Letters},
	year    = {2000},
	volume  = {85},
	number  = {2},
	pages   = {441--444}
}

@inproceedings{Brassard1993,
	author    = {Brassard, Gilles and Salvail, Louis},
	title     = {Secret-Key Reconciliation by Public Discussion},
	booktitle = {Advances in Cryptology—EUROCRYPT ’93},
	year      = {1993},
	series    = {Lecture Notes in Computer Science},
	volume    = {765},
	pages     = {410--423},
	publisher = {Springer}
}

@phdthesis{Gallager1962,
	author  = {Gallager, Robert G.},
	title   = {Low-Density Parity-Check Codes},
	school  = {MIT},
	year    = {1962}
}

@article{murta,
	author = {Murta, Gláucia and Grasselli, Federico and Kampermann, Hermann and Bruß, Dagmar},
	title = {Quantum Conference Key Agreement: A Review},
	journal = {Advanced Quantum Technologies},
	volume = {3},
	number = {11},
	pages = {2000025},
	doi = {https://doi.org/10.1002/qute.202000025},
	url = {https://advanced.onlinelibrary.wiley.com/doi/abs/10.1002/qute.202000025},
	eprint = {https://advanced.onlinelibrary.wiley.com/doi/pdf/10.1002/qute.202000025},
	year = {2020}
}

@article{Bennett1988,
	author  = {Bennett, Charles H. and Brassard, Gilles and Robert, Jean-Marc},
	title   = {Privacy Amplification by Public Discussion},
	journal = {SIAM Journal on Computing},
	year    = {1988},
	volume  = {17},
	number  = {2},
	pages   = {210--229}
}

@misc{Renner2008,
	title={Security of Quantum Key Distribution}, 
	author={Renato Renner},
	year={2006},
	eprint={quant-ph/0512258},
	archivePrefix={arXiv},
	primaryClass={quant-ph},
	url={https://arxiv.org/abs/quant-ph/0512258}, 
}

@book{vonNeumann1932,
	author    = {von Neumann, John},
	title     = {Mathematical Foundations of Quantum Mechanics},
	year      = {1932},
	publisher = {Springer},
	note      = {English translation: Princeton University Press (1955)}
}

@article{Bloch1946,
	author  = {Bloch, Felix},
	title   = {Nuclear Induction},
	journal = {Physical Review},
	year    = {1946},
	volume  = {70},
	number  = {7-8},
	pages   = {460--474}
}

@article{Fano1957,
	author  = {Fano, Ugo},
	title   = {Description of States in Quantum Mechanics by Density Matrix and Operator Techniques},
	journal = {Reviews of Modern Physics},
	year    = {1957},
	volume  = {29},
	number  = {1},
	pages   = {74--93}
}

@article{Brendel1999,
	author  = {Brendel, J. and Gisin, N. and Tittel, W. and Zbinden, H.},
	title   = {Pulsed Energy-Time Entangled Twin-Photon Source for Quantum Communication},
	journal = {Physical Review Letters},
	year    = {1999},
	volume  = {82},
	number  = {12},
	pages   = {2594--2597}
}

@article{Tittel2000,
	author  = {Tittel, W. and Brendel, J. and Zbinden, H. and Gisin, N.},
	title   = {Quantum Cryptography Using Entangled Photons in Energy-Time Bell States},
	journal = {Physical Review Letters},
	year    = {2000},
	volume  = {84},
	number  = {20},
	pages   = {4737--4740}
}

@article{Stucki2002,
	author  = {Stucki, D. and Gisin, N. and Guinnard, O. and Ribordy, G. and Zbinden, H.},
	title   = {Quantum Key Distribution over 67 km with a Plug \& Play System},
	journal = {New Journal of Physics},
	year    = {2002},
	volume  = {4},
	pages   = {41}
}

@article{Weedbrook2012,
	author  = {Weedbrook, Christian and Pirandola, Stefano and others},
	title   = {Gaussian Quantum Information},
	journal = {Reviews of Modern Physics},
	year    = {2012},
	volume  = {84},
	number  = {2},
	pages   = {621--669}
}

@article{DevetakWinter2005,
	author  = {Devetak, Igor and Winter, Andreas},
	title   = {Distillation of Secret Key and Entanglement from Quantum States},
	journal = {Proceedings of the Royal Society A},
	year    = {2005},
	volume  = {461},
	number  = {2053},
	pages   = {207--235}
}

@article{Tomamichel2012,
	author  = {Tomamichel, Marco and Lim, Charles Ci Wen and Gisin, Nicolas and Renner, Renato},
	title   = {Tight Finite-Key Analysis for Quantum Cryptography},
	journal = {Nature Communications},
	year    = {2012},
	volume  = {3},
	pages   = {634}
}

@article{Lim2014,
	author = {Charles Ci Wen Lim and Marcos Curty and Nino Walenta and Feihu Xu and Hugo Zbinden},
	journal = {Physical Review A},
	title = {Concise security bounds for practical decoy-state quantum key distribution},
	volume = {89},
	year = {2014}
}

@article{Dieks1982,
	author  = {Dieks, Dennis},
	title   = {Communication by EPR Devices},
	journal = {Physics Letters A},
	year    = {1982},
	volume  = {92},
	number  = {6},
	pages   = {271--272}
}

@article{Brassard2000,
	author  = {Brassard, Gilles and L{\"u}tkenhaus, Norbert and Mor, Tal and Sanders, Barry C.},
	title   = {Limitations on Practical Quantum Cryptography},
	journal = {Physical Review Letters},
	year    = {2000},
	volume  = {85},
	number  = {6},
	pages   = {1330--1333}
}

@article{Lutkenhaus2000,
	author  = {L{\"u}tkenhaus, Norbert},
	title   = {Security Against Individual Attacks for Realistic Quantum Key Distribution},
	journal = {Physical Review A},
	year    = {2000},
	volume  = {61},
	number  = {5},
	pages   = {052304}
}

@article{Gisin2006,
	author  = {Gisin, Nicolas and Fasel, S. and Kraus, Barbara and Zbinden, Hugo and Ribordy, Gr{\'e}goire},
	title   = {Trojan-Horse Attacks on Quantum-Key-Distribution Systems},
	journal = {Physical Review A},
	year    = {2006},
	volume  = {73},
	number  = {2},
	pages   = {022320}
}

@article{Jain2014,
	doi = {10.1088/1367-2630/16/12/123030},
	url = {https://doi.org/10.1088/1367-2630/16/12/123030},
	year = {2014},
	month = {dec},
	publisher = {IOP Publishing},
	volume = {16},
	number = {12},
	pages = {123030},
	author = {Jain, Nitin and Anisimova, Elena and Khan, Imran and Makarov, Vadim and Marquardt, Christoph and Leuchs, Gerd},
	title = {Trojan-horse attacks threaten the security of practical quantum cryptography},
	journal = {New Journal of Physics}
}

@article{Hwang2003,
	author  = {Hwang, Won-Young},
	title   = {Quantum Key Distribution with High Loss: Toward Global Secure Communication},
	journal = {Physical Review Letters},
	year    = {2003},
	volume  = {91},
	number  = {5},
	pages   = {057901}
}

@article{Braunstein2012,
	author  = {Braunstein, Samuel L. and Pirandola, Stefano},
	title   = {Side-Channel-Free Quantum Key Distribution},
	journal = {Physical Review Letters},
	year    = {2012},
	volume  = {108},
	number  = {13},
	pages   = {130502}
}

@ARTICLE{Makarov2006,
	author={Lucamarini, Marco and Dynes, James F. and Fröhlich, Bernd and Yuan, Zhiliang and Shields, Andrew J.},
	journal={IEEE Journal of Selected Topics in Quantum Electronics}, 
	title={Security Bounds for Efficient Decoy-State Quantum Key Distribution}, 
	year={2015},
	volume={21},
	number={3},
	pages={197-204},
	doi={10.1109/JSTQE.2015.2394774}}

@article{Lydersen2010,
	author  = {Lydersen, Lars and others},
	title   = {Hacking Commercial Quantum Cryptography Systems by Tailored Bright Illumination},
	journal = {Nature Photonics},
	year    = {2010},
	volume  = {4},
	pages   = {686--689}
}

@article{Gerhardt2011,
	author  = {Gerhardt, Ilja and others},
	title   = {Full-Field Implementation of a Perfect Eavesdropper on a Quantum Cryptography System},
	journal = {Nature Communications},
	year    = {2011},
	volume  = {2},
	pages   = {349}
}

@article{Curty2010,
	author  = {Curty, Marcos and Ma, Xiongfeng and Qi, Bing and Moroder, Tobias},
	title   = {Passive Decoy-State Quantum Key Distribution with Practical Light Sources},
	journal = {Physical Review A},
	year    = {2010},
	volume  = {81},
	number  = {2},
	pages   = {022310}
}

@article{Lucamarini2018,
	author  = {Lucamarini, Marco and Yuan, Zhiliang L. and Dynes, James F. and Shields, Andrew J.},
	title   = {Overcoming the Rate--Distance Limit of Quantum Key Distribution Without Quantum Repeaters},
	journal = {Nature},
	year    = {2018},
	volume  = {557},
	pages   = {400--403}
}

@article{Pirandola2017PLOB,
	title   = {Fundamental Limits of Repeaterless Quantum Communications},
	author  = {Pirandola, Stefano and Laurenza, Riccardo and Ottaviani, Carlo and Banchi, Leonardo},
	journal = {Nature Communications},
	volume  = {8},
	pages   = {15043},
	year    = {2017}
}

@article{Grasselli2019CKA,
	title   = {Conference Key Agreement with Single-Photon Interference},
	author  = {Grasselli, Federico and Kampermann, Hermann and Bru{\ss}, Dagmar},
	journal = {New Journal of Physics},
	volume  = {21},
	number  = {12},
	pages   = {123002},
	year    = {2019}
}

@article{Curty2019,
	title   = {Simple Security Proof of Twin-Field Type Quantum Key Distribution Protocols},
	author  = {Curty, Marcos and Azuma, Koji and Lo, Hoi-Kwong},
	journal = {npj Quantum Information},
	volume  = {5},
	pages   = {64},
	year    = {2019}
}

@misc{Ma2008,
	title={Quantum cryptography: theory and practice}, 
	author={Xiongfeng Ma},
	year={2008},
	eprint={0808.1385},
	archivePrefix={arXiv},
	primaryClass={quant-ph},
	url={https://arxiv.org/abs/0808.1385}, 
}

@article{Li2024,
	author  = {Jinjie Li and Wenyuan Wang and Hoi-Kwong Lo},
	title   = {Fully passive measurement-device-independent quantum key distribution},
	journal = {Physical Review Applied},
	volume  = {21},
	pages   = {064056},
	year    = {2024}
}

@article{Wang2023,
	author  = {Wenyuan Wang and Rong Wang and Hoi-Kwong Lo},
	title   = {Fully-Passive Twin-Field Quantum Key Distribution},
	journal = {arXiv preprint arXiv:2304.12062},
	year    = {2023}
}

@article{Hahn2005,
	author  = {T. Hahn},
	title   = {Cuba—a library for multidimensional numerical integration},
	journal = {Computer Physics Communications},
	volume  = {168},
	number  = {2},
	pages   = {78--95},
	year    = {2005}
}

@misc{Friedberg2022,
	author  = {Richard Friedberg},
	title   = {Rodrigues, Olinde: "Des lois g\'{e}om\'{e}triques qui r\'{e}gissent les d\'{e}placements d'un syst\`{e}me solide...", translation and commentary},
	year    = {2022},
	eprint  = {2211.07787},
	archivePrefix = {arXiv},
	primaryClass  = {math.HO}
}

@article{Yoshino2018,
	author  = {Ken-ichiro Yoshino and others},
	title   = {Quantum key distribution with an efficient countermeasure against correlated intensity fluctuations in optical pulses},
	journal = {npj Quantum Information},
	volume  = {4},
	pages   = {8},
	year    = {2018}
}

@article{Jensen1906,
	author = {Jensen, J. L. W. V.},
	title = {Sur les fonctions convexes et les in\'egalit\'es entre les valeurs moyennes},
	journal = {Acta Mathematica},
	volume = {30},
	pages = {175--193},
	year = {1906}
}

@misc{Li2024CKA,
	title={Fully Passive Quantum Conference Key Agreement}, 
	author={Jinjie Li and Wenyuan Wang and H. F. Chau},
	year={2024},
	eprint={2407.15761},
	archivePrefix={arXiv},
	primaryClass={quant-ph}
}

@article{Lloyd00,
	author  = {Lloyd, Seth},
	title   = {Ultimate physical limits to computation},
	journal = {Nature},
	volume  = {406},
	pages   = {1047--1054},
	year    = {2000}
}

@article{Deffner13,
	author  = {Deffner, Sebastian and Lutz, Eric},
	title   = {Quantum speed limit for non-Markovian dynamics},
	journal = {Physical Review Letters},
	volume  = {111},
	pages   = {010402},
	year    = {2013}
}

@article{Campbell17,
	author  = {Campbell, Steve and Deffner, Sebastian},
	title   = {Trade-off between speed and cost in shortcuts to adiabaticity},
	journal = {Physical Review Letters},
	volume  = {118},
	pages   = {100601},
	year    = {2017}
}

@article{Levitin2009,
	title   = {Fundamental Limit on the Rate of Quantum Dynamics: The Unified Bound},
	author  = {Levitin, Lev B. and Toffoli, Tommaso},
	journal = {Physical Review Letters},
	volume  = {103},
	pages   = {160502},
	year    = {2009}
}

@article{Margolus98,
	author    = {N. Margolus and L. B. Levitin},
	title     = {The maximum speed of dynamical evolution},
	journal   = {Physica D},
	volume    = {120},
	pages     = {188},
	year      = {1998},
	doi       = {10.1016/S0167-2789(98)00054-2}
}

@article{Taddei13,
	author    = {M. M. Taddei and B. M. Escher and L. Davidovich and R. L. de Matos Filho},
	title     = {Quantum Speed Limit for Physical Processes},
	journal   = {Physical Review Letters},
	volume    = {110},
	pages     = {050402},
	year      = {2013},
	doi       = {10.1103/PhysRevLett.110.050402}
}

@article{Braunstein96,
	author  = {Braunstein, S. L. and Caves, C. M. and Milburn, G. J.},
	title   = {{Generalized uncertainty relations: Theory, examples, and Lorentz invariance}},
	journal = {Annals of Physics},
	volume  = {247},
	number  = {1},
	pages   = {135--173},
	year    = {1996},
	doi     = {10.1006/aphy.1996.0040}
}

@article{Caneva09,
	author  = {Caneva, T. and Murphy, M. and Calarco, T. and Fazio, R. and Montangero, S. and Giovannetti, V. and Santoro, G. E.},
	title   = {{Optimal control at the quantum speed limit}},
	journal = {Physical Review Letters},
	volume  = {103},
	number  = {24},
	pages   = {240501},
	year    = {2009},
}

@misc{Chau2025,
	title={Quantum Speed Limits For Open System Dynamics Based On Representation Basis Dependent $l_{w}^{p}$-Seminorm}, 
	author={H. F. Chau and Jinjie Li},
	year={2025},
	eprint={2508.17053},
	archivePrefix={arXiv},
	primaryClass={quant-ph}
}

@article{Boumal14,
	author  = {N. Boumal and B. Mishra and P.-A. Absil and R. Sepulchre},
	title   = {Manopt, a Matlab Toolbox for Optimization on Manifolds},
	journal = {Journal of Machine Learning Research},
	volume  = {15},
	pages   = {1455--1459},
	year    = {2014}
}

@article{CZbound,
	author  = {H. F. Chau and W. Zeng},
	title   = {A unifying quantum speed limit for time-independent Hamiltonian evolution},
	journal = {Journal of Physics A: Mathematical and Theoretical},
	volume  = {57},
	pages   = {235304},
	year    = {2024}
}

@article{Shrimali25,
	author  = {Divyansh Shrimali and Swapnil Bhowmick and Arun Kumar Pati},
	title   = {Quantum speed limit on the production of quantumness of observables},
	journal = {Physical Review A},
	volume  = {111},
	pages   = {022445},
	year    = {2025}
}

@article{Herb24,
	author  = {K. Herb and C. L. Degen},
	title   = {Quantum speed limit in quantum sensing},
	journal = {Physical Review Letters},
	volume  = {133},
	pages   = {210802},
	year    = {2024}
}

@article{Zapatero2023,
	author  = {Zapatero, V{\'i}ctor and Wang, Wenyuan and Curty, Marcos},
	title   = {A fully passive transmitter for decoy-state quantum key distribution},
	journal = {Quantum Science and Technology},
	year    = {2023},
	volume  = {8},
	number  = {2},
	pages   = {025014},
	doi     = {10.1088/2058-9565/acbc46}
}

\end{document}